\newenvironment{algo}[1]{
  \algorithm[ht]
    \caption{#1}
    \DontPrintSemicolon
    \SetAlgoCaptionLayout{left}
    \SetAlgoHangIndent{0pt}
    \SetKwInOut{Input}{input}
    \SetKwInOut{Output}{output} 
}{
  \endalgorithm
}
\newtheorem{theo}{Theorem} 
\newtheorem{lemma}[theo]{Lemma}
\newcommand{\error}{\boldsymbol{e}}
\def\cG{\mathcal{G}}
\def\1{\mathbbm{1}}
\def\eps{\varepsilon}
\newcommand{\F}{\mathbb F}
\newcommand{\G}{\EuScript G}
\newcommand{\C}{\EuScript C}
\newcommand{\eQ}{\EuScript Q}
\newcommand{\Q}{Q}
\newcommand{\f}{\mathbb F}
\newcommand{\x}{\mathbf x}
\newcommand{\bari}{\overline{\imath}}
\newcommand{\barj}{\overline{\jmath}}
\renewcommand{\geq}{\geqslant}
\renewcommand{\leq}{\leqslant}
\newcommand{\Cay}{{\sf Cay}}
\newcommand{\AL}[1]{\textcolor{blue}{#1}}
\begin{document}

\title{Decoding quantum Tanner codes}

\author{Anthony Leverrier\thanks{Anthony Leverrier is with Inria Paris, France (e-mail: anthony.leverrier@inria.fr)},\
 Gilles Z{\'e}mor\thanks{Gilles Z\'emor is with the Institut de Math\'ematiques de Bordeaux UMR 5251,
Universit\'e de Bordeaux, 351 cours de la Lib\'eration - F33405, Talence, France (e-mail: zemor@math.u-bordeaux.fr).}}
\maketitle
\markboth{Transactions on Information Theory,~Vol., No., Month~year}%
{Leverrier, Z{\'e}mor: Decoding quantum Tanner codes}


\begin{abstract}
We introduce sequential and parallel decoders for quantum Tanner codes. When the Tanner code construction is applied to a sufficiently expanding square complex with robust local codes, we obtain a family of asymptotically good quantum low-density parity-check codes. In this case, our decoders provably correct arbitrary errors of weight linear in the code length, respectively in linear or logarithmic time. The same decoders are easily adapted to the expander lifted product codes of Panteleev and Kalachev. 
 Along the way, we exploit recently established bounds on the robustness of
random tensor codes to give a tighter bound on the minimum distance of quantum Tanner codes. 
\end{abstract}

%

\section{Introduction and Overview}

Quantum low-density parity-check (LDPC) codes hold the promise of drastically
reducing the required overhead for fault-tolerant quantum computing compared to
approaches based on the surface code \cite{got14,KP13,FGL18b,TDB21,EB21}.
Rather surprisingly, this advantage was already established for the class of
hypergraph product codes \cite{TZ14} that predates the recent series of
breakthroughs \cite{HHO21,BE21b,PK20} culminating with the
discovery of asymptotically good quantum LDPC codes \cite{PK21,LZ22}, that is, quantum codes of length $n$ with a constant encoding rate and a minimum distance $d$ linear in $n$. These new codes should significantly improve the performance of  quantum computers that can implement long-range gates between arbitrary qubits, for instance ion-based \cite{CK13}, photonic \cite{rud17}, or Rydberg-based \cite{MW21} architectures.
To take full advantage of these capabilities, it is crucial to devise efficient
decoding algorithms so as to keep the errors under control during an execution
of a quantum algorithm. This in particular requires highly parallelisable
decoders that run in logarithmic time, since new errors keep accumulating while the classical decoder tries to identify errors that have occurred earlier. 
We present such an algorithm for the family of quantum Tanner codes \cite{LZ22}. The same decoder also serves as the main subroutine for the decoding of the expander lifted product codes of Panteleev and Kalachev \cite{PK21}.

There are two interesting settings for decoding algorithms, depending on whether 
the errors are arbitrary (or adversarial) or whether they follow a simple stochastic model (such as independent and identically distributed errors or local stochastic errors \cite{got14}). Before the recent breakthroughs yielding codes with a linear minimum distance, the distinction was crucial because experimentally relevant errors have a weight linear in the code length (since each qubit suffers an error with constant probability) and could only be dealt with by making some assumptions about their distribution. This is problematic in the context of fault tolerance because correlations between errors are essentially impossible to track down. 
While several decoders perform reasonably well against random noise, even with a noisy syndrome extraction \cite{PK21b, GGK21, RWB20, QVR21, DLB22, BL22, CMS22}, they cannot handle much more than $\sqrt{n}$ errors in an adversarial setting \cite{LTZ15, EKZ20}.

We will first review the construction of quantum Tanner codes and then present
our logarithmic-time decoding algorithm. It is strongly inspired by a mostly
sequential, linear-time decoder analysed in \cite{LZ22b}, and is an adaptation of the small-set-flip decoder initially designed for hypergraph product codes \cite{LTZ15}, which was also recently applied to other good quantum LDPC codes \cite{GPT22,DHL22,LH22b}.
Along the way we will give a sharper estimate for the minimum distance of
quantum Tanner codes and present a sequential decoder that is significantly
simpler than that of \cite{LZ22b}, both conceptually and technically.

{\bf Quantum Tanner codes}.---
These codes are a generalisation of classical Tanner codes \cite{T81, SS96}.
They are obtained by enforcing local linear constraints (corresponding to the
dual of a small tensor code of constant size) of constant weight on qubits
associated with the 2-faces (squares) of a square complex \cite{DEL22,BE21b}.
The square complex that we use appears in the work of Panteleev and Kalachev~\cite{PK21} and can also be
thought of as a quadripartite version of the left-right Cayley complex of Dinur \textit{et al.}\ \cite{DEL22}. 
It is an incidence structure between a set $V$ of vertices,
two sets of edges $E_A$ and $E_B$, that we will refer to as $A$-edges and
$B$-edges, and a set $Q$ of squares. 
The vertex-set $V$ is partitioned into four subsets $V=V_{00} \cup V_{01} \cup
V_{10} \cup V_{11}$, corresponding to four copies of a fixed group $G$, that is, $V_{ij} = G \times \{i,j\}$. We also have two self-inverse subsets $A=A^{-1}$ and $B=B^{-1}$ of the group $G$ and assume for simplicity that $A$ and $B$ are of the same cardinality $\Delta$. For $i \in \{0,1\}$, two vertices $v=(g,i0)\in V_{i0}$ and
$v'=(g',i1) \in V_{i1}$ are related by an $A$-edge if $g'=ag$ for some $a\in A$.
Similarly, for $j \in \{0,1\}$, vertices $v = (g,0j)$ and $v' = (g',1j)$ are related by a $B$-edge if $g'=gb$ for
some $b\in B$. The sets $E_A$ and $E_B$ make up the set of $A$-edges and
$B$-edges respectively and define two graphs $\G_A = (V, E_A), \G_B = (V,E_B)$,
each of which consists of two
disjoint copies of the double cover of a Cayley graph over the group $G$ (with
generator set $A$ for $\G_A$, and $B$ for $\G_B$).
Next, the set $Q$ of squares is defined as the set of $4$-subsets of vertices of the form $\{(g,00),(ag,01),(agb,11),(gb,10)\}$, with the four vertices belonging to distinct copies of $G$.

If we restrict the vertex set to $V_0 := V_{00} \cup V_{11}$, every square is now
incident to only two vertices: one in $V_{00}$ and one in $V_{11}$. The set of squares can then be
seen as a set of edges on $V_0$, and it therefore defines a bipartite graph that we denote
by $\G_0^{\square}=(V_0,Q)$. Similarly, the restriction to the vertices of $V_1 := V_{01} \cup V_{10}$ defines the
graph $\G_1^{\square}$, which is an exact replica of $\G_0^{\square}$: both graphs are 
defined over two copies of the group $G$, with $g,g,'\in G$ being related by an edge
whenever $g'=agb$ for some $a\in A, b\in B$. 
For any vertex $v$, we denote by $Q(v)$ the $Q$-neighbourhood of $v$
defined as the set of squares incident to $v$. The $Q$-neighbourhood
$Q(v)$ has cardinality $\Delta^2$ and is isomorphic to the product set $A\times
B$: the situation is illustrated on Fig.~\ref{fig:code}.

\begin{figure}[h]
\begin{center}
\resizebox{8.8cm}{6cm}{
\begin{tikzpicture}
\draw (0,0) rectangle (3,3);
\draw[pattern=north west lines,pattern color = blue] (1,0) rectangle (1.5,3);
\draw[pattern=north east lines,pattern color = red] (0,1.5) rectangle (3,2);
\draw[step=0.5cm] (0,0) grid (3,3);

\draw (4,0) rectangle (7,3);
\draw[pattern=north west lines,pattern color = blue] (5,0) rectangle (5.5,3);
\draw[pattern=north east lines,pattern color = red] (5,1.5) rectangle (5.5,2);

\draw[step=0.5cm] (4,0) grid (7,3);

\draw (0,4) rectangle (3,7);
\draw[pattern=north east lines,pattern color = red] (0,5.5) rectangle (3,6);
\draw[pattern=north west lines,pattern color = blue] (1,5.5) rectangle (1.5,6);
\draw[step=0.5cm] (0,4) grid (3,7);

\draw (4,4) rectangle (7,7);
\draw[step=0.5cm] (4,4) grid (7,7);
\draw[pattern=north east lines,pattern color = red] (5,5.5) rectangle (5.5,6);
\draw[pattern=north west lines,pattern color = blue] (5,5.5) rectangle (5.5,6);

\node at (-1.,0.5) {$Q(g,00)$};
\node at (8.,0.5) {$Q(gb,10)$};
\node at (-1.,6.5) {$Q(ag,01)$};
\node at (8.,6.5) {$Q(agb,11)$};
\node at (1.25,-0.4) {$b$};
\node at (5.35,-0.35) {$b$};
\node at (-0.4,1.75) {$a$};
\node at (-0.4,5.75) {$a$};

\end{tikzpicture}
}
\end{center}
\caption{The four $Q$- neighbourhoods $Q(v)$ that contain the square $\{ (g,00), (ag, 01), (agb,11), (gb,10)\}$ depicted in red and blue. The $Q$- neighbourhoods of two vertices connected by an $A$-edge (resp.~a $B$-edge) share a row depicted in red (resp.~a column in blue). The labeling is chosen to ensure that a given square, such as the one in red and blue, is indexed similarly, by $(a,b)$ here, in the four $Q$- neighbourhoods. The $\sigma_x$-type generators are codewords of $C_A\otimes C_B$ in the $Q$- neighbourhoods of $V_{00} \cup V_{11}$; the $\sigma_z$-type generators are codewords of $C_A^\perp \otimes C_B^\perp$ in the $Q$- neighbourhoods of $V_{01} \cup V_{10}$. They automatically commute since their support can only intersect on a shared row or column (as depicted), and the orthogonality of the local codes ensure that they commute on this row or column.
\label{fig:code}}
\end{figure}
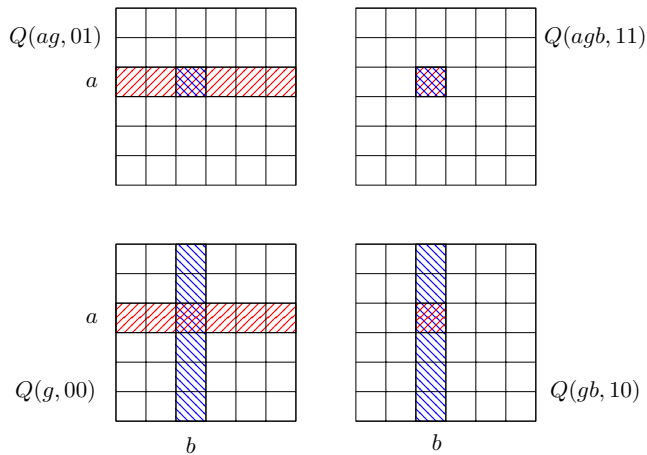

A classical \emph{Tanner code} on a $\Delta$-regular graph $\G=(V,E)$ is the
set of words of $\F_2^E$ such that
on the edge neighbourhood of every vertex $v\in V$, we see a codeword of a small code $C$ of length $\Delta$~\cite{T81}. We denote the resulting code by $\text{Tan}(\G, C) \subset \F_2^E$. Expander codes, obtained by combining good codes $C$ with an expander graph $G$, gave the first explicit family of good classical LDPC codes with a constant encoding rate, a linear minimum distance and an efficient decoder \cite{SS96}.

\emph{Quantum Tanner codes} are quantum CSS codes formed by two classical Tanner codes $\C_0$ and $\C_1$ with support on the set $Q$ of squares of a square complex as above. The CSS construction \cite{ste96b,CS96} requires both codes to satisfy the orthogonality condition $\C_0^\perp \subset \C_1$. To this end, we define local codes on the space $\f_2^{A\times B}$ that we may think of as the space of matrices whose rows (columns) are indexed by $A$ (by $B$).
If $C_A\subset \f_2^A$ and $C_B\subset\f_2^B$ are two linear codes,
we define the {\em tensor} (or product) code $C_A\otimes C_B$ as the space of
matrices $x$ such that for every $b\in B$ the column vector $(x_{ab})_{a\in A}$
belongs to $C_A$ and for every $a\in A$ the row vector $(x_{ab})_{b\in B}$
belongs to $C_B$.
Recalling that the dual $C^\perp$ of a code $C\subset\F_2^\Delta$ is the set of
words orthogonal to all words in $C$, we define $\C_0$ and $\C_1$ to be the classical Tanner codes
$\C_0 := \text{Tan}(\G_0^\square,(C_A \otimes C_B)^\perp)$ and $\C_1 := \text{Tan}(\G_1^\square,(C_A^\perp \otimes C_B^\perp)^\perp)$,
with bits associated to each square of $\Q$ and local constraints enforced at the vertices of $V_0$ and $V_1$, respectively. 
To check the orthogonality condition between the two codes, it is convenient to
look at their generators (or parity-checks). We define a $\sigma_x$-generator
for $\C_0$ (resp.~a $\sigma_z$-generator for $\C_1$) as a vector of $\F_2^\Q$
whose support lies entirely in the $\Q$- neighbourhood $\Q(v)$ of $V_{0}$
(resp.~$V_1$), and which is equal to a codeword of $C_A\otimes C_B$
(resp.~$C_A^\perp\otimes C_B^\perp$) on $\Q(v)$. The Tanner code $\C_0$ (resp.~$\C_1$) is defined as the set of vectors orthogonal to all $\sigma_x$-generators (resp.~$\sigma_z$-generators).
The commutation between both types of generators follows from the fact that if a
$\sigma_x$-generator on $v_0 \in V_0$ and a $\sigma_z$-generator on $v_1 \in V_1$ have
intersecting supports, then $v_0$ and $v_1$ must be neighbours in the left-right Cayley complex and their $Q$- neighbourhoods must intersect on either a column or a row, on which the two generators equal codewords of $C_A$ and $C_A^\perp$, or of $C_B$ and $C_B^\perp$ (see Fig.~\ref{fig:code}). Since $\Delta$ is chosen constant with respect to $n$, we see that the generators of the code have constant weight and that each qubit only appears in a constant number of generators: the resulting quantum Tanner code is therefore a quantum LDPC code by definition. Choosing $C_A$ and $C_B$ of rates $\rho$ and $1-\rho$, so that both $C_0$ and $C_1$ have rate $\rho(1-\rho)$, yields a quantum code with encoding rate $k/n \geq (1-2\rho)^2$, as can be seen by counting the number of generators of the code. Here, $k$ is the number of logical qubits. 

Crucial for the analysis of quantum Tanner codes is the \textit{robustness} of the small tensor codes. Very recently, \cite{DHL22, PK22b} showed that if $C_A$ and $C_B$ are chosen randomly with some fixed rate, then their tensor product is $\kappa$-robust for some constant $\kappa$ independent of the code length $\Delta$. This means that for any \textit{dual tensor codeword} $x \in (C_A^\perp \otimes C_B^\perp)^\perp$, there exist $c \in C_A \otimes \F_2^B, r\in \F_2^A \otimes C_B$ such that $x = c+r$ and
\begin{align}\label{eqn:robust}
|x| \geq \kappa \Delta (\|c\| + \|r\|),
\end{align}
where $|\cdot|$ denotes the Hamming weight and $\|\cdot\|$ counts the number of nonzero columns (resp.~rows) of $c$ (resp.~$r$). Note that the distance of the random codes $C_A$, $C_B$ and their dual will be at least $\delta \Delta$ for some $\delta >0$ with overwhelming probability.
We show in Section~\ref{sec:distance} that these two facts imply a linear lower bound for the distance of the quantum Tanner code.
\begin{theo}\label{thm:dist-main}
For a constant $\Delta$ large enough, the quantum Tanner codes described in the construction above with a Ramanujan left-right Cayley complex have a linear minimum distance:
\[ d_{\min} \geq \frac{ \delta^2\kappa^2}{256 \Delta} n.\]
\end{theo}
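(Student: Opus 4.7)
By the symmetry between the constructions of $\C_0$ and $\C_1$ in the CSS code, it suffices to bound the weight of non-trivial logical $Z$ operators, i.e.\ to show that every $e \in \C_1$ not lying in the span $\langle H_X \rangle$ of the $\sigma_x$-generators satisfies $|e| \geq \frac{\delta^2 \kappa^2}{256 \Delta}\, n$. Replacing $e$ by a minimum-weight representative of its coset $e + \langle H_X \rangle$, I may assume $|e + g| \geq |e|$ for every $g \in \langle H_X \rangle$; in particular, no single $\sigma_x$-generator strictly decreases $|e|$.

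The first ingredient is $\kappa$-robustness applied on the $V_1$ side. Since $e \in \C_1$, each local view $e_v := e|_{Q(v)}$ for $v \in V_1$ is a dual tensor codeword, so Eq.~\eqref{eqn:robust} decomposes $e_v = c_v + r_v$ with $c_v \in C_A \otimes \F_2^B$ supported on a column-set $J_v \subseteq B$, $r_v \in \F_2^A \otimes C_B$ supported on a row-set $I_v \subseteq A$, and $|e_v| \geq \kappa \Delta (|I_v| + |J_v|)$. Using that each square lies in $Q(v)$ for exactly one $v \in V_{01}$ and one $v \in V_{10}$, summing yields
\[
  \sum_{v \in V_1} (|I_v| + |J_v|) \;\leq\; \frac{2 |e|}{\kappa \Delta}.
\]

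The second ingredient transfers this structure to the $V_0$ side: by the geometry of Fig.~\ref{fig:code}, each bad row or column of $Q(v)$, $v \in V_1$, coincides with a row or column of $Q(v_0)$ for a unique $V_0$-neighbour $v_0$. Collecting these yields at every $v_0 \in V_0$ a set of ``tainted'' rows/columns of $Q(v_0)$. Minimality of $e$ in its coset, combined with the minimum distance $\delta \Delta$ of $C_A$ and $C_B$, forces the projection of $e_{v_0}$ onto the untainted rectangle to vanish, for otherwise a suitable $\sigma_x$-generator at $v_0$ would strictly reduce $|e|$. Hence $\supp(e)$ is confined to unions of rows and columns at both $V_1$- and $V_0$-vertices, whose total size is bounded by $O(|e|/(\kappa \Delta))$ on each side.

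Finally, a Ramanujan expander-mixing estimate on the Cayley graphs $\G_A, \G_B$ bounds the number of squares fitting in such rectangular supports. Pairing the $V_1$- and $V_0$-bad rows/columns and exploiting both the spectral gap and the minimum distance $\delta \Delta$ of the local codes produces an inequality of the form $|e| \leq c_1 |e|/\kappa + c_2 |e|^2/(\delta^2\kappa^2 n/\Delta)$ for explicit small constants, which rearranges to $|e| \geq \frac{\delta^2\kappa^2}{256\Delta}\,n$. The principal obstacle is the interlocking of the three ingredients -- local robustness, minimality of $e$, and Ramanujan expansion -- with constants tight enough to recover the full $\kappa^2$ gain over the original distance estimate of \cite{LZ22}; this requires a Cauchy--Schwarz-type pairing between bad rows/columns on the $V_1$- and $V_0$-sides so that robustness is effectively used twice, once directly and once via minimality.
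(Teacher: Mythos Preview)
Your first ingredient is correct and matches the paper's easy direction: robustness gives $|e| \geq \tfrac{\kappa\Delta}{2}\|e\|$ where $\|e\|$ is the total number of bad rows and columns in a minimal $V_1$-decomposition. The hard part is to show $\|e\|$ is large, and that is where your step~3 breaks down.

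You claim that minimality of $e$ in its coset forces $e|_{Q(v_0)}$ to vanish on the ``untainted rectangle'' at each $v_0\in V_0$. This is not true. Take $v_0\in V_{00}$: its tainted rows come from bad rows of the $V_{01}$-neighbours and its tainted columns from bad columns of the $V_{10}$-neighbours. But on an \emph{untainted} row $a$ (shared with $w_a\in V_{01}$, with $a\notin I_{w_a}$), the restriction of $e$ to that row equals the $a$-row of $c_{w_a}$, which is supported on $J_{w_a}$ --- the bad columns of $w_a$, not of $v_0$. Those have no reason to be tainted columns of $v_0$. So $e$ can be nonzero on the untainted rectangle, and there is no reason a $C_A\otimes C_B$ generator at $v_0$ (which has weight $\geq \delta^2\Delta^2$) would decrease $|e|$ there. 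The ``minimality $\Rightarrow$ vanishing'' step is asserted, not proved, and as stated it fails.

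The paper takes a different route to bound $\|e\|$ from below. It never tries to confine $\supp(e)$ to rectangles. Instead it defines \emph{exceptional} vertices $v\in S_{ij}$ (those with $\|c_v\|+\|r_v\|\geq\alpha\Delta$) and shows via expander mixing in $\G_0^\square$ that they are an $O(1/\Delta^2)$ fraction of $S_{ij}$. For the remaining \emph{ordinary} vertices, each nonzero column of $c_v$ carries weight $\geq\delta\Delta/2$ outside $r_v$, and a second expander-mixing argument in the double cover of $\Cay(G,A)$ forces these ordinary columns, when viewed from the neighbouring $V_{ij}$, to concentrate in $O(|S|/(\delta^2\Delta))$ vertices. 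Since a $Q$-neighbourhood holds at most $\Delta$ columns, this concentration produces $\Omega(|S|/\Delta)$ exceptional vertices, contradicting the $O(|S|/\Delta^2)$ bound. Coset minimality is used only to make the representation $(C_0,R_0,C_1,R_1)$ minimal on both the $V_0$ and $V_1$ sides simultaneously (Lemma~\ref{lem:minimal}), not to force $e$ to vanish anywhere.

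Your step~4 is also too schematic to evaluate: you do not say which bipartite graph the mixing lemma is applied to, what the two vertex sets are, or where the $\delta^2$ arises. In the paper the $\delta$ enters because an ordinary column is a $C_A$-codeword with at least $\delta\Delta/2$ coordinates outside the row support, and this drives the edge count in the Cayley-graph mixing lemma.
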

The dependency in $\Delta$ of this bound is  tight since there exist logical errors of weight $\leq n/\Delta$ \cite{LZ22}.
The bound is sharper than that of \cite{LZ22}, and its proof is somewhat
simpler.

We note that quantum Tanner codes have found recent applications outside of coding theory: they can fool optimisation algorithms exploiting the sum-of-squares hierarchy \cite{HL22} and are instrumental in the recent proof of the NLTS theorem in quantum complexity theory \cite{FH14,ABN22}.

{\bf Decoding quantum Tanner codes}.---
We focus here on decoding $\sigma_x$-type errors, which are detected by $\sigma_z$-generators. This is without loss of generality for a CSS code. 
The general strategy outlined in \cite{LZ22b} consists in defining a \textit{mismatch} vector associated with the error $\error \in \F_2^Q$, that summarises how the local decoders associated to the local code around each vertex may disagree about the error, and then try to locally modify this mismatch in order to reduce its weight. 
It is natural to see the error $\error$ as a collection of local views on the $Q$-neighbourhoods of
vertices of $V_1$: abusing notation slightly, we can write $\error = \{e_v\}_{v
\in V_1}$, with local views $e_v$ restricted to $Q(v)$. 
For each vertex $v \in V_1$, one can compute (in parallel if needed) a local
error $\eps_v$ with support on $Q(v)$ of minimal Hamming weight yielding the
same local syndrome as $e_v$. This gives a decomposition of the local views of
the error $e_v = \eps_v + c_v + r_v$, with $c_v \in C_A\otimes \F_2^B, r_v =
\F_2^A \otimes C_B$. Note that the $\kappa$-robustness property will apply to any such $c_v+r_v$. 
The issue is that the local views $\{\eps_v\}_{v\in V_1}$ of the decoder are in general not consistent and do not define a global error candidate. 
We measure this inconsistency by defining the mismatch vector:
\begin{equation}
\label{eq:mismatch}
Z := \sum_{v \in V_1} \eps_v\in \F_2^Q.
\end{equation}
If it is equal to zero, it means that each square/qubit is affected the same
value for the two views it belong to, and the decoder is able to define a global
error. Otherwise, the support of $Z$ corresponds to the set of squares for which
the local views of the decoder disagree. On the other hand, the local views $e_v$ of the error \emph{are} consistent, and satisfy $\sum_{v \in V_1} e_v = 0$ since each
square appears twice in this sum. We can rewrite the mismatch \eqref{eq:mismatch} as 
\[
Z = \sum_{v \in V_1} r_v +c_v = C_0 + R_0 + C_1+R_1,
\]
with $C_i := \sum_{v \in V_{\bari i}} c_v$ and $R_j := \sum_{v \in V_{j\barj}} r_v$, where we denote $\bari := 1-i, \barj := 1-j$.
This convenient representation of $Z$ highlights the symmetry of the local
representations of $C_j+R_i$ on vertices of $V_{ij}$, and one may look for local
modifications of $Z$ by adding to it a dual tensor codeword on the
$Q$-neighbourhood of any of the four types of vertices. 

The main subroutine of the decoding algorithm consists in finding a valid
decomposition $Z =\hat{C}_0 + \hat{R}_0 +\hat{C}_1+\hat{R}_1$. The simplest
approach to this is a sequential decoder that proceeds in a greedy fashion:
simply look for a vertex $v$, and local codewords $c_v, r_v$ such that flipping
$c_v+ r_v$ (replacing $Z$ by $Z+c_v+r_v$) decreases the Hamming weight of $Z$. Our main technical result exploits the expansion properties of the square complex together with the robustness of the local codes to establish the existence of some vertex $v\in V$ and local dual tensor codeword $c_v + r_v$ that decreases the weight of $Z$ almost optimally. 
Instrumental in this analysis is the notion of the {\em norm} of a representation of a
mismatch vector $Z$ which can be expressed in many different
ways as $Z=C_0+R_0+C_1+R_1$. The ``column'' vector $C_i$ is expressed as a sum of
local vectors with disjoint supports, each of which is a codeword of $C_A$
supported by a column common to two $Q$-neighbourhoods of two different types of
vertices ($V_{ii}$ and $V_{\bari i}$). The row vectors $R_j$ admit similar
decompositions into $C_B$ components. We will write $\|C_i\|$ ($\|R_j\|$) to denote the
number of non-zero $C_A$ codewords ($C_B$ codewords) in the decomposition of $C_i$ ($R_j$).
We will say that a decomposition $Z=C_0+R_0+C_1+R_1$ is {\em minimal} if it
minimises its norm, namely the
value $\|C_0\|+\|R_0\| +\|C_1\|+\|R_1\|$. Finally, we shall say that a vertex
$v\in V_{ij}$ is
{\em active} for a decomposition $(C_0,R_0,C_1,R_1)$ if $C_j+R_i$ is non-zero
on $Q(v)$.
\begin{theo}
\label{thm:main0}
Fix $\eps \in (0,1)$. 
If, for every $i,j\in\{0,1\}$, the sets of active vertices $S_{ij}\subset V_{ij}$ for a minimum decomposition of a
mismatch vector $Z$ satisfy
\[
|S_{ij}|\leq \frac{1}{2^{12}}\delta^2\eps^3\kappa |V_{00}|
\]
then there exists some vertex $v$ and some codeword $x_v$ of the dual tensor code such that 
\[ |Z| - |Z + x_v| \geq (1-\eps) |x_v|.\]
\end{theo}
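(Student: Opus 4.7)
For each active $v \in S_{ij}$, set $x_v := (C_j + R_i)|_{Q(v)}$, which is a dual tensor codeword supported on $Q(v)$ by construction. Writing $Z|_{Q(v)} = x_v + y_v$ with $y_v := (C_{\barj} + R_{\bari})|_{Q(v)}$ (the ``cross'' contribution from the opposite pair), the $\F_2$ identity $|Z| - |Z + x_v| = |x_v| - 2\,|x_v \cap y_v|$ reduces the theorem to producing one active $v$ with $|x_v \cap y_v| \leq \tfrac{\eps}{2}|x_v|$. I would argue by contradiction: assume the reverse holds at every active vertex, and sum over all four classes.

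For the resulting lower bound on $\sum_{ij}\sum_{v \in S_{ij}} |x_v|$, I would apply the $\kappa$-robustness~\eqref{eqn:robust} locally: $|x_v| \geq \kappa\Delta(\|x_v\|_{\mathrm{col}} + \|x_v\|_{\mathrm{row}})$, where $\|x_v\|_{\mathrm{col}}$ and $\|x_v\|_{\mathrm{row}}$ count the lit columns (from $C_j$) and rows (from $R_i$) passing through $v$ in the minimum global decomposition. A short swap argument based on global minimality ensures that these local counts, summed over all active vertices (each lit $B$- or $A$-edge being counted at its two endpoints), give a lower bound of order $\kappa\Delta\,(\|C_0\|+\|R_0\|+\|C_1\|+\|R_1\|)$.

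For the upper bound on $\sum_{ij}\sum_{v \in S_{ij}} |x_v \cap y_v|$, I would use expansion: each square counted sits at the crossing of a ``local'' lit edge (incident to $v$, carrying part of $C_j$ or $R_i$) with an ``opposite'' lit edge (not incident to $v$, from $C_{\barj}$ or $R_{\bari}$), the two being joined by a shared row or column of $Q(v)$. Counting such configurations is a bilinear edge-count across the Ramanujan Cayley graphs $\G_A, \G_B$ to which the expander mixing lemma applies; combined with the code-distance bound $\delta\Delta$ on non-zero local codewords (which converts lit-edge counts into square-weight counts), this produces an upper bound of the form $f(\delta,\eps,\kappa)\cdot\Delta\cdot(\|C_0\|+\|R_0\|+\|C_1\|+\|R_1\|)$ with $f$ small whenever the active sets are a small fraction of $|V_{00}|$. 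The hypothesis $|S_{ij}| \leq \frac{\delta^2\eps^3\kappa}{2^{12}}|V_{00}|$ is calibrated so that this is strictly below $\tfrac{\eps}{2}$ times the lower bound, contradicting the summed inequality.

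The main obstacle is the expander-mixing step: identifying the correct bipartite structure on which mixing applies (lit opposite edges versus $V_{ij}$ vertices, viewed inside $\G_A$ or $\G_B$), and tracking the factors of $\delta$, $\kappa$ and $\eps$ with enough precision to reach the stated $1/2^{12}$ constant. A secondary technical point is justifying the transfer of global minimality to the locally robust column/row counts needed to apply~\eqref{eqn:robust}; this should reduce to a short swap argument, but care is needed because a local swap at one endpoint of a $B$- or $A$-edge also modifies the codeword seen at the other endpoint.
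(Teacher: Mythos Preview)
Your reduction to finding an active $v$ with $|x_v\cap y_v|\leq\tfrac{\eps}{2}|x_v|$ is exactly right, and coincides with the first claim of the paper's Lemma~\ref{lem:stalled}. Your lower bound $\sum|x_v|\geq 2\kappa\Delta\|Z\|$ via robustness is also fine; global minimality of $(C_0,R_0,C_1,R_1)$ does transfer to local minimality of each $(c_v,r_v)$ with no complication (a local improvement $c_v\to c_v+t$, $r_v\to r_v+t$ for $t\in C_A\otimes C_B$ changes $\|C_j\|+\|R_i\|$ by exactly $\|c_v+t\|+\|r_v+t\|-\|c_v\|-\|r_v\|$, so the swap worry you raise is not an issue).

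The gap is in the upper bound. Summing your assumed inequality gives a lower bound on $\sum_{i,j,v}|x_v\cap y_v|=2|(C_0+R_0)\cap(C_1+R_1)|+2|(C_0+R_1)\cap(C_1+R_0)|$ of order $\eps\kappa\Delta\|Z\|$, and you now need an upper bound strictly below this. But the intersection $(C_0+R_0)\cap(C_1+R_1)$, viewed as an edge set in $\G_0^\square$, lives inside $E(S_{00},S_{11})$, and the expander mixing lemma there only gives $|E(S_{00},S_{11})|\lesssim 4\Delta\sqrt{|S_{00}||S_{11}|}$, since $\lambda(\G_0^\square)\leq 4\Delta$. With $|S_{ij}|$ comparable to $\|Z\|$ (which is all the hypothesis guarantees), this is of order $4\Delta\|Z\|$, \emph{larger} than $\eps\kappa\Delta\|Z\|$: no contradiction. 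Trying instead to route the count through $\G_A$ or $\G_B$ runs into the problem that the ``opposite'' contribution $y_v=(C_{\barj}+R_{\bari})|_{Q(v)}$ has no clean column/row structure in $Q(v)$: a lit column of $C_{\barj}$ meets $Q(v)$ in a single square, so your ``crossing of a local lit edge with an opposite lit edge'' picture does not reduce to a bilinear edge count on a Ramanujan graph in any direct way. A single averaging pass does not close.

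The paper's argument is structurally different. It does \emph{not} sum over all active vertices; instead it introduces a threshold $\alpha\Delta$ on $\|c_v\|+\|r_v\|$ to split $S_{ij}$ into exceptional and ordinary vertices. Mixing in $\G^\square$ (Lemma~\ref{lem:Se}) shows exceptional vertices are $O(|S|/\Delta^2)$-rare. For ordinary vertices, the paper crucially uses a \emph{different} choice of $x_v$ than yours---a single column codeword---to get the second claim of Lemma~\ref{lem:stalled}, and then mixing in the double cover of $\Cay(G,A)$ (Lemma~\ref{lem:T2}) forces these ordinary columns to cluster into $O(|S|/\Delta)$ target vertices of the neighbouring type. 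The pigeonhole then makes a constant fraction of those targets exceptional, contradicting the $O(1/\Delta^2)$ rarity. Your proposal misses both the exceptional/ordinary dichotomy and the use of single-column flips; restricting to $x_v=(C_j+R_i)|_{Q(v)}$ alone does not appear sufficient.
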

Provided the initial error is not too large, we show that the condition on the number of
active vertices in Theorem~\ref{thm:main0} keeps being satisfied, so that
one can keep on
iterating the procedure, finding a local codeword to flip at each step, and that
it will eventually give a decomposition
$Z=\hat{C}_0+\hat{R}_0+\hat{C}_1+\hat{R}_1$ of the mismatch. This decomposition
gives a correct guess
$\hat{\error} = \sum_{v\in V_{00}}\eps_v + \hat{C}_0+\hat{R}_0$ for the error,
that differs from $\error$ by a sum of generators. This naturally yields a
sequential decoder, with parameter $\eps$, described formally later.

It is easy to see that the pre-processing (defining the mismatch) and the post-processing (computing $\hat{\error}$ from the decomposition of $Z$) can be performed in parallel. To get a fully parallel decoder, one simply needs a parallel procedure for decomposing $Z$. 
This is obtained by looking for vertices that would be candidates for the sequential decoder of parameter 
$1/2$, \textit{i.e.}\ finding some $x_v$ such that $ |Z| - |Z + x_v| \geq
|x_v|/2$.
 In order to apply several sequential iterations in parallel, the decoder needs the different $c_v + r_v$'s that
it will identify to have disjoint supports. This is achieved if one restricts
the set of candidate vertices $v$ to a single set $V_{ij}$: so the decoder
applies four parallel substeps, for each of the four sets of vertices
$V_{00},V_{01}, V_{10}, V_{11}$,
each of which consists of applying simultaneously all
sequential iterations for all the candidate vertices it has identified in the
current $V_{ij}$. We will show that after those four substeps, the value of the
current mismatch $|\hat{Z}|$ has been decreased significantly, provided the
original error vector has sufficiently small weight.
The pseudo-code of all the algorithms is presented in
Section~\ref{sec:parallel}.

\begin{theo}
\label{theo:parallel}
Fix $\eps\in(0,1/2)$, and 
$\mu \in (\eps, 1/6)$. 
If the Hamming weight $|\error|$ is less than
\[
\frac{1}{2^{12}}\min\Big(\frac{\eps^3}{16}, \kappa \Big)\delta^2\kappa^2\frac{n}{\Delta},
\]
then the parallel decoder returns a valid correction in logarithmic time. 
\end{theo}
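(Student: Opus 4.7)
The plan is to reduce the theorem to two invariants maintained by the main loop: (i) at every iteration, the hypothesis of Theorem~\ref{thm:main0} remains satisfied, and (ii) each iteration decreases $|Z|$ by at least a constant multiplicative factor. Once these are established, the termination and correctness arguments reduce to bookkeeping: starting from $|Z|$ linear in $|\error|$, a geometric decrease yields $O(\log n)$ iterations, and the pre- and post-processing are constant-depth parallel because each local subroutine touches only $O(\Delta^2)=O(1)$ bits per vertex.

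First, I would bound the initial mismatch and active counts. Minimality of each local correction $\eps_v$ gives $|c_v+r_v|=|e_v+\eps_v|\leq 2|e_v|$, and summing over $v\in V_1$ yields $|Z|\leq 4|\error|$. Fixing a minimum decomposition $Z=C_0+R_0+C_1+R_1$ and combining $\kappa$-robustness with the minimum distances $\delta\Delta$ of $C_A$, $C_B$ and their duals, each active vertex $v\in S_{ij}$ contributes at least $\Omega(\delta\kappa\Delta)$ to $|Z|$, so $|S_{ij}|=O(|Z|/(\delta\kappa\Delta))$. The threshold on $|\error|$ in the statement is calibrated so that, plugging $|Z|\leq 4|\error|$ and $|V_{00}|=n/\Delta^2$, the inequality $|S_{ij}|\leq\frac{1}{2^{12}}\delta^2\eps^3\kappa|V_{00}|$ holds initially, hence Theorem~\ref{thm:main0} applies.

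For the geometric decrease, I would analyse one full iteration (four substeps). Within a substep on $V_{ij}$, candidate vertices have pairwise disjoint $Q$-neighbourhoods, so the parallel flip decomposes as a disjoint union and decreases $|Z|$ by at least $\tfrac12\sum_{v\in\mathrm{cand}(V_{ij})}|x_v|$. The key step is to upgrade the single-vertex guarantee of Theorem~\ref{thm:main0} to a guarantee that the candidates in $V_{ij}$ cover a constant fraction of the local contribution $|C_j+R_i|$ to $|Z|$. A natural strategy is to iterate Theorem~\ref{thm:main0} on the residual mismatch obtained after removing candidates one by one: monotonicity of active counts under such flips, together with the slack between $\mu$ and $\eps$, ensures that the hypothesis remains valid and a new candidate appears at each step, until essentially no active vertex of $V_{ij}$ is left outside the candidate set. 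Summing across the four substeps and using $|Z|\leq\tfrac12\sum_{ij}|C_j+R_i|$ then delivers the desired $(1-\gamma)$-contraction with $\gamma$ depending only on $\eps$, $\mu$, $\kappa$, $\delta$.

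The main obstacle is precisely this amplification from a single good vertex to a constant fraction of the active weight in each $V_{ij}$. Once this is in hand, maintenance of the Theorem~\ref{thm:main0} hypothesis is automatic from $|S_{ij}|=O(|Z|/(\delta\kappa\Delta))$ and the monotone decrease of $|Z|$; termination with $Z=0$ gives a valid decomposition, and the post-processed $\hat{\error}=\sum_{v\in V_{00}}\eps_v+\hat C_0+\hat R_0$ has the correct syndrome and differs from $\error$ only by a stabilizer, completing the proof.
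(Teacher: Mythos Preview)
Your proposal has a genuine gap in the invariant maintenance, and it is precisely the step you gloss over with ``monotonicity of active counts under such flips, together with the slack between $\mu$ and $\eps$''. You claim $|S_{ij}|=O(|Z|/(\delta\kappa\Delta))$ because ``each active vertex contributes at least $\Omega(\delta\kappa\Delta)$ to $|Z|$''. This is not true: an active vertex $v\in S_{ij}$ has a nonzero local view of $C_j+R_i$, but $Z=C_0+R_0+C_1+R_1$, and on $Q(v)$ the contribution of $C_{\barj}+R_{\bari}$ can cancel that of $C_j+R_i$. Nothing about a \emph{minimum} decomposition prevents this. The paper never bounds $|S_{ij}|$ by $|Z|$; it bounds it by the \emph{norm} $\|\hat Z\|$ via $|S_{ij}|\leq\|C_j\|+\|R_i\|\leq\|\hat Z\|$. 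The crucial point is that $|\hat Z|$ decreases along the algorithm while $\|\hat Z\|$ can \emph{increase}; the paper controls this by showing that each flip $x_i$ adds at most $\|x_i\|\leq|x_i|/(\kappa\Delta)$ to the norm and that $\sum_i|x_i|\leq 2|Z_{\mathrm{init}}|$, so the total norm stays bounded throughout (Lemma~\ref{lem:simu}). Your scheme of deducing the Theorem~\ref{thm:main0} hypothesis from the monotone decrease of $|\hat Z|$ therefore does not work as stated.

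There are two further issues. First, the amplification step you flag as ``the main obstacle'' is indeed the heart of the argument, and the sketch you give (iterate Theorem~\ref{thm:main0} on a residual, rely on monotone active counts) does not go through for the same reason as above: removing a candidate changes the minimum decomposition in uncontrolled ways, and active counts need not be monotone. The paper's Lemma~\ref{lem:oneround} takes a different route: it runs the $\eps$-sequential decoder \emph{virtually} to exhaustion, partitions $Z$ into disjoint pieces $x_i'$, classifies the steps into ``good'' and ``bad'' via a potential argument on an auxiliary set $A_i$, shows the good pieces cover at least $|Z|/2$, and then argues that the parallel decoder must have flipped a constant fraction of each good $x_i'$ (otherwise the local parallel rule, which maximises $|x_v|$, would have preferred a larger flip). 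Second, your final sentence asserting that $\hat{\error}$ ``differs from $\error$ only by a stabilizer'' because it has the correct syndrome is not sufficient: one must show $|\error+\hat{\error}|<d_{\min}$, which the paper does by bounding $|\hat C_0+\hat R_1|\leq\Delta\sum_i\|x_i\|\leq\frac{2}{\kappa}|Z|\leq\frac{8}{\kappa}|\error|$ and invoking Theorem~\ref{thm:dist-main}.
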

 
The value of $\eps$ can be tuned to what one wants
to achieve. A larger $\eps$ will increase the weight of correctable errors but
increases the number of required parallel steps, while a smaller $\eps$ decreases
the number of parallel steps but only decodes smaller weight errors.

The main technical result needed to prove that the parallel decoder converges
after a logarithmic number of iterations is that there is a linear number of
vertices $v$ that can be updated at each step to decrease the weight of $Z$. The
idea to prove their existence consists in running the sequential algorithm
virtually and then establishing that many of the updates corresponding to the
sequential decoder can in fact be performed at the same time. It is detailed in
the proof of Theorem \ref{corol:parallel} in Section \ref{sub:proof}.

\textbf{Remark:} we have not tried to optimise the power of $2$ numerical
constants in Theorems \ref{thm:dist-main},\ref{thm:main0},\ref{theo:parallel} for the sake
of readability.

{\bf Decoding expander lifted product codes}.---We remark that the same decoding algorithm can be applied to the expander lifted product codes of \cite{PK21}. Indeed, their decoding can be reduced (with a parallel procedure) to that of quantum Tanner codes as explained in \cite{LZ22b}. Furthermore, the decoder also works for hypergraph products of two classical Tanner codes, albeit with smaller correction capabilities since these codes have a distance scaling like $\sqrt{n}$.

{\bf Discussion and open questions}.---
Rapid improvements of decoding algorithms over the last decade have led to a surge of interest for LDPC as a path towards hardware-efficient fault-tolerant quantum computing. The unexpected discovery of good quantum LDPC codes with parameters close to optimal will likely impact this research direction in major ways. The parallel decoding algorithm presented here is a first step towards this goal. Of course, a great number of open questions remains. While our proof requires codes of large size, the decoder is well defined for any quantum Tanner code or lifted product code, and it will be interesting to investigate its performance against random noise. 
A practical decoder should be robust to errors in the syndrome extraction process. We believe this is the case here, and that an analysis along the lines of \cite{FGL18b} could be extended to the present decoder. It also does not suffice to correct errors in order to perform a computation, and one must be able to apply logical gates in a fault-tolerant manner \cite{CKB22}. At the moment, a complete understanding of the logical operators for good LDPC codes is still lacking. Finally, and crucially, we will need examples of good codes of reasonably small size if such codes are to be implemented in real devices.

\section{Technical preliminaries}
\label{sec:prelim}
\subsection{Graph expansion}

In this section, we recall some useful facts about graph expansion. 

Let $\G=(V,E)$ be a graph. Graphs will be undirected but may have multiple
edges. 
For $S,T \subset V$, let $E(S,T)$ denote the multiset of edges with one endpoint in
$S$ and one endpoint in $T$.
Let $\G$ be a connected $\Delta$-regular graph on $n$ vertices, and let
$\Delta=\lambda_1\geq\lambda_2\geq \ldots \geq \lambda_n$ be the eigenvalues of the
adjacency matrix of $\G$. For $n\geq 3$, we define $\lambda(\G):= \max\{|\lambda_i|,
\lambda_i\neq \pm \Delta\}$. 
The graph $\G$ is said to be \emph{Ramanujan} if $\lambda(\G)\leq 2\sqrt{\Delta-1}$.

We recall the following version of the expander mixing lemma (see \textit{e.g.}\
\cite{HLW06}) for bipartite graphs.
\begin{lemma}[Expander mixing lemma] \label{lem:mixing}
Let $\cG$ be a connected $\Delta$-regular bipartite graph on the vertex set $V_0\cup V_1$.
For any pair of sets $S\subset V_0, T \subset V_1$, it holds that
\[ |E(S,T) | \leq \frac{\Delta}{|V_0|} |S| |T| + \lambda(\G) \sqrt{|S| |T|}.\]
\end{lemma}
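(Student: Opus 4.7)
The plan is to proceed by the classical spectral-method proof: express $|E(S,T)|$ as a quadratic form $\mathbf{1}_S^{\top} A\, \mathbf{1}_T$ in the adjacency matrix $A$ of $\cG$, decompose the indicator vectors $\mathbf{1}_S$, $\mathbf{1}_T$ against an orthonormal eigenbasis of $A$, and separate the contribution of the trivial eigenvalues (which gives the main term $\frac{\Delta}{|V_0|}|S||T|$) from that of the non-trivial part of the spectrum (bounded via $\lambda(\cG)$ and Cauchy--Schwarz).

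The first step is to set $N=|V_0|=|V_1|$ (an equality forced by $\Delta$-regularity and bipartiteness, since $\Delta|V_0|=|E|=\Delta|V_1|$) and identify the two trivial eigenvectors. Because $\cG$ is $\Delta$-regular and bipartite, $A$ has the all-ones eigenvector
\[
v_+ = \tfrac{1}{\sqrt{2N}}(\mathbf{1}_{V_0}+\mathbf{1}_{V_1})
\]
with eigenvalue $\Delta$, and the ``signed'' eigenvector
\[
v_- = \tfrac{1}{\sqrt{2N}}(\mathbf{1}_{V_0}-\mathbf{1}_{V_1})
\]
with eigenvalue $-\Delta$. All remaining eigenvalues have absolute value at most $\lambda(\cG)$ by definition.

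Next, I would decompose $f=\mathbf{1}_S$ (viewed as a vector on $V_0\cup V_1$ that vanishes on $V_1$) and $g=\mathbf{1}_T$ (vanishing on $V_0$) in this basis. A direct computation gives $\langle f,v_+\rangle=\langle f,v_-\rangle=|S|/\sqrt{2N}$ and $\langle g,v_+\rangle=-\langle g,v_-\rangle=|T|/\sqrt{2N}$, so the combined contribution of $v_\pm$ to $f^{\top}Ag$ is
\[
\Delta\cdot\frac{|S||T|}{2N}+(-\Delta)\cdot\Big(-\frac{|S||T|}{2N}\Big)=\frac{\Delta|S||T|}{N}=\frac{\Delta}{|V_0|}|S||T|,
\]
matching the first term of the inequality. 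Writing $f=f_{\parallel}+f^{\perp}$ and $g=g_{\parallel}+g^{\perp}$ for the decomposition into the $\{v_+,v_-\}$-span and its orthogonal complement, the remaining contribution is $(f^{\perp})^{\top}A\,g^{\perp}$, which is bounded in absolute value by $\lambda(\cG)\,\|f^{\perp}\|\,\|g^{\perp}\|\leq \lambda(\cG)\,\|f\|\,\|g\|=\lambda(\cG)\sqrt{|S||T|}$ using Cauchy--Schwarz and $\|f\|^2=|S|$, $\|g\|^2=|T|$. Combining the two bounds yields the stated inequality.

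The only subtle point, and the main place to be careful, is the bipartite structure: unlike the non-bipartite version of the mixing lemma, both eigenvalues $\pm\Delta$ are trivial and must be jointly subtracted off before invoking the spectral gap. Once that is done properly, the constant in front of $|S||T|$ works out to exactly $\Delta/|V_0|$ (rather than $\Delta/n$ as in the non-bipartite case), which is precisely the statement of the lemma. The rest of the argument is just standard linear algebra, so I do not anticipate any real obstacle beyond getting this bookkeeping right.
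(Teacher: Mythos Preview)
Your argument is correct and is exactly the standard spectral proof of the bipartite expander mixing lemma. The paper does not actually prove this lemma: it simply states it and refers the reader to \cite{HLW06}, so there is no ``paper's own proof'' to compare against beyond noting that your write-up is precisely the textbook argument that reference contains.
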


\subsection{Tanner codes}
\label{subsec:TC}
A binary linear code of length $n$ is an $\f_2$-linear subspace of $\f_2^n$. For
sets $E$ of cardinality $|E|=n$, it will be convenient for us to identify $\f_2^n$
with $\f_2^E$, which we can think of as the space of functions from $E$ to
$\f_2$. Identication with $\f_2^n$ amounts to defining a one-to-one map between $E$ and
$[n]=\{1,2,\ldots ,n\}$, \textit{i.e.}\ a numbering of the elements of $E$.

Let $\G=(V,E)$ be a regular graph of degree $\Delta$, and for any vertex $v$
denote by $E(v)$ the set of edges incident to $v$. 
Assume an identification of $\f_2^{E(v)}$ with $\f_2^\Delta$ for every $v\in V$.
Let $x\in\f_2^{E}$ be a vector indexed by (or a function defined on) the set
$E$. Let us define the {\em local view} of $x$ at vertex $v$ as the subvector
$x_v:=(x_e)_{e\in E(v)}$, \textit{i.e.}\ $x$ restricted to the edge-neighbourhood $E(v)$ of
$v$.

Let $C_0$ be a linear code of length $\Delta$, dimension $k_0=\rho_0\Delta$, and
minimum distance $d_0=\delta_0\Delta$.
We define the Tanner code~\cite{T81} associated to $\G$ and $C_0$ as
\[
\text{Tan}(\G,C_0) :=\{x\in\f_2^E : x_v\in C_0\;\text{for all}\;v\in V\}.
\]
In words, the Tanner code is the set of vectors over $E$ all of whose local
views lie in $C_0$.
By counting the number of linear equations satisfied by the Tanner code, we
obtain
\begin{equation}\label{eq:dimtanner}
\dim \text{Tan}(\G,C_0) \geq (2\rho_0-1)n.
\end{equation}
We also have the bound~\cite{SS96,Gur} on the minimum distance $d$ of the Tanner
code:
\[
d\geq\delta_0 (\delta_0- \lambda(\G)/\Delta)n.
\]
Therefore, if $(\G_i)$ is a family of $\Delta$-regular expander graphs with
$\lambda(\G_i)\leq\lambda <d_0$, and if $\rho_0>1/2$, then the associated family of Tanner
codes has rate and minimum distance which are both $\Omega(n)$, meaning we have
an asymptotically good family of codes, as was first shown in~\cite{SS96}.

\subsection{Quantum CSS codes}

A quantum CSS code is specific instance of a stabilizer code~\cite{got97} that can be defined by two classical codes $\C_0$ and $\C_1$ in the ambient space $\f_2^n$, with the property that $\C_0^\perp \subset \C_1$~\cite{CS96,ste96}. It is a \emph{low-density parity-check} (LDPC) code whenever both $\C_0$ and $\C_1$ are the kernels of sparse parity-check matrices. 
The resulting quantum code $\eQ = (\C_0, \C_1)$ is a subspace of
$(\mathbb{C}_2)^{\otimes n}$, the space of $n$ qubits:
\[ \eQ := \mathrm{Span}\left\{ \sum_{z \in \C_1^\perp} |x+z\rangle \: : \: x\in \C_0 \right\},\]
where $\{ |x\rangle \: : \: x\in \f_2^n\}$ is the canonical basis of $(\mathbb{C}_2)^{\otimes n}$.
The dimension $k$ of the code counts the number of logical qubits and is given by 
\[ k = \text{dim} \, (\C_0/\C_1^\perp) = \text{dim} \, \C_0 + \text{dim} \, \C_1 - n.\]
Its minimum distance is $d = \min (d_X, d_Z)$ with
\[ d_X = \min_{w \in \C_0 \setminus \C_1^{\perp}} |w|, \quad d_Z = \min_{w \in \C_1 \setminus \C_0^\perp} |w|.\]
We denote the resulting code parameters by $\llbracket n,k,d\rrbracket$ and say
that a code family $(\eQ_n)_n$ is \emph{asymptotically good} if its
parameters are of the form
\[ \llbracket n, k = \Theta(n),d = \Theta(n)\rrbracket.\]

An $n$-qubit Pauli error $E_1 \otimes \ldots \otimes E_n$ with $E_i \in \{ \1, \sigma_X, \sigma_Y, \sigma_Z\}$\footnote{The 1-qubit Pauli matrices are defined by $\1 = \left( \begin{smallmatrix} 1 &0\\0&1 \end{smallmatrix}\right), \sigma_X= \left( \begin{smallmatrix} 0 &1\\1&0 \end{smallmatrix}\right), \sigma_Z= \left( \begin{smallmatrix} 1 &0\\0&-1 \end{smallmatrix}\right)$ and $\sigma_Y = i \sigma_X \sigma_Z$.} is conveniently described by two $n$-bit strings $(e_0, e_1)\in \F_2^n \times \F_2^n$ \textit{via} the mapping 
\[ \1 \mapsto (0,0), \quad \sigma_X \mapsto (1,0), \quad \sigma_Y \mapsto (1,1), \quad \sigma_Z \mapsto (0,1),\]
which forgets global phases. 
The parity-check matrices of $\C_0$ and $\C_1$ give rise to syndrome maps
$\sigma_0, \sigma_1 : \F_2^n \to \F_2^m$ that associate a pair of syndromes
$(\sigma_0(\error_0), \sigma_1(\error_1)) \in \F_2^m \times \F_2^m$ to any
$n$-qubit Pauli error $(\error_0, \error_1)\in \F_2^n \times \F_2^n$.
The decoding problem for a stabilizer code is as follows: given a syndrome
$(\sigma_0(\error_0), \sigma_1(\error_1))$, recover the error up to an element
of the stabilizer group, that is return $(\hat{\error}_0, \hat{\error}_1)$ such
that $\error_0 + \hat{\error}_0 \in \C_1^\perp$ and $\error_1 + \hat{\error}_1 \in \C_0^\perp$. 

While an optimal decoding of \emph{random} errors would typically exploit
possible correlations between $\error_0$ and $\error_1$, it is always possible
to correct both errors independently. Here, we will be concerned with the
adversarial setting where $\error_0$ and $\error_1$ are of sufficiently low
weight, but otherwise arbitrary. In that case, both errors should be decoded
independently, and we will focus on the case where $(\error_0=0, \error_1=e)$ in
the sequel.

\subsection{Left-right Cayley complexes (quadripartite version)}
\label{subsec:LRCayley}

The square complex we shall rely on for the construction first appeared in \cite{PK21} as a balanced product of double covers of non-bipartite Cayley graphs. For the sake of simplicity, we will rather use the language of left-right Cayley complexes in their quadripartite version.
A {\em left-right Cayley complex} $X$ is introduced in~\cite{DEL22} from a group
$G$ and two sets of generators $A=A^{-1}$ and $B=B^{-1}$. As in~\cite{DEL22} we
will restrict ourselves, for the sake of simplicity, to the case
$|A|=|B|=\Delta$. The complex is made up of vertices, $A$-edges, $B$-edges, and
squares. The vertex set consists of four copies of the group $G$ in the
quadripartite version, $V = V_{00} \cup V_{10} \cup V_{01}\cup V_{11}$ with
$V_{ij} = G \times \{ij\}$. 
The advantage of this quadripartite version, also considered in \cite{PK21} and \cite{gol21}, is that it does not require any additional assumption on the choice of group and generators, for instance that $ag \ne gb$ for all $g\in G, a\in A, b \in B$, as in \cite{DEL22}.
We will also use the notation $V_0:=V_{00}\cup
V_{11}$ and $V_1:=V_{01}\cup V_{10}$.
The $A$-edges are pairs of vertices of the form $\{(g,i0) ,(ag,i1)\}$ and $B$-edges are of
the form $\{(g,0j),(gb,1j)\}$ for $g\in
G,a\in A,b\in B$, $i,j=0,1$. We denote by $E_A$ and $E_B$ these two edge sets. The associated graphs are denoted by $\G_A = (V, E_A)$ and $\G_B = (V,E_B)$.
 A {\em square} is a set of four vertices of the form $\{(g,00),(ag,01),(gb,10),(agb,11)\}$.
The set of squares (or quadrangles) of the complex is denoted by $Q$.
Every vertex is incident to exactly $\Delta^2$ squares. For a vertex $v$, the set of incident squares is called the $Q$-neighbourhood,
and denoted by $Q(v)$.

The sets of generators $A$ and $B$ will be chosen so that the Cayley graphs
$\Cay(G,A)$ and $\Cay(G,B)$ are non-bipartite Ramanujan graphs.
It should be understood that when writing $\Cay(G,A)$ we implicitely mean the
Cayley graph defined by left multiplication by
elements of $A$, while $\Cay(G,B)$ stands for the Cayley graph defined by
right multiplication by elements of $B$. The sets $A$ and $B$ could in principle
be chosen to be identical, but we keep a distinct notation for both sets, in
particular in order to allow the above abuse of notation to be non-confusing.

We see that the subset of edges of $E_A$ that connect vertices of $V_{00}$ to
vertices of $V_{01}$ make up a double cover of the Cayley graph $\Cay(G,A)$,
the edges $E_A$ that connect $V_{10}$ to $V_{11}$ make up a second copy of the
same double cover. Therefore, the graph $\G_A$ is a disjoint union of two copies
of the double cover of $\Cay(G,A)$. Similarly, $\G_B$ is a disjoint union of two
copies of the double cover of $\Cay(G,B)$.

Let us introduce one additional graph that exists on the complex $X$,
and that we denote by $\G^\square$. This graph
puts an edge between all pairs of vertices of the form $\{(g,i),(agb,i)\}$, $g\in G,a\in A,b\in
B, i=0,1$. The graph $\G^\square$ is therefore made up of two connected
components, on $V_0$ and $V_1$, that we denote by $\G_0^\square$ and
$\G_1^\square$. We note that $\G^\square$ is regular of degree $\Delta^2$, and
may have multiple edges. 

If $\Cay(G,A)$ and $\Cay(G,B)$ are Ramanujan, then $\G^\square$ inherits some
of their expansion properties. Specifically:

\begin{lemma}\label{lem:lambda}
Assume that $\Cay(G,A)$ and $\Cay(G,B)$ are Ramanujan graphs, then 
\[ 
\lambda(\G_0^\square) \leq 4\Delta, \quad \lambda(\G_1^\square)\leq 4\Delta.\]
\end{lemma}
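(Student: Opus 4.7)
The plan is to reduce the spectral bound to a statement about the singular values of an explicit $|G|\times|G|$ matrix built from the two Cayley adjacency operators, and then exploit the fact that these two operators commute. Since $\G_0^\square$ is bipartite of degree $\Delta^2$ on $V_{00}\cup V_{11}$ (edges go between the two copies of $G$ via $g'=agb$), its adjacency matrix has block form $\bigl(\begin{smallmatrix}0 & M\\ M^{T} & 0\end{smallmatrix}\bigr)$, where $M\in\mathbb{R}^{G\times G}$ has entry $M_{g,g'}$ equal to the number of pairs $(a,b)\in A\times B$ with $g'=agb$. The eigenvalues of this block matrix are $\pm$ the singular values of $M$, with the top pair $\pm\Delta^2$ being the excluded ``trivial'' eigenvalues in the definition of $\lambda(\G_0^\square)$. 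So the task reduces to bounding the second-largest singular value of $M$ by $4\Delta$.

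Next, I would identify $M=\hat{A}\hat{B}$, where $\hat A,\hat B\in\mathbb{R}^{G\times G}$ are the adjacency matrices of $\Cay(G,A)$ acting by left multiplication and $\Cay(G,B)$ acting by right multiplication on $\mathbb{C}[G]$. This is a one-line expansion of $(\hat A\hat B)_{g,g'}=\sum_{h}\hat A_{g,h}\hat B_{h,g'}$, setting $h=ag$ and recognising $g'=agb$. Because $A=A^{-1}$ and $B=B^{-1}$, both $\hat A$ and $\hat B$ are symmetric; and since left and right multiplications commute, $\hat A\hat B=\hat B\hat A$, so in particular $M$ is itself symmetric and its singular values equal the absolute values of its eigenvalues.

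Being commuting symmetric operators, $\hat A$ and $\hat B$ can be simultaneously diagonalised in a common orthonormal eigenbasis $(v_i)_i$, with $\hat A v_i=\alpha_i v_i$ and $\hat B v_i=\beta_i v_i$, hence $Mv_i=\alpha_i\beta_i v_i$. The constant vector $v_1=\mathbbm{1}/\sqrt{|G|}$ yields $\alpha_1=\beta_1=\Delta$ and is responsible for the trivial eigenvalue $\Delta^2$ of $M$. For every $i\geq 2$, $v_i\perp\mathbbm{1}$, and the Ramanujan hypothesis on $\Cay(G,A)$ and $\Cay(G,B)$, combined with their non-bipartiteness (which rules out $-\Delta$ as a non-trivial eigenvalue), gives $|\alpha_i|,|\beta_i|\leq 2\sqrt{\Delta-1}$. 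Multiplying, the non-trivial eigenvalues of $M$ satisfy $|\alpha_i\beta_i|\leq 4(\Delta-1)\leq 4\Delta$, which translates into $\lambda(\G_0^\square)\leq 4\Delta$. The argument for $\G_1^\square$ is identical, since the excerpt explicitly states that both square-graphs have the same edge law $g'=agb$ on two copies of $G$.

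The only non-routine step is the matrix identification $M=\hat A\hat B$: one must fix conventions for left vs.\ right multiplication consistently and check that $M_{g,g'}$ really counts the multiplicity of $(a,b)$-pairs producing the same edge, so that multi-edges in $\G_0^\square$ are faithfully captured. Everything else is forced: once $M$ is written as a product of two commuting symmetric operators with known spectra, the bound $\lambda(\G^\square_\bullet)\leq 4\Delta$ follows from elementary spectral theory, and the factor $4$ is exactly the product $(2\sqrt{\Delta-1})^2/\Delta$ up to a harmless $(\Delta-1)/\Delta$ factor that I would simply absorb into the $4\Delta$ bound for readability.
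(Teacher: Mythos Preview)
Your argument is correct and is essentially the same as the paper's: the paper simply records that the adjacency matrix of $\G^\square$ is the product of the (commuting) adjacency matrices of $\G_A$ and $\G_B$ and refers to \cite{LZ22} for details, while you spell out the block structure, the identification $M=\hat A\hat B$, the simultaneous diagonalisation, and the use of non-bipartiteness to exclude $-\Delta$ from the nontrivial spectrum of each factor. There is no substantive difference in approach, only in level of detail.
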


The proof follows from the fact that the adjacency matrix of $\G^\square$ is the
product of the adjacency matrices of $\G_A$ and $\G_B$, and that these two
adjacency matrices commute, by definition of the square complex. See~\cite{LZ22}
for a little more detail.

\subsection{Labelling $Q$-neighbourhoods}
We will define Tanner codes on $\G_0^\square$ and $\G_1^\square$, which implies a labelling of
the coordinates in every $Q$-neighbourhood $Q(v)$. There is a natural labeling
of $Q(v)$ by the set $A\times B$, namely a one-to-one map $\phi_v~: A\times B
\to Q(v)$,  which we now state explicitely.

We set
\begin{align*}
\text{for}\; v=(g,00)\in V_{00},\quad& \phi_v(a,b) =
\{(g,00),(ag,01),(gb,10),(agb,11)\},\\
\text{for}\; v=(g,01)\in V_{01},\quad& \phi_v(a,b) =
\{(g,01),(a^{-1}g,00),(gb,11),(a^{-1}gb,10)\},\\
\text{for}\; v=(g,10)\in V_{10},\quad& \phi_v(a,b) =
\{(g,10),(ag,11),(gb^{-1},00),(agb^{-1},01)\},\\
\text{for}\; v=(g,11)\in V_{11},\quad& \phi_v(a,b) =
\{(g,11),(a^{-1}g,10),(gb^{-1},01),(a^{-1}gb^{-1},00)\}.
\end{align*}
The map $\phi_v$ thus defined is obviously one-to-one, and one easily checks
that:

\medskip

{\em
Any two vertices $v=(g,0i)$ and $w=(gb,1i)$, $i=0,1$, that are connected through
a $B$-edge (labelled $b$), have a common ``column'', \textit{i.e.}\ their
$Q$-neighbourhoods share exactly $\Delta$ squares that are labelled $(a,b), a\in
A$, in both $Q(v)$ and $Q(w)$.
}
\medskip

Similarly,

\medskip

{\em
Any two vertices $v=(g,i0)$ and $w=(ag,i1)$, $i=0,1$, that are connected through
an $A$-edge (labelled $a$), have a common row, \textit{i.e.}\ their
$Q$-neighbourhoods share exactly $\Delta$ squares that are labelled $(a,b), b\in
B$, in both $Q(v)$ and $Q(w)$.
}

\medskip

The situation is illustrated on Figure~\ref{fig:code}. Summarising, any two
vertices connected by $B$-edge (an $A$-edge) have a common column (row) in their $Q$-neighbourhoods,
that is labelled by the same $b\in B$ ($a\in A$).

\subsection{Local codes}

The constraints of a classical Tanner code consist of local constraints from small codes enforced on the edge-neighbourhood of each vertex. 
For quantum Tanner codes, now that all local $Q$-neighbourhoods are isomorphic
to $A \times B$, we may put local constraints that are codewords of the tensor
codes $C_A \otimes C_B$ and $C_A^\perp \otimes C_B^\perp$.

Recall that the generators of the quantum Tanner code correspond to a basis of $C_A \otimes C_B$ on each $Q$-neighbourhood 
of $V_{00} \cup V_{11}$ (for the $\sigma_X$-type generators) and to a basis of $C_A^\perp \otimes C_B^\perp$ 
on each $Q$- neighbourhood of $V_{01} \cup V_{10}$ for the $\sigma_Z$-type generators). 
The classical code $\C_0\subset \F_2^Q$ correcting $\sigma_Z$-type errors is the Tanner code on the graph $\G_0^\square$ with local constraints corresponding to the dual tensor code $(C_A \otimes C_B)^\perp = C_A^\perp \otimes \F_2^B + \F_2^A \otimes C_B^\perp$.
With the notation of Section~\ref{subsec:TC}, $\C_0=\text{Tan}(\G_0^\square,
C_A^\perp \otimes \F_2^B + \F_2^A \otimes C_B^\perp)$.
Similarly, the classical code $\C_1\subset\F_2^Q$ correcting $\sigma_X$-type errors is the Tanner
code on the graph $\G_1^\square$ with local constraints corresponding to the
dual tensor code $(C_A^\perp \otimes C_B^{\perp})^\perp = C_A \otimes \F_2^B +
\F_2^A \otimes C_B$, \textit{i.e.}\ $\C_1=\text{Tan}(\G_1^\square, C_A \otimes \F_2^B +
\F_2^A \otimes C_B)$.

\medskip

\noindent
\textbf{Summary of parameters.}
A large enough $\Delta$ is chosen, together with an infinite family of groups
$G$ with generating sets $A,B$, $|A|=|B|=\Delta$, such that the left Cayley
graph $\Cay(G,A)$ and the right Cayley graph $\Cay(G,B)$ are Ramanujan. The
quadripartite left-right square complex $X$ is defined by $G,A,B$.

We take inner codes $C_A$ and $C_B$ such that $\dim C_B=\Delta-\dim C_A$.
Defining the inner rate $\rho$ such that $\dim C_A=\rho\Delta$, and
counting the number of constraints, we obtain that the dimension of the quantum
Tanner code $\eQ=(\C_0,\C_1)$ of length $|Q|$ is at least
$(1-2\rho)^2|Q|$.

\subsection{Robustness of (dual) tensor codes}

Besides the expansion properties of the left-right Cayley complex, the other
required property to obtain good quantum LDPC codes is the robustness of the
local dual tensor codes \cite{PK22b,LZ22}. In words, it states that any low-weight codeword of the dual tensor code $C_A \otimes \F_2^B + \F_2^A \otimes C_B$ can be obtained as a sum $c_v + r_v$ where the Hamming weights of $c_v$ and $r_v$ are not much larger than that of $c_v+r_v$.
Very recently, better bounds on the robustness of dual tensor codes obtained
from two random codes were obtained in \cite{GPT22,DHL22,PK22b}. In particular, \cite{DHL22,PK22b} prove essentially tight bounds. 
We recall the main result from \cite{PK22b}, where robustness is called
product expansion and is defined as follows:  for two linear codes
$C_A,C_B\subset\F_2^\Delta$, the dual tensor code $C_A \otimes
\F_2^\Delta + \F_2^\Delta \otimes C_B$ is said to be {\em $\kappa$-product
expanding}, if any codeword $x \in C_A \otimes \F_2^\Delta + \F_2^\Delta
\otimes C_B$ can be written as $x = c+r$ with $c \in  C_A \otimes \F_2^\Delta, r
\in  \F_2^\Delta \otimes C_B$ and 
\[ |c+r| \geq \kappa \Delta (\|c\| + \|r\|),\]
where $\|c\|$ ($\|r\|$) denotes the number of columns (rows) involved in the
support of $c$ ($r$).
\begin{theo}[\cite{PK22b}]\label{thm:robust}
For every $\rho_A, \rho_B \in (0,1)$, the dual tensor code $C_A \otimes
\F_2^\Delta + \F_2^\Delta \otimes C_B$ obtained from a uniformly random pair of
linear
codes $(C_A,C_B)$, of lengths $\Delta$ and of codimensions $\lceil \rho_A \Delta\rceil$ and $\lceil
\rho_B \Delta \rceil$ respectively,
is $\kappa$-product-expanding with high probability as $\Delta \to \infty$ with
\[ \kappa = \frac{1}{2} \min \left(
\frac{1}{4}H_2^{-1}\left(\frac{\rho_A}{8}\right)H_2^{-1}\left(\frac{\rho_B}{8}\right),
H_2^{-1}\left(\frac{\rho_A \rho_B}{8}\right)\right).\]
\end{theo}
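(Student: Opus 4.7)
Since the statement is a with-high-probability result over the random pair $(C_A, C_B)$, I would proceed by the probabilistic method, bounding the expected number of bad pairs $(c, r) \in (C_A \otimes \F_2^\Delta) \times (\F_2^\Delta \otimes C_B)$---those realizing a minimum-norm decomposition of their sum $x := c + r$ (i.e., $\|c\| + \|r\|$ is minimal among all $(c', r')$ with $c' + r' = x$) and violating $|x| \geq \kappa\Delta(\|c\| + \|r\|)$. Writing $s = \|c\|$, $t = \|r\|$, $S \subseteq [\Delta]$ for the nonzero-column set of $c$, and $T \subseteq [\Delta]$ for the nonzero-row set of $r$, the analysis splits naturally according to the size of $(s, t)$.

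\textbf{Small regime.} For $s + t$ below a threshold of order $H_2^{-1}(\cdot)\Delta$, the minimum distance of the random codes is the driver. Random codes of codimensions $\rho_A\Delta, \rho_B\Delta$ have distances at least $(H_2^{-1}(\rho_A) - o(1))\Delta$ and $(H_2^{-1}(\rho_B) - o(1))\Delta$ respectively, except on an event of probability $e^{-\Omega(\Delta)}$. Minimality of $\|c\| + \|r\|$ forces each column $a_j \in C_A$ of $c$ (for $j \in S$) to be nontrivial on the complement rows $\overline{T}$, since otherwise a suitable $d \in C_A \otimes C_B$ would reduce $\|c\|$. The restriction of $a_j$ to $\overline{T}$ then has weight at least $H_2^{-1}(\rho_A)\Delta - t$, and summing over $j \in S$ and symmetrically over rows in $T$ yields $|x| \geq s(H_2^{-1}(\rho_A)\Delta - t) + t(H_2^{-1}(\rho_B)\Delta - s)$. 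Choosing the threshold to balance the two terms produces the first component $\tfrac14 H_2^{-1}(\rho_A/8) H_2^{-1}(\rho_B/8)$ of $\kappa$.

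\textbf{Large regime.} For $s + t$ above the threshold, the per-codeword distance bound is too weak, and I would invoke a union bound. For a fixed support pattern $(S, T)$, a minimum-norm $(c, r)$ is determined (modulo the equivalence by $d \in C_A \otimes C_B$) by at most $s(1-\rho_A)\Delta + t(1-\rho_B)\Delta$ bits; the number of matrices $M = c + r$ of weight below $w := \kappa\Delta(s+t)$ supported on the L-shape $(T\times[\Delta]) \cup ([\Delta]\times S)$ is $\leq 2^{(s+t)\Delta\,H_2(\kappa)}$; and a fixed candidate $(c, r)$ is compatible with random codes only with probability $\leq 2^{-s\rho_A\Delta - t\rho_B\Delta}$. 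Combining with the $\binom{\Delta}{s}\binom{\Delta}{t} \leq 2^{\Delta(H_2(\sigma) + H_2(\tau))}$ choices of support ($\sigma := s/\Delta$, $\tau := t/\Delta$) and summing over $(s, t)$ yields a negative total exponent for $\kappa$ below the second component $H_2^{-1}(\rho_A \rho_B / 8)$.

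\textbf{Main obstacle.} The crux is managing the equivalence $(c, r) \sim (c + d, r + d)$ for $d \in C_A \otimes C_B$, which a priori inflates the enumeration by a factor $|C_A \otimes C_B| = 2^{(1-\rho_A)(1-\rho_B)\Delta^2}$ and trivially breaks the union bound. Restricting to minimum-norm representatives kills this redundancy, but formalizing the structural consequences of minimality---notably that each column codeword remains nontrivial on $\overline{T}$ in the small regime, and that the number of minimum representatives is properly controlled in the large regime---requires careful algebraic bookkeeping. A related source of difficulty is the entropy balancing needed to recover the sharp constants: the factors of $8$ inside $H_2^{-1}$, the $\tfrac12$ prefactor, and the particular product-or-sum shape of the $\min$ in $\kappa$ all emerge from delicate tradeoffs between the Chernoff slack on random-code distances, the entropy $H_2(\sigma) + H_2(\tau)$ of the support choices, and the exponential count of admissible $(c, r)$.
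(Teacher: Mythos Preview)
The paper does not prove this theorem: it is quoted verbatim as a result of Kalachev and Panteleev \cite{PK22b} and used as a black box. There is therefore no proof in the paper to compare your proposal against.

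That said, your outline is a reasonable sketch of how such a statement is typically established, and it correctly identifies the two regimes and the central difficulty (controlling the redundancy coming from the tensor-code equivalence $(c,r)\sim(c+d,r+d)$ with $d\in C_A\otimes C_B$). A few cautions if you intend to flesh it out. In the small regime, your claim that minimality forces each column of $c$ to be nontrivial on $\overline{T}$ needs more than ``otherwise some $d$ reduces $\|c\|$'': a single column of $c$ vanishing on $\overline{T}$ does not by itself produce a tensor codeword $d$, so you need an argument that handles the interaction between several columns and the row structure of $r$. In the large regime, your entropy count $2^{(s+t)\Delta H_2(\kappa)}$ for low-weight matrices on the L-shape is loose (the support has size $(s+t)\Delta - st$, and the weight constraint is on the total, not per row/column), and the independence assumption behind the ``probability $\leq 2^{-s\rho_A\Delta - t\rho_B\Delta}$'' step is not obviously valid once you have already conditioned on the minimum-distance event from the small regime. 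Recovering the exact constants in the statement, including the factors of $8$ and the $\tfrac12$, will require tighter bookkeeping than your sketch currently provides; for the purposes of this paper, however, only the existence of some constant $\kappa>0$ independent of $\Delta$ is needed.
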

Here $H_2^{-1}$ is the inverse of the binary entropy function given by 
\[ H_2(x) := -x \log_2 x - (1-x) \log_2(1-x).\]

\noindent
{\bf Remark:} for our application we have $\rho_B=1-\rho_A$. Therefore,
Theorem~\ref{thm:robust} implies that both dual tensor codes $C_A \otimes
\F_2^\Delta + \F_2^\Delta \otimes C_B$ and $C_A^\perp \otimes
\F_2^\Delta + \F_2^\Delta \otimes C_B^\perp$ are with high probability
$\kappa$-product-expanding for the same value $\kappa$.

\section{Minimum distance}
\label{sec:distance}
Let $\x$ be a codeword of $\C_1$ not in $\C_0^\perp$. We will derive a lower
bound on the weight of $\x$, that will also be valid for the weight of
a codeword of $\C_0$ not in $\C_1^\perp$, and therefore bound from below the minimum distance of the code.

The codeword $\x$ is a Tanner codeword on the graph $\G_1^\square$, with vertex
set $V_1=V_{01}\cup V_{10}$ and the dual tensor code
$C_A\otimes\F_2^B+\F_2^A\otimes C_B$ as inner code.
The set $Q$ of coordinates is partitioned into $Q$-neighbourhoods $Q(v)$ of
vertices $v$  of $V_{01}$, on which $\x$ reduces to a dual-tensor codeword
$x_v=c_v+r_v$, with $c_v$ and $r_v$ being $C_A\otimes\f_2^B$ codewords and
$\f_2^A\otimes C_B$ codewords respectively.
We regroup the ``column'' vectors and ``row'' vectors and write $C_1=\sum_{v\in
V_{01}}c_v$ and $R_0=\sum_{v\in V_{01}}r_v$. Similarly, using the partition of $Q$ into $Q(v)$s for $v\in
V_{10}$, we define $C_0=\sum_{v\in
V_{10}}c_v$ and $R_1=\sum_{v\in V_{10}}r_v$.

Let us denote by $\|C_i\|=\sum_{v\in V_{\bari i}}\|c_v\|$ the total number of non-zero column
vectors in $C_A$ intervening in the local views of $\x$, for the partition over
$Q(v), v\in V_{\bari i}$.  We write $\bari=1-i$ and $\barj=1-j$ to lighten
notation. Similarly, we denote by $\|R_i\|$ the quantity
$\|R_i\|=\sum_{v\in V_{i\bari}}\|r_v\|$. Note that there are several possible
decompositions of a local view $x_v$ of $\x$ as $x_v=c_v+r_v$. If for every
$v\in V_1$ we choose one that minimises $\|c_v\|+\|r_v\|$, that is the total
number of $C_A$ and $C_B$ codewords in the decomposition of $x_v$, we will
obtain a {\em minimal} representation $(C_0,R_0,C_1,R_1)$ of $\x$ that minimises
the quantity $\|C_0\|+\|R_0\|+\|C_1\|+\|R_1\|$. We call the latter quantity the {\em norm} of
$\x$, and denote it by
\[
\|\x\|:=\|C_0\|+\|R_0\|+\|C_1\|+\|R_1\|.
\]
Since $\x=\sum_{v\in V_{01}}x_v=C_1+R_0=\sum_{v\in V_{10}}x_v=C_0+R_1$, we have
$C_0+R_0+C_1+R_1=0$, and therefore $C_0+R_0=C_1+R_1$. We make the remark that
the local views of the vector $\x^0:=C_0+R_0=C_1+R_1$ at vertices of $V_0$ are also dual
tensor codewords in $C_A\otimes\F_2^B+\F_2^A\otimes C_B$ (which should not be
confused with the local views of $\x$ on vertices of $V_0$). In other words the
vector $\x^0$ is a codeword of the Tanner code on the graph $\G_0^\square$ and
the same inner code $C_A\otimes\F_2^B+\F_2^A\otimes C_B$. Now, if $(C_0,R_0,C_1,R_1)$ is a minimal representation for
$\x$, it may not necessarily be a minimal representation for $\x^0$. 
However, if we consider a local view $x_v^0$ of $\x^0$ at $v\in V_{00}$, and 
if its decomposition $x_v^0=c_v+r_v$ is not minimal, where $r_v$ and $c_v$ are
the local views at $v$ of $R_0$ and $C_0$, then we may replace the local
decomposition by a minimal one, which will equal $x_v^0=(c_v+t_v)+(r_v+t_v)$,
where $t_v$ is some tensor codeword in $(C_A\otimes \f_2^B)\cap(\f_2^A\otimes C_B)$.
This has the effect of changing $C_1+R_0$ and $C_0+R_1$ to $C_1+R_0+t_v$ and
$C_0+R_1+t_v$, and of reducing the sum $\|C_0\|+\|R_0\|+\|C_1\|+\|R_1\|$.
We observe that in this case $\x^0$ is unchanged and $\x$ is changed to another
codeword of $\C_1$, which stays in the same class $\x+\C_0^\perp$, since the
tensor codeword $t_v$ is a generator. The conclusion is that we can keep
proceeding in this way until no local modification is possible, at which point
we will have replaced $\x$ by a codeword of the same
class modulo $\C_0^\perp$, and $(C_0,R_0,C_1,R_1)$ will be a minimal
representation of both $\x$ and of $\x^0$. We have shown in particular:
\begin{lemma}
\label{lem:minimal}
If $\x\in\C_1\setminus C_0^\perp$, and $\|\x\|$ is the minimum norm in the coset
$\x+\C_0^\perp$, then a minimal representation for $\x$ is also a minimum
representation for $\x^0$.
\end{lemma}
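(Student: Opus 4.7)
The plan is to argue by contradiction. Assume $(C_0,R_0,C_1,R_1)$ is a minimal representation of $\x$ achieving the coset-minimum norm $\|\x\|$, but suppose it is not a minimum representation of $\x^0 := C_0+R_0 = C_1+R_1$. Then at some vertex $v \in V_0$ — say $v\in V_{00}$, the $V_{11}$ case being symmetric — the local decomposition $\x^0|_{Q(v)} = C_0|_{Q(v)} + R_0|_{Q(v)}$ is not of minimum weight: there exist $c'\in C_A\otimes \F_2^B$ and $r'\in \F_2^A\otimes C_B$ on $Q(v)$ with $c'+r' = \x^0|_{Q(v)}$ and $\|c'\|+\|r'\| < \|C_0|_{Q(v)}\|+\|R_0|_{Q(v)}\|$. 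Setting $t_v := C_0|_{Q(v)} + c' = R_0|_{Q(v)} + r'$, this $t_v$ lies in both $C_A\otimes \F_2^B$ and $\F_2^A\otimes C_B$, hence in $C_A\otimes C_B$ on $Q(v)$, so extended by zero outside $Q(v)$ it is a $\sigma_x$-generator and a member of $\C_0^\perp$.

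The next step is to show that $\x' := \x + t_v$, which sits in the coset $\x + \C_0^\perp$, admits a representation of strictly smaller norm, contradicting coset-minimality. The natural candidate is $(C_0+t_v, R_0+t_v, C_1, R_1)$: each nonzero column of $t_v$ is a $C_A$-codeword that is absorbed into the $c_w$ of the unique $w\in V_{10}$ sharing that column with $v$, and each nonzero row of $t_v$ is a $C_B$-codeword absorbed into the $r_w$ of the unique $w\in V_{01}$ sharing that row with $v$; local views at non-neighbouring vertices are untouched, and one checks that at each modified $w \in V_1$ the identity $c_w + r_w = \x'|_{Q(w)}$ is preserved (since $\x'$ and $\x$ differ exactly by $t_v$ on $Q(w)\cap Q(v)$). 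So this is a valid representation of $\x'$.

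To bound its norm I would use the structural fact that each nonzero column of $C_0$ corresponds to exactly one shared column between a pair $(v',w')\in V_{00}\times V_{10}$, so that $\|C_0\|$ decomposes additively as $\sum_{v'\in V_{00}} \|C_0|_{Q(v')}\|$; the same holds for $R_0$ over $V_{00}\times V_{01}$. The surgery at $Q(v)$ therefore only affects that single summand, giving $\|C_0+t_v\| = \|C_0\| - \|C_0|_{Q(v)}\| + \|c'\|$ and $\|R_0+t_v\| = \|R_0\| - \|R_0|_{Q(v)}\| + \|r'\|$, while $\|C_1\|$ and $\|R_1\|$ are unchanged. The norm of the new representation thus drops strictly by $(\|C_0|_{Q(v)}\|+\|R_0|_{Q(v)}\|) - (\|c'\|+\|r'\|) > 0$, yielding $\|\x'\| < \|\x\|$, the desired contradiction.

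The only delicate point I expect is the verification in the middle paragraph that modifying only $C_0$ and $R_0$ by $t_v$ on $Q(v)$ produces a bona fide representation of $\x+t_v$ rather than of some other vector: this rests on the structural fact that every shared column/row of the square complex belongs to exactly one $Q$-neighbourhood on each side, so the local update at $v$ propagates consistently to neighbouring vertices without introducing extraneous changes. Once that is established, the norm computation and the contradiction are immediate.
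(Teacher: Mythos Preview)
Your argument is correct and coincides with the paper's own proof: both identify a vertex $v\in V_0$ at which the local decomposition of $\x^0$ is non-minimal, extract the tensor codeword $t_v\in C_A\otimes C_B$ relating the two local decompositions, and observe that adding $t_v$ produces an element of the coset $\x+\C_0^\perp$ with strictly smaller norm, contradicting coset-minimality. Your discussion of how the modification propagates to neighbouring vertices in $V_1$ is a bit more detailed than the paper's (which simply notes that $\x$ changes to $\x+t_v$ and the norm drops), but the underlying argument is identical.
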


We will prove the following lower bound on the norm of a codeword of
$\C_1\setminus\C_0^\perp$.
\begin{lemma}
\label{lem:norm}
For any $\x\in \C_1\setminus\C_0^\perp$, we have 
\[
\|\x\| \geq
\frac{\delta^2\kappa  n}{512\Delta^2}.
\]
\end{lemma}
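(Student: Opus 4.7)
The plan is to prove the lower bound on $N:=\|\x\|$ by contradiction, combining the robustness of the local dual tensor codes with the Ramanujan expansion of the underlying Cayley graphs. By Lemma~\ref{lem:minimal}, I work with a minimum-norm representation $(C_0,R_0,C_1,R_1)$ that is simultaneously minimal for $\x$ and $\x^0=C_0+R_0=C_1+R_1$. Call a vertex $v\in V_{ij}$ \emph{active} if its relevant local view is non-zero (the view of $\x$ if $v\in V_1$, the view of $\x^0$ if $v\in V_0$), let $S_{ij}$ be the active set, and write $s_{ij}:=|S_{ij}|$. Since minimality forces $\|c_v\|+\|r_v\|\geq 1$ at each active vertex, a double-count over the partitions of $Q$ by $Q$-neighborhoods gives $s_{01}\leq \|C_1\|+\|R_0\|$, $s_{10}\leq \|C_0\|+\|R_1\|$, and analogous inequalities for $s_{00},s_{11}$, so each $s_{ij}\leq N$.

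Next, I apply the robustness inequality \eqref{eqn:robust} at each active vertex and sum. For $V_{01}$ (whose $Q$-neighborhoods partition $Q$), this gives $|\x|\geq \kappa\Delta(\|C_1\|+\|R_0\|)$; summing over $V_{10}$ yields $|\x|\geq \kappa\Delta(\|C_0\|+\|R_1\|)$, so $|\x|\geq \kappa\Delta N/2$, and the symmetric argument applied to $\x^0$ on $V_{00}$ and $V_{11}$ gives $|\x^0|\geq \kappa\Delta N/2$. Non-zero columns of $C_1$ (resp.\ $C_0$) sit on active $B$-edges of the bipartite double cover of $\Cay(G,B)$ between $V_{01},V_{11}$ (resp.\ $V_{10},V_{00}$), which is Ramanujan with second eigenvalue at most $2\sqrt{\Delta}$. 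Lemma~\ref{lem:mixing} then gives
\[
\|C_1\|\leq \frac{\Delta}{|G|}\,s_{01}s_{11}+2\sqrt{\Delta}\sqrt{s_{01}s_{11}},
\]
together with three analogous bounds on $\|C_0\|,\|R_0\|,\|R_1\|$ (using $\Cay(G,A)$ for the rows).

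To convert these into the claimed lower bound, I exploit the minimum distance $\delta\Delta$ of $C_A,C_B$: each non-zero column (resp.\ row) codeword has Hamming weight at least $\delta\Delta$, and different columns (resp.\ rows) have disjoint supports on $Q$. Sharpening the count of $|\x^0|$ by incorporating these distance bounds separately for the column and row structures, which is what produces the $\delta^2$ factor, and combining with the robustness lower bound $|\x^0|\geq \kappa\Delta N/2$ yields a chain of inequalities that, under the assumption $N<\tfrac{\delta^2\kappa\,|G|}{512}=\tfrac{\delta^2\kappa n}{512\Delta^2}$, becomes self-contradictory.

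The main technical obstacle is this last combinatorial step. The mixing lemma has two competing terms (the ``mean'' term $\Delta s_{ij}s_{i'j'}/|G|$ and the ``spectral'' term $\sqrt{\Delta}\sqrt{s_{ij}s_{i'j'}}$) whose dominance depends on the regime of the active-set sizes, and the two factors of $\delta$ appear only when one correctly balances the row and column contributions to $|\x^0|$ while accounting for cancellations between them. Careful bookkeeping of disjoint supports, together with tight tracking of the numerical constants from robustness and the mixing lemma, is required to recover the constant $1/512$ in the statement.
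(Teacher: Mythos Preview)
Your proposal sets up the right objects (minimal representation, active sets $S_{ij}$, the robustness bound $|\x|,|\x^0|\geq \kappa\Delta N/2$) but the core argument has a genuine gap. The mixing-lemma bound you write,
\[
\|C_1\|\leq \frac{\Delta}{|G|}\,s_{01}s_{11}+2\sqrt{\Delta}\sqrt{s_{01}s_{11}},
\]
together with its three analogues and $s_{ij}\leq N$, only yields
\[
N=\|C_0\|+\|R_0\|+\|C_1\|+\|R_1\|\leq \frac{4\Delta}{|G|}N^2+8\sqrt{\Delta}\,N,
\]
which is vacuous because the spectral term alone already exceeds $N$. Counting non-zero columns of $C_j$ as $B$-edges between active sets simply cannot give a non-trivial inequality: you are comparing $N$ to a quantity that always dominates it. Your paragraph on ``sharpening the count of $|\x^0|$'' does not repair this; neither the robustness lower bound $|\x^0|\geq \kappa\Delta N/2$ nor the naive upper bound $|\x^0|\leq |E_{\G_0^\square}(S_{00},S_{11})|$ (mixing in $\G_0^\square$ with $\lambda\leq 4\Delta$) produces anything better than $\kappa/2\leq \Delta N/|G|+4$, again vacuous. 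You never actually explain how the two factors of $\delta$ arise, and ``careful bookkeeping'' is not a substitute for the missing mechanism.

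What the paper does, and what your outline lacks, is a two-scale expansion argument mediated by an \emph{exceptional/ordinary} dichotomy on vertices (threshold $\|c_v\|+\|r_v\|\geq \alpha\Delta$ with $\alpha=\delta^2/256$). First, robustness plus mixing in $\G^\square$ shows exceptional vertices are an $O(1/\Delta^2)$-fraction of $|S_{ij}|$ (Lemma~\ref{lem:excep}). Second, for an \emph{ordinary} vertex, each non-zero column of $C_j$ has at least $\delta\Delta/2$ entries not covered by $R_i$; \emph{collapsing rows to $A$-edges} and applying mixing in the double cover of $\Cay(G,A)$ then forces all these ordinary columns to land in a set $T$ of only $O(|S_{i\barj}|/(\delta^2\Delta))$ vertices of the adjacent type (Lemma~\ref{lem:T}). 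This is where the $\delta$'s genuinely enter: the distance boosts the \emph{degree} in the collapsed $A$-edge (or $B$-edge) graph, not the column count itself. The contradiction is that $T$ is then forced to contain $\Omega(|S_0|/\Delta)$ exceptional vertices, against the $O(|S_0|/\Delta^2)$ bound from the first step. Your single-scale approach skips both the dichotomy and the row-collapsing step, and without them there is no way to convert the distance $\delta\Delta$ into a useful degree lower bound.
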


From Lemma~\ref{lem:norm} we easily deduce a lower bound on the quantum minimum
distance.

\begin{theo}\label{thm:dist}
The minimum distance of the quantum Tanner code satisfies
\[
d_{\min} \geq
\frac{ \delta^2 \kappa^2 n}{256\Delta}.
\]
\end{theo}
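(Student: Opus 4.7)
The plan is to deduce Theorem~\ref{thm:dist} from Lemma~\ref{lem:norm} by showing, via local $\kappa$-robustness, that the Hamming weight of any codeword of $\C_1\setminus\C_0^\perp$ dominates its norm up to a multiplicative factor of order $\kappa\Delta$.

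First I would take $\x\in\C_1\setminus\C_0^\perp$ and, using the discussion preceding Lemma~\ref{lem:minimal}, replace it by the representative of minimum norm in the coset $\x+\C_0^\perp$. The point of this reduction is that the global decomposition $(C_0,R_0,C_1,R_1)$ is now minimal, and hence so is each local decomposition $x_v = c_v+r_v$ for $v\in V_{01}\cup V_{10}$: otherwise a further local reduction (by some tensor codeword $t_v\in (C_A\otimes\f_2^B)\cap(\f_2^A\otimes C_B)$) would lower the global norm without changing the coset, contradicting minimality. This is essentially the same argument that establishes Lemma~\ref{lem:minimal}.

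Next I exploit that each of $\{Q(v)\}_{v\in V_{01}}$ and $\{Q(v)\}_{v\in V_{10}}$ partitions the square set $Q$. On $Q(v)$ with $v\in V_{01}$ the codeword $\x$ reduces to the dual tensor codeword $x_v=c_v+r_v$, and $\kappa$-robustness (applicable by the local minimality just established) yields
\[
|x_v| \;\geq\; \kappa\Delta\bigl(\|c_v\|+\|r_v\|\bigr).
\]
Summing over $v\in V_{01}$ gives $|\x|\geq \kappa\Delta(\|C_1\|+\|R_0\|)$, and the symmetric sum over $v\in V_{10}$ gives $|\x|\geq \kappa\Delta(\|C_0\|+\|R_1\|)$. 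Averaging these two inequalities produces
\[
|\x| \;\geq\; \frac{\kappa\Delta}{2}\,\|\x\|,
\]
into which Lemma~\ref{lem:norm} can be plugged directly to yield a lower bound of order $\delta^2\kappa^2 n/\Delta$. A verbatim argument, swapping the roles of $V_0$ and $V_1$, handles codewords in $\C_0\setminus\C_1^\perp$, so the same bound lower-bounds $d_{\min}=\min(d_X,d_Z)$.

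The conceptual bulk of the work is already packaged into Lemma~\ref{lem:norm}; this step is essentially bookkeeping. The one subtle point I would verify carefully is the local-versus-global minimality of the decomposition, since robustness is only guaranteed to apply to minimum decompositions on each $Q(v)$. The tracking of numerical constants (to recover the exact factor $1/256$) is then just a matter of combining the bound $|\x|\geq \tfrac{\kappa\Delta}{2}\|\x\|$ with the explicit constant from Lemma~\ref{lem:norm}.
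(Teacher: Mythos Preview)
Your approach matches the paper's: apply robustness locally on each $Q(v)$, $v\in V_1$, to obtain $|\x|\geq\frac{\kappa\Delta}{2}\|\x\|$, then invoke Lemma~\ref{lem:norm}. The only issue is your opening move of replacing $\x$ by the minimum-norm coset representative $\x'$. Taken literally this is a gap: you end up lower-bounding $|\x'|$, but the minimum distance demands a lower bound on $|\x|$ for the \emph{original} $\x$, and nothing forces $|\x|\geq |\x'|$. The fix is simply to omit this step, as the paper does. A minimal representation of $\x$ itself already has locally minimal decompositions at every $v\in V_1$: a local improvement by a tensor codeword $t_v$ at $v\in V_{01}$ replaces $(c_v,r_v)$ by $(c_v+t_v,r_v+t_v)$, leaves $x_v=c_v+r_v$ and hence $\x$ unchanged, and strictly decreases $\|C_1\|+\|R_0\|$, contradicting minimality of the representation. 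So robustness applies directly without passing to another coset representative. The coset reduction behind Lemma~\ref{lem:minimal} is only needed when one also wants local minimality at $V_0$ vertices, which plays no role in this proof.
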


\begin{proof}
Let $\x$ be a codeword of $\x\in \C_1\setminus\C_0^\perp$, and let
$(C_0,R_0,C_1,R_1)$ be a minimal representation for $\x$. We
have
\[
\x = C_1+R_0=\sum_{v\in V_{01}}(c_v+r_v).
\]
Since the local vectors in this sum have disjoint supports, we have
\[
|\x|=\sum_{v\in V_{01}}|c_v+r_v|\geq \sum_{v\in V_{01}}\kappa\Delta
(\|c_v\|+\|r_v\|)
\]
applying robustness of the local codes and minimality of the representation.
Hence $|\x|\geq\kappa\Delta(\|C_1\|+\|R_0\|)$, and similarly 
$|\x|\geq\kappa\Delta(\|C_0\|+\|R_1\|)$ by summing over $V_{10}$. Therefore,
\[
|\x|\geq \kappa\Delta\frac 12(\|C_1\|+\|R_0\|+\|C_0\|+\|R_1\|)=\kappa\Delta\frac
12\|\x\|
\]
which proves the lower bound for non-trivial codewords of $\C_1$. The same lower
bound holds for non-trivial codewords of $\C_0$ by symmetry.
\end{proof}

It remains to prove Lemma~\ref{lem:norm}. We may suppose that $\x$ achieves the
minimum of $\|\x\|$ in $\x+\C_0^\perp$, so that Lemma~\ref{lem:minimal} holds.
In the following, $(C_0,R_0,C_1,R_1)$ is a minimal representation of $\x$.

Let us denote $S_{ij}$ the set of vertices $v$ of $V_{ij}$ for which $C_j+R_i$ is
non-zero on $Q(v)$. Let us also set $S_0= S_{00}\cup S_{11}$, and
$S_1= S_{01}\cup S_{10}$. 

Let us call a vertex $v$ of $V_{ij}$ {\em exceptional}, if $\|c_v\|+\|r_v\|$
is at least $\alpha\Delta$ with $\alpha =\delta^2/256$, where $c_v$ and
$r_v$ are the restrictions to $Q(v)$ of $C_j$ and $R_j$ respectively. Then we write $S_{ij}^e \subset S_{ij}$ the set of exceptional vertices in $S_{ij}$.

{\em Ordinary} vertices of $S_{ij}$ are defined as non-exceptional.
We remark that any non-zero column $C_A$-vector of $C_j$ (or any row vector of $R_i$) in the
$Q$-neighbourhood of any ordinary vertex of $S_{ij}$ has, by definition of
ordinary, at most $\frac{\delta}{256}\delta\Delta$ non-zero coordinates in common with
$R_i$, which we bound from above by $\frac 12\delta\Delta$ for the sake of
readability.

The following lemma states that whenever $S_{ij}$ is small enough, the number of
exceptional vertices is a small fraction of $|S_{ij}|$, of order
$O(1/\Delta^2)$, hence the terminology.

\begin{lemma}
\label{lem:excep}
Let $i=0,1$. Under the hypothesis $|S_{ij}|\leq \frac{\alpha\kappa}{2}|V_{00}|$, we have 
\[
|S_{ii}^e|\leq \frac{64}{\alpha^2\kappa^2\Delta^{2}}|S_{0}|
\quad\text{and}\quad |S_{i\bari}^e|\leq \frac{64}{\alpha^2\kappa^2\Delta^{2}}|S_{1}|.
\]
\end{lemma}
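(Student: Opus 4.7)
The strategy is to combine the $\kappa$-robustness of the dual tensor code with the expansion of the square-complex graphs $\G_0^\square$ and $\G_1^\square$ to bound exceptional vertices in $V_{ii}$ and $V_{i\bari}$ respectively. I first focus on $|S_{ii}^e|$ and look at the companion vector $\x^0 = C_0+R_0 = C_1+R_1$, which is a Tanner codeword on $\G_0^\square$ with inner code $C_A\otimes\F_2^B+\F_2^A\otimes C_B$. Having chosen $\x$ of minimum norm in its coset, Lemma~\ref{lem:minimal} makes the local decomposition of $\x^0$ at every $v\in V_0$ minimal, so $\kappa$-robustness of the local dual tensor code yields, for every $v \in S_{ii}^e$,
\[ |c_v+r_v| \;\geq\; \kappa\Delta(\|c_v\|+\|r_v\|) \;\geq\; \kappa\alpha\Delta^2. \]

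I then count edges: every non-zero coordinate of $\x^0$ corresponds to a square that is an edge of the bipartite graph $\G_0^\square$ whose two endpoints both lie in $S_0$. Summing the local lower bound over $v \in S_{ii}^e$ therefore gives
\[ |S_{ii}^e|\,\kappa\alpha\Delta^2 \;\leq\; \sum_{v\in S_{ii}^e}|c_v+r_v| \;\leq\; |E(S_{ii}^e, S_{\bari\bari})|. \]
Since $\G_0^\square$ is $\Delta^2$-regular and $\lambda(\G_0^\square)\leq 4\Delta$ by Lemma~\ref{lem:lambda}, the expander mixing lemma (Lemma~\ref{lem:mixing}) furnishes the matching upper bound
\[ |E(S_{ii}^e, S_{\bari\bari})| \;\leq\; \frac{\Delta^2}{|V_{00}|}|S_{ii}^e||S_{\bari\bari}| + 4\Delta\sqrt{|S_{ii}^e||S_{\bari\bari}|}. \]
The hypothesis $|S_{\bari\bari}| \leq \frac{\alpha\kappa}{2}|V_{00}|$ bounds the first right-hand term by $\frac{\alpha\kappa\Delta^2}{2}|S_{ii}^e|$, absorbing half of the left-hand side. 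Squaring the surviving inequality $\frac{\alpha\kappa\Delta^2}{2}|S_{ii}^e| \leq 4\Delta\sqrt{|S_{ii}^e||S_{\bari\bari}|}$ and simplifying yields $|S_{ii}^e| \leq \frac{64|S_{\bari\bari}|}{\alpha^2\kappa^2\Delta^2} \leq \frac{64}{\alpha^2\kappa^2\Delta^2}|S_0|$, as required.

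The bound on $|S_{i\bari}^e|$ is obtained by the same argument applied to $\x$ on the graph $\G_1^\square$: here the local decomposition at each $V_1$-vertex is minimal by construction, Lemma~\ref{lem:lambda} again gives $\lambda(\G_1^\square) \leq 4\Delta$, and the relevant hypothesis is $|S_{\bari i}| \leq \frac{\alpha\kappa}{2}|V_{00}|$, leading to $|S_{i\bari}^e| \leq \frac{64}{\alpha^2\kappa^2\Delta^2}|S_1|$. The main technical subtlety lies in the $|S_{ii}^e|$ part: robustness must be invoked at $V_0$-vertices, for which minimality of the local decomposition is not automatic, and this is precisely the role of Lemma~\ref{lem:minimal} and of starting from a minimum-norm representative in $\x + \C_0^\perp$.
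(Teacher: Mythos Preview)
Your proof is correct and follows essentially the same approach as the paper: robustness gives a degree lower bound of $\alpha\kappa\Delta^2$ at each exceptional vertex, the expander mixing lemma on $\G_0^\square$ (resp.\ $\G_1^\square$) bounds $|E(S_{ii}^e,S_{\bari\bari})|$ (resp.\ $|E(S_{i\bari}^e,S_{\bari i})|$), the hypothesis absorbs the density term, and squaring yields the stated bound. Your explicit invocation of Lemma~\ref{lem:minimal} to justify applying robustness at $V_0$-vertices is exactly what the paper has in mind when it writes ``by minimality of $\|C_0\|+\|R_0\|+\|C_1\|+\|R_1\|$'', relying on the ambient assumption (set up just before the lemma) that $\x$ has minimum norm in its coset.
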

\begin{proof}
We prove the upper bound for $S_{00}^e$, the other cases being similar. 
Viewing $C_0+R_0$ as a subgraph of $\G_0^\square$, we have that vertices of
$S_{00}^e$ have degree at least $\kappa\Delta\alpha\Delta$ (applying
robustness, which we may do by minimality of $\|C_0\|+\|R_0\|+\|C_1\|+\|R_1\|$), and the Expander mixing Lemma in $\G_0^\square$ gives
\[
|S_{00}^e|\alpha\kappa\Delta^{2}\leq |E(S_{00}^e,S_{11})|\leq
\Delta^2\frac{|S_{00}^e||S_{11}|}{|V_{00}|} + 4\Delta\sqrt{|S_{00}^e||S_{11}|}.
\]
Upper bounding $|S_{11}|/|V_{00}|$ by $\kappa\alpha/2$, we obtain
\[
\frac 12|S_{00}^e|\Delta^{2}\alpha\kappa\leq 4\Delta\sqrt{|S_{00}^e||S_{11}|} 
\]
and finally
\[
|S_{00}^e| \leq \frac{64}{\alpha^2\kappa^2\Delta^{2}}|S_{11}|\leq
\frac{64}{\alpha^2\kappa^2 \Delta^{2}}|S_{0}|. \qedhere
\]
\end{proof}

Let $v$ be a vertex of $V_{ij}$ and consider a column on its $Q$-neighbourhood. 
Suppose this column supports a non-zero $C_A$-codeword that is part of $C_j$.
Recall that this column is shared by the  $Q$-neighbourhood of a neighbouring vertex $w$ in $V_{\bari j}$.
Below we consider the set $T$ of vertices of $V_{ij}$ whose local views see at
least one
non-zero $C_A$-codeword of $C_j$ that is shared by the local view of an {\em
ordinary} vertex $w\in V_{\bari j}$.
\begin{lemma}
\label{lem:T}
If we have $|S_{i\barj}|\leq
\delta|V_{00}|/4$, then 
\[
|T|\leq \frac{64}{\delta^2\Delta}|S_{i\barj}|.
\]
Furthermore,
exactly the same result holds if $T$ is defined as the subset of vertices of
$S_{\bari\barj}$ on whose $Q$-neighbourhood
$R_{\bari}$ displays a non-zero row codeword of $C_B$ that is shared by the local view
of an ordinary vertex $w\in S_{\bari j}$. 
\end{lemma}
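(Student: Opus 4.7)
The plan is to apply the expander mixing lemma to the bipartite subgraph of $\G_A$ joining $V_{ij}$ to $V_{i\barj}$, after showing that every $v\in T$ has at least $\tfrac{1}{2}\delta\Delta$ $A$-neighbors in $S_{i\barj}$.

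Fix $v\in T$ and let $w\in V_{\bari j}$ be an ordinary vertex with which $v$ shares a column that supports a non-zero $C_A$-codeword of $C_j$. Since $C_A$ has minimum distance $\ge \delta\Delta$, this column codeword has weight at least $\delta\Delta$. Writing $c_w$ and $r_w$ for the restrictions of $C_j$ and $R_{\bari}$ to $Q(w)$, ordinariness of $w$ yields $\|c_w\|+\|r_w\|<\alpha\Delta=\frac{\delta^2}{256}\Delta$. Because $r_w$ is a sum of at most $\alpha\Delta$ row $C_B$-codewords and each row meets the chosen column in exactly one square, the overlap of $r_w$ with the column has weight at most $\alpha\Delta\le\tfrac{1}{2}\delta\Delta$. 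Hence $(C_j+R_{\bari})|_{Q(w)}=c_w+r_w$ is non-zero at at least $\tfrac{1}{2}\delta\Delta$ squares of this column.

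The representation $\x=\sum_{v'\in V_{01}}x_{v'}=\sum_{v'\in V_{10}}x_{v'}$ gives $C_0+R_0+C_1+R_1=0$, hence $C_j+R_{\bari}=C_{\barj}+R_i$ as vectors in $\F_2^Q$. So at each of the $\ge\tfrac{1}{2}\delta\Delta$ squares $q$ above one has $(C_{\barj}+R_i)(q)=1$; the fourth corner of $q$ is the $A$-neighbor $v^a$ of $v$ in $V_{i\barj}$, and the non-vanishing of $C_{\barj}+R_i$ on $Q(v^a)$ at $q$ places $v^a$ in $S_{i\barj}$. Therefore $|E_A(T,S_{i\barj})|\ge\tfrac{1}{2}\delta\Delta\,|T|$. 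Since the $A$-edges between $V_{ij}$ and $V_{i\barj}$ form a $\Delta$-regular double cover of the Ramanujan graph $\Cay(G,A)$ with $\lambda\le 2\sqrt\Delta$, Lemma~\ref{lem:mixing} combined with $|V_{ij}|=|V_{00}|$ and the hypothesis $|S_{i\barj}|\le\delta|V_{00}|/4$ (which absorbs the leading term $\frac{\Delta}{|V_{00}|}|T||S_{i\barj}|$ into $\tfrac{1}{4}\delta\Delta|T|$) reduces the edge count to $\tfrac{1}{4}\delta\Delta|T|\le 2\sqrt\Delta\sqrt{|T||S_{i\barj}|}$, which squares to $|T|\le (64/\delta^2\Delta)\,|S_{i\barj}|$.

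The ``furthermore'' variant follows by the same scheme with rows and columns (and the $C$ and $R$ components) interchanged: starting from a shared row between $v\in V_{\bari\barj}$ and ordinary $w\in V_{\bari j}$, the same identity $C_j+R_{\bari}=C_{\barj}+R_i$ transports the non-zero pattern on the row to the $B$-neighbor $v^b\in V_{i\barj}$, placing it in $S_{i\barj}$; one then applies Lemma~\ref{lem:mixing} on the $B$-edge double cover of $\Cay(G,B)$ between $V_{\bari\barj}$ and $V_{i\barj}$. The only subtlety worth flagging is that the codeword we spot on $Q(w)$ lies in $C_j$ (resp.~$R_{\bari}$), while the target set $S_{i\barj}$ is defined by $C_{\barj}+R_i$; the identity $C_j+R_{\bari}=C_{\barj}+R_i$, itself a consequence of $\x$ admitting two consistent local-view decompositions over $V_{01}$ and $V_{10}$, is precisely what licenses this diagonal transfer across each square.
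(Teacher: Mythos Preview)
Your argument is correct and mirrors the paper's own proof: both establish that each vertex of $T$ has at least $\tfrac{1}{2}\delta\Delta$ neighbours in $S_{i\barj}$ inside the relevant double cover of $\Cay(G,A)$ (or $\Cay(G,B)$ for the row variant), and then apply the expander mixing lemma to conclude. The only cosmetic difference is that the paper phrases the degree bound by ``collapsing rows to edges'' and looking at the subgraph induced by $\x$ (or $\x^0$), whereas you track individual squares and invoke the identity $C_j+R_{\bari}=C_{\barj}+R_i$ explicitly; these are two descriptions of the same count.
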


\begin{proof}
We deal with the case when $i=j=0$ and $T$ is defined as the set of vertices
of $V_{00}$ whose $Q$-neighbourhoods share a non-zero column vector with an
ordinary vertex of
$V_{10}$. The other cases will hold by symmetry.
If we keep only the two sets of vertices $V_{00}$ and $V_{01}$, then every
square of the square complex becomes incident to two vertices (instead of four), 
and we obtain a multigraph. In this multigraph, every row of a local view
appears in two $Q$-neighbourhoods, one of a vertex $v\in V_{00}$, and the other in a
neighbouring vertex of $V_{01}$. If we collapse this row to a single ``square''
(which has now become an edge) by identifying them, the bipartite graph over
$V_{00}\cup V_{01}$ that we
obtain in this way is exactly the double cover of the Cayley graph $\Cay(G,A)$.
Now the codeword $\x$ induces a subgraph of this graph, which we obtain by
putting an edge in the subgraph whenever the row view it originates from is
non-zero in $\x$.

By construction, every vertex of $T$ in this subgraph has degree at least $\delta\Delta/2$ 
(actually degree at least $(1-\frac{\delta}{256})\delta\Delta$, as discussed
above), and its edges fall into $S_{01}$. Hence, 
\[
|T|\frac{\delta\Delta}{2}\leq |E(S_{01},T)|.
\] 
Applying the expander mixing Lemma in the double cover of $\Cay(G,A)$ we obtain,
\[
|E(S_{01},T)|\leq \Delta\frac{|S_{01}||T|}{|V_{01}|} +
2\sqrt{\Delta}\sqrt{|S_{01}||T|}.
\]
Finally, applying the hypothesis $|S_{01}|\leq\delta|V_{01}|/4$ we get
\[
|T|\frac{\delta\Delta}{4}\leq 2\sqrt{\Delta}\sqrt{|S_{01}||T|}
\]
and the result follows. We remark that when $T$ is defined in $V_{i\bari}$
as opposed to $V_{ii}$, it is the codeword $\x^0=R_0+C_0$ that is considered rather than
$\x$.
\end{proof}

Now, Lemma~\ref{lem:excep} will tell us that if $\|\x\|$ is too small, there can
only be very few exceptional vertices. But what Lemma~\ref{lem:T} then tells us,
is that the non-zero column codewords in the neighbourhoods of ordinary
vertices must cluster in a
limited number of $Q$-neighbourhoods for vertices of a neighbouring type,
yielding too many exceptional vertices and contradicting Lemma~\ref{lem:excep}. We now make this argument formal.

\begin{proof}[Proof of Lemma~\ref{lem:norm}]
Suppose $\x\in \C_1\setminus\C_0^\perp$ satisfies $\|\x\|< \frac{\kappa \delta^2
n}{512\Delta^2}$. 
We may suppose that $\|\x\|$ achieves its minimum value inside the coset
$\x+\C_0^\perp$, and that $(C_0,R_0,C_1,R_1)$ is a minimal representation for
$\x$.
Since we have $|S_{ij}|\leq\|C_j\|+\|R_i\|\leq\|C_0\|+\|R_0\|+\|C_1\|+\|R_1\|=\|\x\|$, 
we have that $|S_{ij}|\leq \frac{\kappa\delta^2}{512}|V_{00}|=\frac
12\kappa\alpha|V_{00}|$. Therefore Lemma~\ref{lem:excep} holds, and so does 
Lemma~\ref{lem:T} since clearly $\frac 12\kappa\alpha \leq \delta/4$.

Without loss of generality let us suppose $|S_1|\geq |S_0|$ (otherwise invert
their roles in the argument below) and $|S_{10}|\geq |S_{01}|$ (otherwise invert
their roles).

Because there are so few exceptional vertices in $S_{10}$ (by
Lemma~\ref{lem:excep}), the number
of ordinary vertices of $S_{10}$ is almost equal to $|S_{10}|$ namely it is at
least $a|S_{10}|$ for any constant $a<1$ and $\Delta$ large enough.

We obtain that either the number of ordinary rows or the number of ordinary
columns in (the $Q$-neighbourhoods of vertices of) $S_{10}$ is at least $a|S_{10}|/2$. 
Suppose without loss of generality\footnote{If it is the number of
ordinary rows that exceeds $a|S_{10}|/2$, then $T$ is defined inside $S_{11}$
instead of $S_{00}$ and the rest of the argument is unchanged.}
that the number of ordinary columns is at
least $a|S_{10}|/2$.
Applying Lemma~\ref{lem:T}, they must cluster among the $Q$-neighbourhoods of a
set $T\subset S_{00}$ of size
at most $\frac{64}{\delta^2\Delta}|S_{01}|\leq
\frac{64}{\delta^2\Delta}|S_{10}|$ (since we have supposed $|S_{01}|\leq
|S_{10}|$).
Therefore, the average number of non-zero columns of $C_0$ on the
$Q$-neighbourhoods of the vertices of this set
$T$ is at least $a |S_{10}|/2 \big(\frac{64}{\delta^2\Delta}|S_{10}\big)^{-1} = a\frac{\delta^2\Delta}{128}$, which is close to twice the
minimum norm of a local view of $R_i+C_j$ for an exceptional vertex.
This shows that a constant proportion of vertices of $T$ must be exceptional vertices
of $S_{00}$. Now since the $Q$-neighbourhood of a vertex can host at most
$\Delta$ columns, we also have $|T|\geq \frac 1\Delta \frac a2|S_{10}|\geq \frac
a{4\Delta} |S_1|$ (since $|S_{10}|\geq |S_{01}|$), and since $|S_1|\geq |S_0|$,
we have $|T|\geq \frac{a}{4\Delta}|S_0|$. Therefore, for large enough $\Delta$,
we have a contradiction with Lemma~\ref{lem:excep} which limits the number of
exceptional vertices of $S_{00}$ to not more than $O(1/\Delta^2)|S_0|$.
\end{proof}

\section{Description of the decoders}
\label{sec:parallel}

In this section, we present in detail the various decoders: the general decoding procedure including the computation of the mismatch and the post-processing is described in Algorithm \ref{algo:decoder}.
We give two procedures for finding a mismatch decomposition of the form $Z =
\hat{C}_0 + \hat{R}_0 + \hat{C}_1 + \hat{R}_1$, a sequential procedure described
in Algorithm \ref{algo:seq-mismatch} and a parallel one described in Algorithm
\ref{algo:mismatch}. 

Both the sequential decoder and the parallel decoder follow the blueprint of
Algorithm \ref{algo:decoder}. They use a {\em preprocessing phase} which
consists of computing a mismatch vector $Z$. Specifically, on the
$Q$-neighbourhood of every vertex of $V_0$ (if one is trying to correct a
$\sigma_z$-error), or of $V_1$ (if one is trying to correct a $\sigma_x$-error),
the decoder computes a local estimation $\eps_v$ of the original error vector
$\error$, and then sums all these $\eps_v$'s to make up the mismatch vector $Z$.
Note that this common preprocessing phase can be parallelised straightforwardly
when needed.

The core of the decoding algorithm is then to uncover a decomposition of the
mismatch vector of the form $Z=\hat{C}_0+\hat{R}_0+\hat{C}_1+\hat{R}_1$, where 
$\hat{C}_i$ is a sum of local (column) codewords of $C_A$ whose coordinates are indexed by a column of a
$Q$-neighbourhood $Q(v)$ for $v\in V_{ii}$, and similarly $\hat{R}_i$ is a sum of local
row codewords of $C_B$ whose coordinates are indexed by a row in some
$Q(v)$, $v\in V_{ii}$. Note that the individual local column codewords of $C_i$
also appear in the $Q$-neighbourhoods of vertices of $V_{\bari i}$, and the
row codewords of $R_i$ appear likewise in $Q$-neighbourhoods of $v\in
V_{i\bari}$. The mismatch decomposition procedures differ significantly for the
sequential decoder and the parallel decoder.

Finally, both decoders use a common postprocessing phase once they have decomposed the
mismatch vector $Z$. If one is decoding a $\sigma_z$-error, the decoders output
their estimation of the error which is computed as
\[
\hat{\error} = \sum_{v\in V_{00}}\eps_v + \hat{C}_0+\hat{R}_0.
\]
We remark that we also have
\[
\hat{\error} = \sum_{v\in V_{11}}\eps_v + \hat{C}_1+\hat{R}_1
\]
since the sum of those two quantities is zero by construction and by definition
of the mismatch.
Similarly, if one is decoding a $\sigma_x$-error, the decoders compute
$\hat{\error}=\sum_{v\in V_{10}}\eps_v + \hat{C}_0+\hat{R}_1=\sum_{v\in
V_{01}}\eps_v + \hat{C}_1+\hat{R}_0$.
Since these summations can be done locally on the components of the partition of $Q$ into
$Q$-neighbourhoods of the relevant $V_{ij}$, the postprocessing is 
achieved naturally by means of a parallel computation.

It remains to describe the mismatch decomposition procedures.
The sequential mismatch decomposition procedure is given 
in Algorithm~\ref{algo:seq-mismatch}.
The sequential procedure is rather natural: it consists of initiating a variable
$\hat{Z}$ at $Z$, and iteratively
looking for a vertex
$v\in V$ and a non-zero dual tensor codeword $x_v$ supported by $Q(v)$, such that adding $x_v$
to $\hat{Z}$ decreases its weight by a sufficient amount. The decoder then
decomposes $x_v$ as $x_v=c_v+r_v$ with $\|c_v\|+\|r_v\|$ minimum, and increments
$\hat{C}_j$ by $c_v$ and $\hat{R}_i$ by $r_v$, where $i,j$ are such that $v\in V_{ij}$.
Iterations continue until $\hat{Z}=0$.

The parallel decomposition procedure uses the fact that the local views of
vertices of a given $V_{ij}$ have disjoint supports, and one may therefore
update these local views in parallel without ever risking conflicting
instructions. It follows that one round of parallel decoding will consist of four
consecutive parallel substeps, one for every set $V_{ij}$. During a parallel
substep, every vertex $v$ of the relevant $V_{ij}$ will do the following: if it
finds that there exists a non-zero dual tensor codeword $x_v=c_v+r_v$ (with, as
before, $\|c_v\|+\|r_v\|$ minimum among possible decompositions of $x_v$), such
that $|\hat{Z}|-|\hat{Z}+x_v|\geq |x_v|/2$, then, if there are several choices
for $x_v$ it chooses one that maximises $|x_v|$, and it updates (the local views of) $\hat{C}_j$ by $c_v$
and $\hat{R}_i$ by $r_v$, as in the sequential case, so that $\hat{Z}$ is also
updated to $\hat{Z}+x_v$. 
Every vertex $v$ of $V_{ij}$ does this procedure in parallel, after which the
decoder repeats the procedure with another set $V_{i'j'}$. When the four sets of
vertices $V_{00},V_{01},V_{10},V_{11}$ have been dealt with as described, the
round of parallel decoding is complete.

The whole procedure is summarised in Algorithm~\ref{algo:mismatch}.
Note that, for the sake of readability, the algorithm is described with a
failure detecting mechanism that is not truly part of the parallel procedure.
The fully parallel decoder will simply apply a predetermined number of decoding
rounds.

\begin{algo}{General decoding procedure}
\Input{a pair of syndromes $(s^x, s^z)$}
\Output{Either {\bf failure} or a vector $(\hat{\error}^x, \hat{\error}^z) \in \F_2^{2Q}$ with syndrome $(s^x, s^z)$.}
\label{algo:decoder}

\BlankLine
\tcc{\AL{Beginning of preprocessing phase.}}
	Set $Z^x= 0$, $Z^z = 0$, $\hat{\error}^x=0, \hat{\error}^z=0$.\;
	\For{$v \in V_1$}
	{
		Set $\eps_v$ to be the error of minimum weight with local syndrome coinciding with $s^x$.\;
		$Z^x \leftarrow Z^x + \eps_v$.\;
	}
	\For{$v \in V_0$}
	{
		Set $\eps_v$ to be the error of minimum weight with local syndrome coinciding with $s^z$.\;
		$Z^z \leftarrow Z^z + \eps_v$.\;
	}
\tcc{\AL{End of preprocessing phase.}}
	Compute $(\hat{C}_0^x, \hat{R}_0^x, \hat{C}_1^x, \hat{R}_1^x) = \text{mismatch-decomposition}(Z^x)$. \tcc{\AL{Calls Algorithm 2 or 3}} 
	Compute $(\hat{C}_0^z, \hat{R}_0^z, \hat{C}_1^z, \hat{R}_1^z) = \text{mismatch-decomposition}(Z^z)$.  \tcc{\AL{Calls Algorithm 2 or 3}} 
\tcc{\AL{Beginning of postprocessing phase.}}
	\For{$v \in V_{10}$}
	{
		Set $e^x_v = \eps_v + c_v + r_v$ where $c_v$ and $r_v$ are the local values of $\hat{C}_0^x$ and $\hat{R}_1^x$.\; 
		}
	\For{$v \in V_{00}$}
	{
		Set $e^z_v = \eps_v + c_v + r_v$ where $c_v$ and $r_v$ are the local values of $\hat{C}_{0}^z$ and $\hat{R}_{0}^z$.\; 
	}
	{\bf return $\hat{\error}^x$, $\hat{\error}^z$ corresponding to the decompositions $(e^x_v)_{v\in V_{10}}$, $(e^z_v)_{v\in V_{00}}$.}

\tcc{\AL{End of postprocessing phase.}}
\end{algo}

\begin{algo}{Sequential mismatch decomposition procedure (with parameter $\eps$)}
\Input{a mismatch $Z \in \F_2^{Q}$}
\Output{Either {\bf failure} or $(\hat{C}_0, \hat{R}_0, \hat{C}_1, \hat{R}_1)$ such that $\hat{C}_0 + \hat{R}_0 + \hat{C}_1 + \hat{R}_1 = Z$.}
\label{algo:seq-mismatch}

\BlankLine
	Set $\hat{C}_0=0, \hat{R}_0=0, \hat{C}_1=0, \hat{R}_1 = 0$ and $\hat{Z}=Z$.\;
	\While{$\hat{Z} \ne 0$}
	{
		{\bf if} $\exists v \in V_{ij}, c_v + r_v$ such that  $|\hat{Z}| - |\hat{Z} + c_v+r_v| \geq (1-\eps) |c_v+r_v|$, choosing $c_v,r_v$ such that $\|c_v\|+\|r_v\|$ is minimum among
$c_v,r_v$ such that $x_v=c_v+r_v$, \;
		\tcc{\AL{with $c_v \in C_A \otimes \F_2^B, r_v \in \F_2^A \otimes C_B$ for $\sigma_x$-type errors, and $c_v \in C_A^\perp \otimes \F_2^B, r_v \in \F_2^A \otimes C_B^\perp$ for $\sigma_z$-type errors.}}
		{\bf then} \;
		$\hat{C}_j \leftarrow \hat{C}_j + c_v$\;
		$\hat{R}_i \leftarrow \hat{R}_i + r_v$\;
		$\hat{Z} \leftarrow \hat{Z} + c_v+r_v$ \;
		{\bf else return Failure}.\;
	}
	{\bf return $(\hat{C}_0, \hat{R}_0, \hat{C}_1, \hat{R}_1)$.}
\end{algo}

\begin{algo}{Parallel mismatch decomposition procedure}
\Input{a mismatch $Z \in \F_2^{Q}$}
\Output{Either {\bf failure} or $(\hat{C}_0, \hat{R}_0, \hat{C}_1, \hat{R}_1)$ such that $\hat{C}_0 + \hat{R}_0 + \hat{C}_1 + \hat{R}_1 = Z$.}
\label{algo:mismatch}

\BlankLine
	Set $\hat{C}_0=0, \hat{R}_0=0, \hat{C}_1=0, \hat{R}_1 = 0$ and $\hat{Z}=Z$.\;
	\While{$\hat{Z} \ne 0$}
	{       
                $Temp=\hat{Z}$.\;
                \For{$(i,j)\in\{(00),(01),(10),(11)\}$}
                {
			\For{$v \in V_{ij}$,} 
			{	\If{$\exists \, x_v=c_v+ r_v\neq 0$ such that $|\hat{Z}|
-|\hat{Z}+x_v| \geq |x_v|/2$}
				{
					update $\hat{C}_{j} \leftarrow
\hat{C}_{j} +c_v$, $\hat{R}_{i} \leftarrow \hat{R}_{i} + r_v$,
$\hat{Z}\leftarrow\hat{Z}+x_v$, choosing $|x_v|$ maximum among all possible
choices, and $c_v,r_v$ being chosen such that $\|c_v\|+\|r_v\|$ is minimum among
$c_v,r_v$ such that $x_v=c_v+r_v$.\;
				}
			}

                }
		{\bf if} $\hat{Z}=Temp$ {\bf return Failure}.\;
	}

	{\bf return $(\hat{C}_0, \hat{R}_0, \hat{C}_1, \hat{R}_1)$.}
\end{algo}

\section{Sequential decoding}
\label{sub:proof}

Both the sequential decoder and the parallel decoder look for a
decomposition of the mismatch vector $Z$ as a sum, over a subset of
vertices $v$, of local dual tensor codewords $c_v+r_v$ supported by
$Q$-neighbourhoods $Q(v)$. The mismatch decomposition procedure is the core of
the decoding algorithm, both in the sequential and in the parallel case.

The sequential mismatch decomposition procedure is parameterised by a constant
$\eps \in (0,1)$
and proceeds in the natural way: it looks for a vertex $v \in V$ together with a
local dual tensor codeword $x_v = c_v+r_v$ (with $c_v \in C_A \otimes \F_2^B$,
$r_v \in \F_2^A \otimes C_B$) such that flipping $x_v$ decreases the Hamming weight of the mismatch by
at least $(1-\eps) |x_v|$, in other words, such that $|\hat{Z}+x_v|\leq |\hat{Z}|-(1-\eps)|x_v|$, where
$\hat{Z}$ is the current value of the mismatch, initiated at the original
mismatch vector $Z$. It then proceeds
by updating the current mismatch value to $\hat{Z}=\hat{Z}+x_v$, and continues
in this way until $\hat{Z}=0$, at which point it outputs the sum of the updates
$x_v$, which equals $Z$.
Theorem \ref{thm:main} below states that the required local codeword $x_v$
always exists for a given mismatch $\hat{Z}$, provided a minimum decomposition
of $\hat{Z}$ has sufficiently few active vertices. We now explain what this
means.

We may reorganise a decomposition $\hat{Z}=\sum_{v\in V} c_v+r_v$ of $\hat{Z}$ into
a sum of local
dual tensor codewords as 
\begin{equation}
\label{eq:decomp}
\hat{Z} = \hat{C}_0 + \hat{R}_0 + \hat{C}_1 +\hat{R}_1,
\end{equation}
where $\hat{C}_i=\sum_{v\in V_{ii}\cup V_{\bari i}}c_v$ and 
$\hat{R}_j=\sum_{v\in V_{jj}\cup V_{j\barj}}r_v$.
Now we may decompose $\hat{C}_0$ as a sum of of $c_v$'s, for $v$ restricted
to $V_{00}$ (as opposed to $v$ ranging over $V_{00}\cup V_{10}$), in which case
this decomposition is unique, and  
we shall
denote $\|\hat{C}_0\|$ the sum of all corresponding $\|c_v\|$s, where we recall that
$\|c_v\|$ denotes the number of individual column vectors that make up the local
codeword in $C_A\otimes \F_2^B$. 
Note that we obtain the same value $\|\hat{C}_0\|$, if we decompose
$\hat{C}_0$ as a sum of $c_v$'s, for $v$ ranging in $V_{10}$.
The quantities $\|\hat{C}_1\|$, $\|\hat{R}_0\|$, $\|\hat{R}_1\|$ are defined
similarly. We now define the {\em norm} $\|\hat{Z}\|$ of $\hat{Z}$ to be
the minimum value of 
\begin{equation}
\label{eq:norm}
\|\hat{C}_0\| + \|\hat{R}_0\| + \|\hat{C}_1\| +\|\hat{R}_1\|
\end{equation}
over all possible decompositions \eqref{eq:decomp} of $\hat{Z}$. We shall say
that a decomposition \eqref{eq:decomp} of $\hat{Z}$ is minimum if
\eqref{eq:norm} equals $\|\hat{Z}\|$.
Finally, we shall say that for a decomposition \eqref{eq:decomp}, the
vertex $v$ in $V_{ij}$ is {\em active}, if $\hat{C}_j+\hat{R}_i$ is non-zero on $Q(v)$.
We may now state Theorem~\ref{thm:main}, which is the core technical result
for the decoder analysis:
\begin{theo}
\label{thm:main}
Fix $\eps \in (0,1)$. 
If, for $i,j\in\{0,1\}$, the sets of active vertices $S_{ij}\subset V_{ij}$ for a minimum decomposition of a
mismatch vector $Z$ satisfy
\[
|S_{ij}|\leq \frac{1}{2^{12}}\delta^2\eps^3\kappa |V_{00}|
\]
then there exists some vertex $v$ and some codeword $x_v$ of the dual tensor code such that 
\[ |Z| - |Z + x_v| \geq (1-\eps) |x_v|.\]
\end{theo}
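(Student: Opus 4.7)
The proof will follow the same blueprint as Lemma~\ref{lem:norm}, but refined so as to pinpoint a single vertex whose local dual tensor codeword almost entirely cancels $Z$ upon flipping, rather than just producing a global weight bound. The tools remain robustness of the local dual tensor codes, expansion of $\G_0^\square$ and $\G_1^\square$, and expansion of the double covers of $\Cay(G,A)$ and $\Cay(G,B)$.

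I would first fix a minimum decomposition $(\hat{C}_0,\hat{R}_0,\hat{C}_1,\hat{R}_1)$ of $Z$, yielding local contributions $(c_v,r_v)$ at each $v\in V$. For $v\in V_{ij}$ set $x_v:=c_v+r_v$ and decompose the local view as $Z|_{Q(v)} = x_v + y_v$, where $y_v$ collects the codeword pieces brought to $Q(v)$ by the $B$-neighbors of $v$ in $V_{\bari j}$ (on shared columns) and the $A$-neighbors in $V_{i\barj}$ (on shared rows). A direct inclusion--exclusion gives
\[
|Z|-|Z+x_v|\;=\;|x_v|-2\,|\supp(x_v)\cap\supp(y_v)|,
\]
so the theorem amounts to exhibiting some active $v$ for which the overlap $u_v:=|\supp(x_v)\cap\supp(y_v)|$ is at most $(\eps/2)|x_v|$. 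The natural candidate codeword to flip at $v$ is exactly $x_v$ coming from the minimum decomposition.

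To identify such a vertex, I would classify active vertices as \emph{exceptional} ($\|c_v\|+\|r_v\|\geq \alpha\Delta$ for a small constant $\alpha$ of order $\eps^3\delta^2$, to be tuned) or \emph{ordinary}. Repeating the expander-mixing argument of Lemma~\ref{lem:excep} on $\G_0^\square$ and $\G_1^\square$, and using minimality plus robustness to guarantee a large induced degree at exceptional vertices, the hypothesis $|S_{ij}|\leq \tfrac{1}{2^{12}}\delta^2\eps^3\kappa|V_{00}|$ forces exceptional vertices to be a vanishing fraction of $|S_{ij}|$ for $\Delta$ large. Hence a constant proportion of active vertices is ordinary, and for such $v$ the columns of $c_v$ and the rows of $r_v$ overlap on at most $\|c_v\|\|r_v\|\leq(\alpha\Delta)^2/4$ squares, so that the distance of $C_A,C_B$ yields
\[
|x_v|\;\geq\;(1-O(\alpha/\delta))\,\delta\Delta(\|c_v\|+\|r_v\|).
\]

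I would then argue by contradiction: suppose $u_v>(\eps/2)|x_v|$ at every ordinary active vertex, and sum over, say, the ordinary active $v\in V_{10}$. Combined with the weight bound above this produces a lower bound on $\sum_v u_v$ of order $\eps\delta\Delta\sum_v(\|c_v\|+\|r_v\|)$. On the other hand, each unit of overlap at $v$ is witnessed by an \emph{active neighbor} of $v$ (a $B$-neighbor in $V_{00}$ or $A$-neighbor in $V_{11}$ hosting a non-zero codeword piece on the common column or row); counting such active-neighbor pairs via the expander mixing lemma on the double covers of $\Cay(G,A)$ and $\Cay(G,B)$, and plugging in the $|S_{ij}|$ hypothesis, will give a matching upper bound. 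The hard part, and the reason for the precise $\eps^3\delta^2\kappa$ factor in the hypothesis, lies exactly in this last counting: the column and row halves of $y_v$ versus $x_v$ have to be tracked separately, the possible cancellations with $r_v$ (resp.\ $c_v$) on rows (resp.\ columns) coming from $y_v^c$ (resp.\ $y_v^r$) must be carefully absorbed using that ordinary vertices carry few codeword pieces, and the constants from robustness, from the distance of $C_A,C_B$, and from the mixing lemma must combine to just barely close the contradiction under the stated hypothesis.
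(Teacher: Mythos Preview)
Your overall scaffolding—the minimum decomposition, the candidate flip $x_v=c_v+r_v$, the identity $|Z|-|Z+x_v|=|x_v|-2\,|x_v\cap y_v|$ with $y_v=(C_{\barj}+R_{\bari})|_{Q(v)}$, and the ordinary/exceptional split together with the expander-mixing bound on exceptional vertices—matches the paper through its Lemma~\ref{lem:Se}. The gap lies in the final averaging step.

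A direct double count of $\sum_{v\in S_{10}^{\mathrm{ord}}} u_v$ against active-neighbor pairs does not close. The lower bound you obtain is of order $\eps\kappa\Delta\,|S_{10}|$. For the upper bound, whether you route the count through $\G_1^\square$ (each square in $\supp(y_v)$ is witnessed by its $V_{01}$-vertex lying in $S_{01}$, and $\lambda(\G_1^\square)=4\Delta$) or through the Cayley double covers (each $B$-edge between $S_{00}$ and $S_{10}$ carries $\Delta$ squares on the shared column), the resulting bound comes out of order at least $\Delta\,|S_{10}|$, independently of the $|S_{ij}|$ hypothesis. No tuning of $\alpha$ rescues this: for instance, bounding $|R_0|\leq\Delta\|R_0\|\lesssim\alpha\Delta^2|S_{00}|$ would force $\alpha=O(\eps\kappa/\Delta)$, which is not a constant. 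The contradiction hypothesis ``$u_v>(\eps/2)|x_v|$ for all ordinary $v$'' on the full local codeword $x_v$ is simply too weak on its own to beat the spectral term.

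The paper gets around this by exploiting a stronger form of the stalled hypothesis: if no dual-tensor codeword anywhere achieves the $(1-\eps)$ decrease, then in particular no \emph{single column codeword} does, giving $|y\cap(C_{\barj}+R_{\bari})|\geq(\eps/4)|y|$ for every ordinary column $y$ (second claim of Lemma~\ref{lem:stalled}). This per-column bound is then fed into expander mixing on the double cover of $\Cay(G,A)$—with $\lambda=2\sqrt\Delta$, not $4\Delta$—to show that the $\Omega(|S_{10}|)$ ordinary columns of $S_{10}$ must land in a set $T\subset S_{00}$ of size $O(|S_{01}|/(\delta^2\eps^2\Delta))$ (Lemma~\ref{lem:T2}). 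Pigeonhole then forces $\Omega(|T|)=\Omega(|S_0|/\Delta)$ exceptional vertices in $S_{00}$, contradicting the $O(|S_0|/\Delta^2)$ bound from Lemma~\ref{lem:Se}. Your proposal is missing both the per-column stalled condition and this clustering step; without them (or an equivalent mechanism that exploits the $\sqrt\Delta$ gap of the Cayley graphs at the level of individual columns or rows), the argument does not reach a contradiction.
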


From Theorem~\ref{thm:main}, will follow Theorem~\ref{thm:sequential},
which states that if the weight $|\error|$ of the
original error $\error$ is sufficiently small, then the number of active
vertices of a minimum decomposition of $\hat{Z}$ will satisfy the hypothesis of
Theorem~\ref{thm:main} throughout the sequential decoding procedure, meaning
that the sequential mismatch decomposition procedure will always converge.

\begin{theo}
\label{thm:sequential}
Fix $\eps\in (0,1)$.
If the Hamming weight $|\error|$ is less than 
\[
\frac{1}{2^{11}}\min\Big(\frac{\eps^3}{16}, \kappa \Big)(1-\eps)\delta^2\kappa^2 \frac{n}{\Delta},
\]
then the sequential decoder with parameter $\eps$ returns a valid correction $\hat{\error}$, namely a correction equivalent to $\error$ (differing from $\error$ by an element of the stabilizer group).
\end{theo}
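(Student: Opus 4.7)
My plan is to deduce Theorem~\ref{thm:sequential} from Theorem~\ref{thm:main} by maintaining throughout the sequential loop the invariant that the active-vertex sets $S_{ij}$ of a minimum decomposition of the running mismatch $\hat{Z}$ all satisfy $|S_{ij}| \leq \frac{1}{2^{12}}\delta^2\eps^3\kappa|V_{00}|$. Once the procedure terminates with $\hat{Z}=0$, I will argue that the reconstructed $\hat{\error}$ differs from $\error$ by at most an element of the stabiliser, using the minimum-distance bound of Theorem~\ref{thm:dist}.

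First, I will control the initial mismatch $Z=\sum_{v\in V_1}\eps_v$. Since $\eps_v$ is a minimum-weight error with the same local syndrome as the restriction $e_v$ of $\error$, we have $|\eps_v|\leq|e_v|$; as each square belongs to exactly two vertices of $V_1$, summing gives $|Z|\leq 2|\error|$. Writing $e_v=\eps_v+c_v+r_v$ as in the construction yields a decomposition of $Z$ supported only on $V_1$, and applying the $\kappa$-robustness of the local dual tensor code to each $c_v+r_v$ bounds the initial norm by $\|Z\|\leq 4|\error|/(\kappa\Delta)$.

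Next, I will monitor $\|\hat{Z}\|$ throughout the execution. Each update at $v_s\in V_{i_sj_s}$ modifies the running decomposition by adding $c_{v_s}$ to $\hat{C}_{j_s}$ and $r_{v_s}$ to $\hat{R}_{i_s}$, so the decomposition's norm grows by at most $\|c_{v_s}\|+\|r_{v_s}\|$. The update property $|\hat{Z}_{s-1}|-|\hat{Z}_s|\geq(1-\eps)|x_{v_s}|$ telescopes to $\sum_s(1-\eps)|x_{v_s}|\leq|Z|\leq 2|\error|$, and local robustness gives $|x_{v_s}|\geq\kappa\Delta(\|c_{v_s}\|+\|r_{v_s}\|)$. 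Consequently the total growth is at most $2|\error|/((1-\eps)\kappa\Delta)$, so $\|\hat{Z}\|\leq 6|\error|/((1-\eps)\kappa\Delta)$ at every step. Since $|S_{ij}|\leq\|\hat{C}_j\|+\|\hat{R}_i\|\leq\|\hat{Z}\|$ and $|V_{00}|=n/\Delta^2$, the hypothesis of Theorem~\ref{thm:main} is preserved as soon as $|\error|$ is smaller than a constant times $\eps^3(1-\eps)\delta^2\kappa^2n/\Delta$, producing the $\eps^3/16$ branch of the stated bound.

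Since $|\hat{Z}|$ is a strictly decreasing non-negative integer, the algorithm terminates with $\hat{Z}=0$, producing a decomposition $Z=\hat{C}_0+\hat{R}_0+\hat{C}_1+\hat{R}_1$. The post-processing returns $\hat{\error}$ whose local views coincide with $\error$'s up to a local dual tensor codeword, so $\hat{\error}+\error\in\C_1$. The main subtlety — and the step I expect to be the real obstacle — is to show $\hat{\error}+\error\in\C_0^\perp$, which is what forces the $\kappa$ branch of the minimum to appear. I will bound $|\hat{\error}+\error|$ using $|c_{v_s}|+|r_{v_s}|\leq\Delta(\|c_{v_s}\|+\|r_{v_s}\|)$ together with the telescoping estimate above, obtaining $|\hat{C}_0|+|\hat{R}_0|+|\hat{C}_1|+|\hat{R}_1|\leq 2|\error|/((1-\eps)\kappa)$, hence $|\hat{\error}+\error|$ of order $|\error|/((1-\eps)\kappa)$. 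Requiring this to be strictly less than $d_{\min}=\delta^2\kappa^2n/(256\Delta)$ from Theorem~\ref{thm:dist} forces the correction into $\C_0^\perp$ and yields the $\kappa$ branch; taking the minimum of the two regimes gives the announced bound.
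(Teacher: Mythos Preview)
Your proposal is correct and follows essentially the same approach as the paper: bound $|Z|$ and $\|Z\|$ via Lemma~\ref{lem:preproc}, track $\|\hat{Z}\|$ through the updates using robustness and the telescoping sum $\sum_s(1-\eps)|x_{v_s}|\leq|Z|$, use $|S_{ij}|\leq\|\hat{Z}\|$ to maintain the hypothesis of Theorem~\ref{thm:main}, and finally bound $|\error+\hat{\error}|$ against $d_{\min}$ from Theorem~\ref{thm:dist}. Your bound $|Z|\leq 2|\error|$ is in fact tighter than the paper's $|Z|\leq 4|\error|$ (the paper bounds $|Z|$ via $\sum_{v\in V_1}|x_v|$ rather than $\sum_{v\in V_1}|\eps_v|$), but this only affects constants and the argument is otherwise the same.
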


\subsection{Proof of Theorem \ref{thm:main}}
\label{sub:thm7}

The proof strategy follows the same footsteps as the proof of the minimum
distance. When studying the minimum distance, we analysed the set $C_j + R_i$,
which coincided with $C_{\overline{\jmath}} + R_{\overline{\imath}}$. 
For the decoding, we start by identifying the mismatch $Z$ associated with the
error and take a minimum decomposition $C_0 + R_0+C_1+R_1$. The relevant set is
now the support of $(C_j + R_i)\cap(C_{\barj} \cap R_{\bari})$, which is smaller than the set considered when analysing the minimum distance. 
However, under the assumption that the sequential decoder is stalled, this set cannot be too small, and essentially the same techniques as before will allow us to arrive at a contradiction.

\subsubsection{A stalled sequential decoder, Exceptional vertices, ordinary rows and columns}

We consider a $\sigma_x$-type error $\error$ 
and define its associated mismatch $Z$. 
We work with a minimal decomposition of $Z$:
\[ Z = C_0 + R_0 + C_1+R_1,\]
meaning that the quantity $\|C_0\| + \|R_0\| + \|C_1\|+\|R_1\|$ is minimal. 
To each vertex $v \in V$, this decomposition associates codewords $c_v \in C_A \otimes \F_2^B$ and $r_v \in \F_2^A \otimes C_B$. 
We say that a vertex $v\in V_{ij}$ is an \emph{active vertex} if $c_v + r_v \ne 0$, \textit{i.e.}\
if $C_j+R_i$ is non-zero on $Q(v)$, and we
denote by $S_{ij}$ the sets of active vertices in $V_{ij}$. 

The sequential decoder with parameter $\eps$ searches for some vertex $v \in V$
and a dual tensor codeword $x_v \in (C_A^\perp \otimes C_B^\perp)^\perp = C_A
\otimes \F_2^B + \F_2^A \otimes C_B$ such that flipping $x_v$ decrease the
Hamming weight of $Z$ by at least $(1-\eps) |x_v|$. 
To prove Theorem \ref{thm:main} we will assume there is no such vertex and work
towards a contradiction.

We will follow the blueprint of Section~\ref{sec:distance}, and define
exceptional and ordinary vertices of $S_{ij}$ as before, namely a vertex 
$v\in S_{ij}$ is said to be exceptional, if the local dual tensor codeword
$x_v=c_v+r_v$,
equal to the restriction of $C_j+R_i$ to $Q(v)$, satisfies
$\|c_v\|+\|r_v\|\geq\alpha\Delta$.
Here we will take 
$\alpha=\frac{1}{2^{10}}\delta^2\eps^2$.
The set of exceptional vertices of $S_{ij}$ is denoted by $S_{ij}^e$ and
non-exceptional vertices are called {\em ordinary}. 
Let us furthermore call an ordinary column (row) of $v\in S_{ij}$ a column (row) of the
$Q$-neighbourhood $Q(v)$ on which $C_j$ ($R_i$) is non-zero, and for a vertex
$v$ that is ordinary. Note that when talking about ordinary columns (or rows) it is
important to specify for which vertex $v$, since this column appears in two
different local views for two different vertices, and may be ordinary for one
vertex and not for the other.

\begin{lemma} \label{lem:stalled} Assume that the sequential decoder of parameter $\eps$ is stalled.
For all $v \in S_{ij}$, and
all dual tensor codewords $x_v$, components of $C_j+R_i$, we have
\[ |x_v \cap (C_{\barj}+R_{\bari})| \geq \frac{\eps}{2} |x_v|.\]
Furthermore, let $y_v$ be the subvector of $C_j$ supported
by some ordinary column for $v\in S_{ij}$. Then 
\[|y_v\cap (C_{\barj}+R_{\bari})|\geq
\frac{\eps}{4}|y_v|.
\]
\end{lemma}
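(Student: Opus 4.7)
The overall approach is to derive both inequalities as direct consequences of the stalling hypothesis, carefully accounting for the structure of the minimum decomposition of $Z$ and for the small cancellations that may occur in the local view of an ordinary vertex.

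First I would convert the stalling hypothesis into a usable inequality. For any nonzero dual tensor codeword $x_v$ supported on $Q(v)$, the failure of the sequential decoder to decrease $|Z|$ by at least $(1-\eps)|x_v|$ when flipping $x_v$, combined with the identity $|Z + x_v| = |Z| + |x_v| - 2|Z \cap x_v|$, gives
\[
|x_v \setminus Z| \;=\; |x_v| - |x_v \cap Z| \;\geq\; \tfrac{\eps}{2}|x_v|.
\]
This is the only place where the stalling hypothesis enters; the rest is bookkeeping on the minimum decomposition $Z = (C_j + R_i) + (C_{\barj} + R_{\bari})$.

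For the first inequality, I read ``component of $C_j + R_i$'' as meaning that $x_v$ appears in a disjoint-support decomposition of the local view $c_v + r_v$ of $C_j + R_i$ at $v$ into two dual tensor codewords; in particular $\supp(x_v) \subseteq \supp(c_v + r_v)$. On $\supp(x_v)$ the local view $(C_j + R_i)$ therefore equals $1$ pointwise, so $Z = 1 + (C_{\barj} + R_{\bari})$ on this set, and a position of $\supp(x_v)$ satisfies $Z=0$ iff $(C_{\barj} + R_{\bari}) = 1$ there. Hence $|x_v \setminus Z| = |x_v \cap (C_{\barj} + R_{\bari})|$ and combining with the step above yields the first bound.

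For the second inequality, I apply the same stalling bound to $y_v$, which is itself a dual tensor codeword (it lies in $C_A \otimes \F_2^B$), giving $|y_v \setminus Z| \geq (\eps/2)|y_v|$. Unlike in the first inequality, $\supp(y_v)$ need not lie inside $\supp(c_v + r_v)$: on the column carrying $y_v$ the local view equals $y_v + r_v|_{\mathrm{col}}$, and cancellations occur at positions where $r_v|_{\mathrm{col}} = 1$. By the ordinary hypothesis on $v$, $\|r_v\| \leq \alpha\Delta$; since each nonzero row contributes at most one entry to a given column, $r_v|_{\mathrm{col}}$ has weight at most $\alpha \Delta$. Partitioning $\supp(y_v)$ according to the value of $r_v|_{\mathrm{col}}$ and repeating the argument of the first inequality on the ``good'' part gives
\[
|y_v \setminus Z| \;\leq\; |y_v \cap (C_{\barj} + R_{\bari})| + \alpha \Delta.
\]
Combining this with $|y_v \setminus Z| \geq (\eps/2)|y_v|$, the minimum distance estimate $|y_v| \geq \delta \Delta$, and the choice $\alpha = \delta^2 \eps^2 / 2^{10}$ absorbs the error term into $(\eps/4)|y_v|$, from which the claim follows.

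The only real subtlety is the last step: one must notice that a single column codeword of $c_v$ is already a dual tensor codeword (so that stalling applies to it directly), and then check that the cancellation against $r_v$ on that column is small enough to preserve an $\eps/4$ bound. The quadratic dependence of $\alpha$ on $\delta$ and $\eps$ in the definition of ``ordinary'' is tailored precisely to fit this cancellation inside the slack between the $\eps/2$ produced by stalling and the target $\eps/4$.
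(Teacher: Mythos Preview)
Your proof is correct and follows essentially the same approach as the paper. Both arguments convert the stalling hypothesis into the inequality $|x_v\setminus Z|\geq(\eps/2)|x_v|$ (equivalently $2|Z\cap x_v|\leq(2-\eps)|x_v|$), use $\supp(x_v)\subseteq\supp(C_j+R_i)$ to identify $x_v\setminus Z$ with $x_v\cap(C_{\barj}+R_{\bari})$ for the first claim, and for the second claim bound the overlap $y_v\cap R_i$ on the column by $\alpha\Delta$ via the ordinary hypothesis, absorbing it using $|y_v|\geq\delta\Delta$; the paper packages this last step by setting $y'=y_v\cap R_i$ and working with $y_v+y'\subset C_j+R_i$, while you phrase it as a partition of $\supp(y_v)$, but the computations are identical.
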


\begin{proof}
Note that for any two binary vectors $x,z$, identifying them with their supports
we have that $|z|-|z+x|\leq (1-\eps)|x|$ is equivalent to $2|z\cap x|\leq
(2-\eps)|x|$, since $|z+x|=|z|-2|z\cap x|+|x|$.
Since the decoder is stalled, we have 
\[ |Z| - |Z + x_v| \leq (1-\eps) |x_v|\]
which therefore gives 
\[2 |Z \cap x_v|\leq (2-\eps) |x_v|.\]
Note that $x_v \subset C_j+R_i$ 
and therefore $Z
\cap x_v = x_v + ((C_{\barj}+R_{\bari})\cap x_v)$ and $|Z\cap x_v| = |x_v| -
|(C_{\barj}+R_{\bari}) \cap x_v|$. Combining this with the previous inequality proves
the first claim of the Lemma.

To prove the second claim of the Lemma we argue that, since $y_v$ is an ordinary column
vector,  that there is some
$y'\subset y_v$ such that $|y'|\leq \alpha\Delta$ and $y_v+y'\subset C_j+R_i$.
Specifically, $y'$ is supported by the intersection of $y_v$ and $R_i$.
From our choice of $\alpha$ we clearly have $\alpha\Delta\leq
\frac{\eps}{2}\delta\Delta$, so that $|y'|\leq \frac{\eps}{4}|y_v|$.
We have
\[
2 |Z \cap y_v|\leq (2-\eps) |y_v|,
\]
otherwise we could decode at vertex $v$ by flipping $y_v$.
Since $y_v+y'\subset y_v$, we can write $2 |Z \cap (y_v+y')|\leq (2-\eps) |y_v|$,
and since $y_v+y'\subset C_j+R_i$, we have $|Z\cap (y_v+y')| = |y_v+y'| -
|(C_{\barj}+R_{\bari}) \cap (y_v+y')|$, whence
\[
2|y_v|-2|y'| -(2-\eps)|y_v| \leq 2|(C_{\barj}+R_{\bari}) \cap (y_v+y')|\leq
2|(C_{\barj}+R_{\bari}) \cap y_v|
\]
which proves the claim, since $|y'|\leq \frac{\eps}{4}|y_v|$.
\end{proof}

Recall that minimality of the representation $(C_0,R_0,C_1,R_1)$ and the robustness property (Lemma \ref{thm:robust}) imply that
\[ |c_v + r_v| \geq \kappa (\|c_v\| + \|r_v\|) \Delta\]
whenever $c_v+r_v$ is the local representation of $C_j+R_i$ at $v\in S_{ij}$. 
In particular, an exceptional vertex is such that $|c_v + r_v| \geq \alpha \kappa \Delta^2$. As before we denote by $S_{ij}^e$ the set of exceptional vertices in $S_{ij}$.

\subsubsection{Exceptional vertices are rare}

\begin{lemma}\label{lem:Se} 
Let $i=0,1$. Under the hypothesis $|S_{ij}|\leq \frac{\alpha\kappa\eps}{4}|V_{00}|$, we have 
\[
|S_{ii}^e|\leq \frac{2^{8}}{\alpha^2 \kappa^2 \eps^2}\frac{1}{\Delta^{2}}|S_{0}|
\quad\text{and}\quad |S_{i\bari}^e|\leq  \frac{2^{8}}{\alpha^2 \kappa^2 \eps^2}\frac{1}{\Delta^{2}}|S_{1}|.
\]
\end{lemma}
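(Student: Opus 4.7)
The plan is to mimic the proof of Lemma~\ref{lem:excep} from the minimum-distance analysis, with one crucial refinement tied to the stalled-decoder hypothesis. In Lemma~\ref{lem:excep} one fed the entire local codeword $c_v+r_v$ (of weight $\geq \alpha\kappa\Delta^2$ by robustness) into the expander mixing lemma. Here I cannot afford to do that, since a priori nothing ties $c_v+r_v$ to $C_{\barj}+R_{\bari}$. However, Lemma~\ref{lem:stalled} (which precisely requires the decoder to be stalled) guarantees that a fraction $\eps/2$ of $c_v+r_v$ already lies inside $C_{\barj}+R_{\bari}$. This is the ingredient that produces the extra factor $\eps^{-2}$ (and the upgrade of the numerical constant from $64$ to $2^8$) relative to Lemma~\ref{lem:excep}.

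Concretely, fix $v\in S_{ii}^e$. By robustness applied to the minimal representation, $|x_v|=|c_v+r_v|\geq \alpha\kappa\Delta^2$, and the first claim of Lemma~\ref{lem:stalled} then yields
\[
|x_v\cap(C_{\barj}+R_{\bari})| \geq \tfrac{\eps}{2}|x_v| \geq \tfrac{\alpha\kappa\eps}{2}\Delta^2.
\]
Each square in $Q(v)\cap(C_{\barj}+R_{\bari})$ is an edge of the bipartite graph $\G_i^\square$ joining $v$ to its unique partner in $V_{\bari\bari}$, and since $C_{\barj}+R_{\bari}$ is non-zero on that square, the partner must itself lie in $S_{\bari\bari}$. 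Summing over all exceptional $v$ gives the edge-count lower bound
\[
|S_{ii}^e|\,\tfrac{\alpha\kappa\eps}{2}\Delta^2 \;\leq\; |E(S_{ii}^e,S_{\bari\bari})|.
\]

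I then plug this into Lemma~\ref{lem:mixing} applied to the $\Delta^2$-regular graph $\G_i^\square$, using $\lambda(\G_i^\square)\leq 4\Delta$ from Lemma~\ref{lem:lambda}. The hypothesis $|S_{\bari\bari}|\leq \tfrac{\alpha\kappa\eps}{4}|V_{00}|$ is calibrated exactly so that the ``average'' term $\tfrac{\Delta^2}{|V_{00}|}|S_{ii}^e||S_{\bari\bari}|$ in the mixing lemma is at most half of the left-hand side above, leaving only the spectral term $4\Delta\sqrt{|S_{ii}^e||S_{\bari\bari}|}$. Squaring and solving for $|S_{ii}^e|$ gives the claimed bound $\tfrac{2^8}{\alpha^2\kappa^2\eps^2\Delta^2}|S_{\bari\bari}|$, which is in particular $\leq \tfrac{2^8}{\alpha^2\kappa^2\eps^2\Delta^2}|S_0|$.

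The bound for $|S_{i\bari}^e|$ is obtained by running exactly the same argument in the other bipartite expander $\G_1^\square$ on $V_{01}\cup V_{10}$: the partner of $v\in S_{i\bari}^e$ lies in $V_{\bari i}$ and, being active, in $S_{\bari i}\subset S_1$. The only real obstacle along the way --- essentially bookkeeping --- is to verify that for $v\in V_{ij}$ the squares in $x_v\cap(C_{\barj}+R_{\bari})$ genuinely correspond to edges of the correct graph $\G_i^\square$ (for $V_0$ vertices) or $\G_1^\square$ (for $V_1$ vertices), with the partner endpoint necessarily of the correct type and necessarily active. This follows from the indexing conventions of Section~\ref{subsec:LRCayley} together with the observation that each square is incident to exactly one vertex of each of the four types $V_{ij}$, so that the $V_0$-partner (resp.\ $V_1$-partner) through whose $Q$-neighbourhood the square passes is uniquely determined.
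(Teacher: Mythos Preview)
Your proposal is correct and follows essentially the same approach as the paper: combine robustness (giving $|c_v+r_v|\geq\alpha\kappa\Delta^2$) with the first claim of Lemma~\ref{lem:stalled} (giving the extra factor $\eps/2$) to lower-bound the degree of each exceptional vertex in the subgraph $(C_j+R_i)\cap(C_{\barj}+R_{\bari})$ of the relevant $\G^\square$ graph, then apply the expander mixing lemma and absorb the average term using the hypothesis $|S_{\bari\bari}|\leq\tfrac{\alpha\kappa\eps}{4}|V_{00}|$. One small notational slip: for $v\in S_{ii}^e\subset V_0$ the relevant graph is always $\G_0^\square$ (not ``$\G_i^\square$''), and for $v\in S_{i\bari}^e\subset V_1$ it is $\G_1^\square$; you do get this right in your closing paragraph.
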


\begin{proof}
The proof follows closely that of Lemma~\ref{lem:excep}.
We prove the upper bound for $S_{00}^e$, the other cases being similar. 
Viewing $(C_0+R_0)\cap(C_1+R_1)$ as a subgraph of $\G_0^\square$, we have that vertices of
$S_{00}^e$ have degree at least $\kappa\alpha\eps\Delta^2/2$, by
Lemma~\ref{lem:stalled} and robustness.

By the Expander mixing Lemma in $\G_0^\square$,
we therefore have
\[
\frac{1}{2} \alpha \kappa \eps \Delta^2|S_{00}^e|\leq |E(S_{00}^e,S_{11})| \leq \Delta^2\frac{|S_{00}^e||S_{11}|}{|V_{00}|} +
4\Delta\sqrt{|S_{00}^e||S_{11}|}.
\]
Writing $|S_{11}|/|V_{00}|\leq \frac{\alpha\kappa\eps}{4}$ by assumption, 
we get
\begin{align*}
\frac{1}{4} \alpha \kappa \eps \Delta^2|S_{00}^e|^{1/2} &\leq
4\Delta|S_{11}|^{1/2}\\
|S_{00}^e| &\leq \frac{2^{8}}{\alpha^2 \kappa^2 \eps^2}\frac{1}{\Delta^{2}}|S_{11}|.
\end{align*}
which proves the claimed result, since $|S_{11}|\leq |S_0|$.
\end{proof}

\subsubsection{Ordinary columns and rows cluster}
\begin{lemma}
\label{lem:T2}
Let $T$ be the set of vertices $v$ of $V_{ij}$ whose $Q$-neighbourhood $Q(v)$
contains a column that is a ordinary column for the neighbouring vertex $w\in
V_{\bari j}$ whose local view shares this very column.
If we have $|S_{i\barj}|\leq
\delta\eps |V_{i\barj}|/8$, then 
\[
|T|\leq \frac{256}{\delta^2\eps^2\Delta}|S_{i\barj}|.
\]
Exactly the same result holds for the set $T\subset S_{\bari\barj}$ of vertices on whose $Q$-neighbourhood
$R_{\bari}$ displays a non-zero row codeword of $C_B$ that is shared by the local view
of an ordinary vertex $w\in S_{\bari j}$. 
\end{lemma}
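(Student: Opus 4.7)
The plan is to mimic the proof of Lemma~\ref{lem:T}, replacing its $\delta\Delta/2$ per-column lower bound (valid there because $\x$ was a Tanner codeword) by the $\eps\delta\Delta/4$ lower bound supplied by the stalled-decoder hypothesis through Lemma~\ref{lem:stalled}. I will carry out the case $i=j=0$ of the first half in detail, so $T\subset V_{00}$ and $S_{i\barj}=S_{01}$; the remaining three cases, and the row-based second half of the statement (which needs the obvious row-version of Lemma~\ref{lem:stalled}'s second claim, proved identically), follow by the same argument after interchanging $A\leftrightarrow B$ and the corresponding vertex classes.

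Fix $v\in T$ and let $b_0$ be a column of $Q(v)$ shared with an ordinary $w\in V_{10}$ on which $C_0$ restricts to a non-zero $C_A$-codeword $y_w$. Applying the second claim of Lemma~\ref{lem:stalled} at $w$, whose local indices are $(\bari,j)=(1,0)$, gives
\[
|y_w\cap(C_{\barj}+R_i)|\;\geq\;\frac{\eps}{4}|y_w|\;\geq\;\frac{\eps\delta\Delta}{4},
\]
using $|y_w|\geq\delta\Delta$. The key structural point is that $C_{\barj}+R_i=C_1+R_0$ is precisely the object whose local view at every $u\in V_{01}$ equals the dual tensor codeword $c_u+r_u$; therefore any non-zero square of $C_1+R_0$ inside $Q(u)$ forces $c_u+r_u\neq 0$, i.e.\ $u\in S_{01}$.

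The $\geq\eps\delta\Delta/4$ squares produced above all lie on the single column $b_0$ and hence on pairwise distinct rows of $Q(v)$. Each such row corresponds, after collapsing, to an edge in the double cover of $\Cay(G,A)$ between $V_{00}$ and $V_{01}$, and its $V_{01}$-endpoint is forced into $S_{01}$ by the previous remark. Summing over $v\in T$ gives
\[
\frac{\eps\delta\Delta}{4}\,|T|\;\leq\;|E(T,S_{01})|.
\]

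To conclude, apply the expander mixing lemma (Lemma~\ref{lem:mixing}) in the double cover of $\Cay(G,A)$, which is $\Delta$-regular Ramanujan with $\lambda\leq 2\sqrt{\Delta}$. The hypothesis $|S_{01}|\leq\delta\eps|V_{01}|/8$ absorbs the leading term $\Delta|T||S_{01}|/|V_{01}|\leq\eps\delta\Delta|T|/8$, so that $\eps\delta\Delta|T|/8\leq 2\sqrt{\Delta}\sqrt{|T||S_{01}|}$, which rearranges into $|T|\leq 256|S_{01}|/(\eps^2\delta^2\Delta)$, as claimed. The only subtle bookkeeping point, which I expect to be the main source of potential confusion, is to apply Lemma~\ref{lem:stalled} at the ordinary neighbour $w\in V_{\bari j}$ (not at $v$) so that the intersection is with $C_{\barj}+R_i$ rather than $C_j+R_{\bari}$: only the former has dual tensor structure on $V_{01}$-local views, and hence only the former lets one read off non-zero rows of $Q(v)$ as active vertices of $S_{i\barj}=S_{01}$, which is precisely why the bound involves $|S_{i\barj}|$.
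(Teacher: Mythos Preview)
Your proof is correct and follows essentially the same approach as the paper: apply the second claim of Lemma~\ref{lem:stalled} at the ordinary vertex $w\in V_{10}$ to obtain $\geq\eps\delta\Delta/4$ squares of $C_1+R_0$ on the shared column, observe that these lie on distinct rows whose $V_{01}$-endpoints are therefore in $S_{01}$, and finish with the expander mixing lemma in the double cover of $\Cay(G,A)$. The paper frames the relevant subgraph slightly differently (as induced by the squares of $(C_1+R_0)\cap(C_0+R_1)$), but the degree lower bound and the arithmetic are identical, and your explicit remark about applying Lemma~\ref{lem:stalled} at $w$ rather than at $v$ is exactly the right bookkeeping observation.
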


\begin{proof}
We follow closely the proof of Lemma~\ref{lem:T}. We deal with the case when $i=j=0$ and $T$ is defined as the set of vertices
of $V_{00}$ whose $Q$-neighbourhoods share a non-zero column vector with an
ordinary vertex of $V_{10}$. The other cases will hold by symmetry.
 
Again we keep only the two sets of vertices $V_{00}$ and $V_{01}$, and collapse
rows of local views to single edges, and we look at the graph induced by the
squares of $(C_1+ R_0)\cap (C_0+R_1)$ on the vertex set $T\cup S_{01}$. What the
second claim of Lemma~\ref{lem:stalled} tells us, is that the degree of any vertex of $T$ in
this subgraph is at least $\delta\Delta\frac{\eps}{4}$.

As in the proof of Lemma~\ref{lem:T}, we apply the expander mixing Lemma in the
double cover of $\Cay(G,A)$ to obtain,
\[
|T|\frac{\delta \eps \Delta}{4}\leq | E(S_{01},T)|\leq \Delta\frac{|S_{01}||T|}{|V_{01}|} +
2\sqrt{\Delta}\sqrt{|S_{01}||T|}.
\]
Applying the hypothesis $|S_{01}|\leq\delta\eps |V_{01}|/8$ we get
\[
|T|\frac{\delta \eps\Delta}{8}\leq 2\sqrt{\Delta}\sqrt{|S_{01}||T|}
\]
and the result follows.
\end{proof}

\subsubsection{Obtaining a contradiction}

\begin{proof}[Proof of Theorem~\ref{thm:main}]
The hypothesis of the theorem translates into
$|S_{ij}|\leq \frac{\alpha\eps\kappa}{4}|V_{00}|$, since we defined $\alpha = \delta^2 \eps^2/2^{10}$.
Therefore Lemma~\ref{lem:Se} holds, and so does 
Lemma~\ref{lem:T2}, since clearly $\frac 14\alpha\eps\kappa \leq \delta\eps/8$.

Without loss of generality, let us suppose $|S_1|\geq |S_0|$ (otherwise invert
their roles in the argument below) and $|S_{10}|\geq |S_{01}|$ (otherwise invert
their roles).

Because there are so few exceptional vertices in $S_{10}$ (by
Lemma~\ref{lem:Se}), the number
of ordinary vertices of $S_{10}$ is almost equal to $|S_{10}|$ namely it is at
least $a|S_{10}|$ for any constant $a<1$ and $\Delta$ large enough.

We obtain that either the number of ordinary rows or the number of ordinary
columns in (the $Q$-neighbourhoods of vertices of) $S_{10}$ is at least $a|S_{10}|/2$. 
Suppose without loss of generality that the number of ordinary columns is at
least $a|S_{10}|/2$.
Applying Lemma~\ref{lem:T2}, they must cluster among the $Q$-neighbourhoods of a
set $T\subset S_{00}$ of size
at most $\frac{256}{\delta^2\eps^2\Delta}|S_{01}|\leq
\frac{256}{\delta^2\eps^2\Delta}|S_{10}|$ (since we have supposed $|S_{01}|\leq
|S_{10}|$).
Therefore, the average number of non-zero columns of $C_0$ on the
$Q$-neighbourhoods of the vertices of this set
$T$ is at least $a\frac{\delta^2\eps^2\Delta}{512}$, which is close to twice the
minimum norm $\alpha\Delta$ of a local view of $R_i+C_j$ for an exceptional vertex.

Therefore, a constant proportion of vertices of $T$ must be exceptional vertices
of $S_{00}$. Now since the $Q$-neighbourhood of a vertex can host at most
$\Delta$ columns, we also have $|T|\geq \frac 1\Delta \frac a2|S_{10}|\geq \frac
a{4\Delta} |S_1|$ (since $|S_{10}|\geq |S_{01}|$), and since $|S_1|\geq |S_0|$
we have $|T|\geq \frac{a}{4\Delta}|S_0|$. Therefore, for large enough $\Delta$,
we have a contradiction with Lemma~\ref{lem:Se} which limits the number of
exceptional vertices of $S_{00}$ to not more than $O(1/\Delta^2)|S_0|$.
\end{proof}

\subsection{Proof of Theorem \ref{thm:sequential}}
\label{sub:thm8}
Without loss of generality we may suppose the error $\error$ to be a $\sigma_x$-type error. 

We need to guarantee that the upper bound on $|S_{ij}|$ required by
Theorem~\ref{thm:main} is satisfied throughout the decoding procedure, until we
reach a zero mismatch $\hat{Z}$. Recall that $S_{ij}$ is the set of active
vertices of $V_{ij}$
in a minimum decomposition of $\hat{Z}$. We will argue that $|S_{ij}|\leq
\|\hat{Z}\|$, so we track the evolution of $\|\hat{Z}\|$ during the mismatch
decomposition procedure.

\begin{lemma}
\label{lem:preproc}
If $Z$ is the mismatch vector that the preprocessing phase associates to an
error vector $\error$, then we have
\begin{equation}
\label{eqn:bound-Z}
|Z|\leq 4|\error|
\end{equation}
and 
\[
\|Z\|\leq \frac{4}{\kappa\Delta}|\error|.
\]
\end{lemma}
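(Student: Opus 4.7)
The plan is to exploit the robustness of the local dual tensor codes together with the minimality of the local decodings $\eps_v$, and to use the fact that each square belongs to exactly two vertices of $V_1 = V_{01} \cup V_{10}$.

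First I would record the key identity. For each $v \in V_1$, the local view $e_v = \error|_{Q(v)}$ has the same local syndrome as $\eps_v$, so $e_v + \eps_v$ lies in the dual tensor code $(C_A^\perp \otimes C_B^\perp)^\perp$. Writing $e_v + \eps_v = c_v + r_v$ with $c_v \in C_A \otimes \F_2^B$ and $r_v \in \F_2^A \otimes C_B$, chosen so that $\|c_v\| + \|r_v\|$ is minimal, the minimality of $\eps_v$ gives $|\eps_v| \leq |e_v|$, and therefore
\[
|c_v + r_v| \leq |e_v| + |\eps_v| \leq 2|e_v|.
\]
Since each square of $Q$ is incident to exactly one vertex of $V_{01}$ and one vertex of $V_{10}$, we have $\sum_{v\in V_1} e_v = 0$ in $\F_2$, and therefore
\[
Z \;=\; \sum_{v \in V_1} \eps_v \;=\; \sum_{v \in V_1} (c_v + r_v).
\]

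From this identity, the bound on $|Z|$ follows immediately: $|Z| \leq \sum_{v\in V_1} |c_v + r_v| \leq \sum_{v\in V_1} 2|e_v| = 4|\error|$, using once again that each square of $Q$ appears in exactly two local views $e_v$ with $v \in V_1$.

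For the bound on $\|Z\|$, I would group the contributions according to vertex type: $C_0 = \sum_{v\in V_{10}} c_v$, $C_1 = \sum_{v\in V_{01}} c_v$, $R_0 = \sum_{v\in V_{01}} r_v$, $R_1 = \sum_{v\in V_{10}} r_v$, which exhibits a valid decomposition of $Z$ as $C_0 + R_0 + C_1 + R_1$. By definition of the norm,
\[
\|Z\| \leq \|C_0\| + \|R_0\| + \|C_1\| + \|R_1\| \leq \sum_{v \in V_1} (\|c_v\| + \|r_v\|).
\]
Invoking $\kappa$-robustness of the local dual tensor code together with the minimality of the decomposition $c_v + r_v$ gives $\|c_v\| + \|r_v\| \leq \frac{1}{\kappa \Delta}|c_v + r_v| \leq \frac{2}{\kappa \Delta}|e_v|$, and summing over $v \in V_1$ yields $\|Z\| \leq \frac{4|\error|}{\kappa \Delta}$.

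There is no real obstacle here: the only subtle point is ensuring that the decomposition of $\eps_v + e_v$ as a dual tensor codeword, chosen minimally at each vertex, simultaneously (i) is available thanks to the fact that $\eps_v$ and $e_v$ have the same local syndrome, and (ii) can have its norm controlled via robustness. The rest is a direct counting argument using that each square of $Q$ belongs to exactly two vertices of $V_1$.
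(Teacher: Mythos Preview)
Your proof is correct and follows essentially the same approach as the paper: write $Z=\sum_{v\in V_1}(c_v+r_v)$ using $\sum_{v\in V_1}e_v=0$, bound $|c_v+r_v|\leq 2|e_v|$ by minimality of $\eps_v$, then invoke $\kappa$-robustness on the minimally chosen local decompositions to control $\|c_v\|+\|r_v\|$, and finally sum over $V_1$ using that each square appears in exactly two local views.
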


\begin{proof}
During the preprocessing phase, we have that
$Z=\sum_{v\in V_1}\eps_v=\sum_{v\in V_1}x_v$, 
where $e_v = \eps_v + x_v$ is the
projection of the error vector $\error$ on $Q(v)$ and where $x_v=c_v+r_v$ is the dual
tensor codeword that is the difference between $e_v$ and the decoder's initial
evaluation $\eps_v$ of the error. 
The minimality of $|\eps_v| = |e_v+ x_v|$ implies that $|\eps_v| \leq |e_v|$ and therefore that $|x_v| \leq 2 |e_v|$.
This gives
\[
|Z| \leq \sum_{v \in V_1} |x_v| \leq 2 \sum_{v \in V_{10}} |e_v| = 4
|\error|
\]
which proves the first point.
Writing $\|x_v\|:=\|c_v\|+\|r_v\|$ and applying robustness to $x_v$, we also have 
$\kappa\Delta\|x_v\|\leq |x_v|\leq 2|e_v|$, whence
\[
\|x_v\| \leq\frac{2}{\kappa\Delta}|e_v|.
\]
Summing over all vertices of $V_1$, we obtain
\[
\|Z\|\leq\sum_{v\in V_1}\|x_v\|\leq\sum_{v\in V_{01}}\|x_v\| + \sum_{v\in V_{10}}\|x_v\|
\leq \frac{2}{\kappa\Delta}\sum_{v\in V_{01}}|e_v| +
\frac{2}{\kappa\Delta}\sum_{v\in V_{10}}|e_v| =
\frac{4}{\kappa\Delta}|\error|
\]
since $|\error| = \sum_{v\in V_{01}}|e_v| =\sum_{v\in V_{10}}|e_v|$.
\end{proof}

Assume that after the $m^{\text{th}}$ round of sequential decoding, the decoder
has flipped successively the local dual tensor codewords $x_{1}, x_{2}, \ldots, x_{m}$.
We must have in particular
\[
(1-\eps)(|x_{1}|+|x_{2}|+\cdots |x_{m}|)\leq |Z|.
\]
Now, every time we modify $\hat{Z}$ by adding some $x_i$ to it, we have that
$\|\hat{Z}\|$ is increased by at most $\|x_i\|$. Robustness implies that
$\|x_i\|\leq \frac{1}{\kappa\Delta}|x_i|$ and we may therefore write
\begin{equation}
\label{eq:sumofnorms}
\|x_{1}\|+\|x_{2}\|+\cdots \|x_{m}\| \leq \frac{1}{\kappa\Delta(1-\eps)}|Z|.
\end{equation}
Applying Lemma~\ref{lem:preproc}
we therefore obtain that
\[
\|\hat{Z}\|\leq \|Z\| + \|x_{1}\|+\|x_{2}\|+\cdots \|x_{m}\| \leq
\frac{4}{\kappa\Delta}\left(1+\frac{1}{1-\eps}\right)|\error|\leq
\frac{8}{(1-\eps)\kappa\Delta}|\error|
\]
since $1+1/(1-\eps)\leq 2/(1-\eps)$. Therefore, if we impose the
condition
\[
|\error|\leq \frac{(1-\eps)\kappa\Delta}{8}\frac{1}{2^{12}}\delta^2\eps^3\kappa|V_{00}| =
\frac{1}{2^{11}}\frac{\eps^3}{16}\kappa^2\delta^2(1-\eps)\frac{n}{\Delta}
\]
we obtain that $|S_{ij}|\leq\|\hat{Z}\|$ must always be bounded from above by
$\frac{1}{2^{12}}\delta^2\eps^3\kappa|V_{00}|$ throughout the decoding
procedure, so that Theorem~\ref{thm:main} always applies and decoding can always
continue.

We also need to check that the output $\hat{\error}$ of the decoder is correct.
This will be the case provided that $|\error + \hat{\error}| < d_{\min} (\eQ)$.
Recall that 
\begin{equation}
\label{eq:e+ehat}
|\error + \hat{\error}| = |\sum_{v\in V_{10}}(e_v+\eps_v) + \hat{C}_0 + \hat{R}_1|
\end{equation}
and that $|\sum_{v\in V_{10}}(e_v+\eps_v)|=|\sum_{v\in V_{10}}x_v| \leq 2 |\error|$.
To evaluate $|\hat{C}_0 +\hat{R}_1|$, we make the remark that every local
dual tensor codeword $x_i$ that is used by an iteration of the sequential
decoder 
contributes at most $\|x_i\|$ non-zero row and column vectors to
$\hat{C}_0 +\hat{R}_1$. Therefore, 
\[
|\hat{C}_0+\hat{R}_1|\leq \Delta\sum_i\|x_i\|
\]
where the sum runs over all local increments used by the decoder. Applying
\eqref{eq:sumofnorms}, we obtain
\[
|\hat{C}_0+\hat{R}_1|\leq \frac{1}{\kappa(1-\eps)}|Z|.
\]
Writing $|Z|\leq 4|\error|$ (from \eqref{eqn:bound-Z}), we get 
\[
|\hat{C}_0+\hat{R}_1|\leq \frac {4}{\kappa(1-\eps)}|\error|.
\]
Since $\kappa <1$, we may write $2|\error|\leq \frac{4}{\kappa(1-\eps)}|\error|$, and
\eqref{eq:e+ehat} now gives us
\[
|\error + \hat{\error}|\leq \frac{4}{\kappa(1-\eps)}|\error|+|\hat{C}_0 +
\hat{R}_1|\leq\frac{8}{\kappa(1-\eps)}|\error|
\]
which is smaller than the minimum distance of the quantum code whenever
$|\error|\leq \frac{\kappa}{2^{11}}\kappa^2\delta^2 (1-\eps)\frac{n}{\Delta}$.
This concludes the proof of Theorem \ref{thm:sequential}.

\section{Parallel Decoding}
Without loss of generality, we again assume the error $\error$ to be a $\sigma_x$-type error.
The analysis of the parallel decoder rests upon the following lemma.

\begin{lemma}
\label{lem:oneround}
Let $\eps\in(0,1/6)$, and assume that the state of the current mismatch $Z$ is
such that the sequential decoder of parameter $\eps$
converges. Then after one round of parallel decoding, the weight of the mismatch
has been reduced by at least 
$\frac 14(\frac 14-\frac 32\eps)|Z|$.
\end{lemma}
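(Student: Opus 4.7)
The plan is to virtually run the sequential decoder of parameter $\eps$ on $Z$ to produce flips $(x_i,v_i)_{i=1}^M$ with $r_i := |Z^{(i-1)}|-|Z^{(i)}| \geq (1-\eps)|x_i|$ and $\sum_i r_i = |Z|$, and then to argue that one parallel round captures a constant fraction of this work. Grouping the sequential steps by the $V_{ij}$ containing $v_i$ and writing $R_{ij} := \sum_{i\in G_{ij}}r_i$, pigeonhole on the four $R_{ij}$ produces a dominant group with $R_{i^\star j^\star} \geq |Z|/4$. I may assume this dominant group is $V_{00}$: either by invoking the symmetry of the quadripartite construction, or by noting that if the substeps preceding $V_{i^\star j^\star}$ have not yet achieved the target reduction, their cumulative parallel flips have total weight at most twice that reduction (by the ``$|x_v|/2$'' validity condition), so $\hat{Z}$ stays bitwise close to $Z$ at the relevant substep and the argument below applies essentially unchanged.

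Within $G_{00}$ the flips $\{x_i\}$ have pairwise disjoint supports, since distinct vertices of $V_{00}$ have disjoint $Q$-neighbourhoods. The central step is to show that most $x_i$ remain \emph{valid parallel flips at $Z$}, i.e.\ satisfy $|Z|-|Z+x_i| \geq |x_i|/2$, even though the sequential decoder only verified the stronger condition $\geq (1-\eps)|x_i|$ at the later state $Z^{(i-1)}$. Translating both into overlap bounds, the sequential bound rewrites as $|Z^{(i-1)}\cap x_i| \geq (1-\eps/2)|x_i|$ while parallel validity at $Z$ requires only $|Z\cap x_i| \geq \tfrac{3}{4}|x_i|$. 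Since $Z$ and $Z^{(i-1)}$ agree outside $\supp(\sum_{j<i}x_j)$, and since the $G_{00}$-flips preceding $i$ have support disjoint from $x_i$, the flip $x_i$ remains parallel-valid at $Z$ whenever the ``interference'' $|\delta_i| := |\supp(x_i)\cap\supp(\sum_{j<i,\,j\notin G_{00}} x_j)|$ is at most $(\tfrac{1}{4}-\tfrac{\eps}{2})|x_i|$. The cumulative overlap budget $\sum_i |x_i| - |Z| \leq \eps|Z|/(1-\eps)$, coming from $Z=\sum_i x_i$ together with $\sum r_i = |Z|$, will be used to control $\sum_{i\in G_{00}}|\delta_i|$ and force the ``bad'' $i$'s (where the threshold fails) to contribute only a small fraction of $R_{00}$.

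For each good $i \in G_{00}$ the parallel decoder at $v_i$ selects some $\xi_{v_i}$ with $|\xi_{v_i}| \geq |x_i|$, by its max-$|x_v|$ rule, yielding reduction at $v_i$ at least $|\xi_{v_i}|/2 \geq |x_i|/2 \geq r_i/2$. Summing over the disjointly-supported good vertices gives a first-substep reduction of at least $\tfrac{1}{2}(R_{00} - \text{bad loss})$, which after tracking constants becomes $\tfrac{1}{4}(\tfrac{1}{4}-\tfrac{3}{2}\eps)|Z|$; this lower bounds the full round reduction. The main obstacle will be the quantitative conversion of the overlap budget $O(\eps|Z|)$ into a tight bound on $\sum_{i\in G_{00}}|\delta_i|$, and from there into a bound on the bad-$r_i$ mass within $R_{00}$; the factor $\tfrac{1}{4}$ in the final constant (rather than $\tfrac{1}{2}$) reflects the inherent loss in the ``$|\xi_v|/2$'' parallel validity guarantee, since the algorithm maximises $|x_v|$ rather than the actual reduction.
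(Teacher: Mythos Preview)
Your plan has a genuine gap at the step where you assert that the sequential flips $\{x_i\}_{i\in G_{00}}$ have pairwise disjoint supports. Disjointness of $Q$-neighbourhoods holds for \emph{distinct} vertices of $V_{00}$, but nothing prevents the sequential decoder from revisiting the same vertex $v\in V_{00}$ several times during its run: after intermediate flips at neighbouring vertices in $V_{01}$ or $V_{10}$, the local view at $v$ changes and a fresh $(1-\eps)$-valid flip may become available there. Hence two indices $i,k\in G_{00}$ can share the same vertex and have overlapping supports. This breaks your summation ``over the disjointly-supported good vertices'': the parallel decoder performs only \emph{one} flip $\xi_v$ at $v$, and you cannot credit it with $r_i/2+r_k/2$ simultaneously. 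Since $R_{00}$ is defined as the full sum $\sum_{i\in G_{00}} r_i$, the loss from collapsing repeated vertices is uncontrolled.

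A second difficulty is the ``quantitative conversion'' you yourself flag as the main obstacle. The overlap budget $\sum_i|x_i|-|Z|\leq \eps|Z|/(1-\eps)$ measures the \emph{net} cancellation in the XOR $\sum_i x_i=Z$; it does not directly bound $\sum_{i\in G_{00}}|\delta_i|$, which involves intersections of each $x_i$ with a running partial XOR. A single bit lying in $m$ of the $x_j$'s contributes roughly $m$ to the budget but can contribute up to $m-1$ to the $|\delta_i|$ sum, so the inequality you need does not follow without an additional argument.

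For comparison, the paper's proof avoids both issues by a different organisation. It does \emph{not} pigeonhole onto a single $V_{ij}$; instead it defines the disjoint refinement $x_i':=x_i\cap(Z\setminus\bigcup_{j<i}x_j)$, so that the $x_i'$ partition $Z$ regardless of which vertices the $x_i$ sit on. A ``good'' index is one with $|x_i'|\geq(1-\tfrac32\eps)|x_i|$, and a potential-function argument tracking the set $A_i$ of ``added'' bits (those in the current mismatch but not in $Z$) shows $\sum_{i\text{ good}}|x_i'|\geq|Z|/2$. Finally, rather than proving each good $x_i$ is parallel-valid at some specific state, the paper argues by contradiction: if the total set $u$ of bits flipped over all four substeps missed too much of $x_i'$, then at the moment $v$ was processed the local decoder would have preferred $z_v+x_i$ over its actual choice $z_v$, violating the max-$|x_v|$ rule. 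This sidesteps both the repeated-vertex problem and the need to convert the overlap budget into a per-type interference bound.
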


To obtain that the parallel decoder converges in a logarithmic number of steps,
it will remain to show that, provided the initial error vector is of
sufficiently small weight, the mismatch will remain in a decodable state by the
sequential decoder of parameter $\eps$ after an arbitrary number of parallel
decoding steps. The result will then follow from applying
Lemma~\ref{lem:oneround} iteratively.

\begin{proof}[Proof of Lemma~\ref{lem:oneround}]
Let $x_1,x_2,\ldots,x_i\ldots x_m$ be a sequence of local non-zero dual tensor codewords such
that, for every $i\geq 1$,
\begin{equation}
\label{eq:sequential}
|Z_{i-1}|-|Z_{i-1}+x_i|\geq (1-\eps)|x_i|
\end{equation}
where $Z_0=Z$ and
$Z_i=Z_{i-1}+x_i$. That the sequential decoder of parameter $\eps$ converges
means that whenever $x_1,\ldots ,x_i$ satisfy \eqref{eq:sequential} and
$Z_i\neq\varnothing$, then the sequence $x_1,\ldots ,x_i$ can be augmented by some
$x_{i+1}$ satisfying \eqref{eq:sequential}, until eventually $Z_m=\varnothing$.

It will be useful to keep in mind that the condition \eqref{eq:sequential} is
equivalent to the condition $|Z_{i-1}\cap x_i|\geq (1-\eps/2)|x_i|$.

Let us define $x_1'=x_1\cap Z$ and for $i\geq 2$,
$x_i'=x_i\cap (Z\setminus\bigcup_{j<i}x_j)$, so that the $x_i'$
are disjoint and partition $Z$. It may happen that some $x_i'$ are empty.

Let $G$ (good) be the subset of indices $1\leq i \leq m$ for which it holds that
\begin{equation}\label{eq:zi}
\left|x_i'\right| \geq \left(1-\frac
32\eps\right)|x_i|
\end{equation}
and let $B$ (bad) be the set of remaining indices. Let us first prove:

\noindent
\textbf{Claim 1.} We have
\[
|Z\cap \bigcup_{i\in G}x_i'| \geq |Z|/2.
\] 
Let us write the mismatch vector after addition of $x_i$ as $Z_i=Z_i'\cup A_i$,
where the union is disjoint and where we have set
\[
Z_i' = Z\setminus\bigcup_{j\leq i}x_i'.
\]
The situation is illustrated on Figure~\ref{fig:Z}.

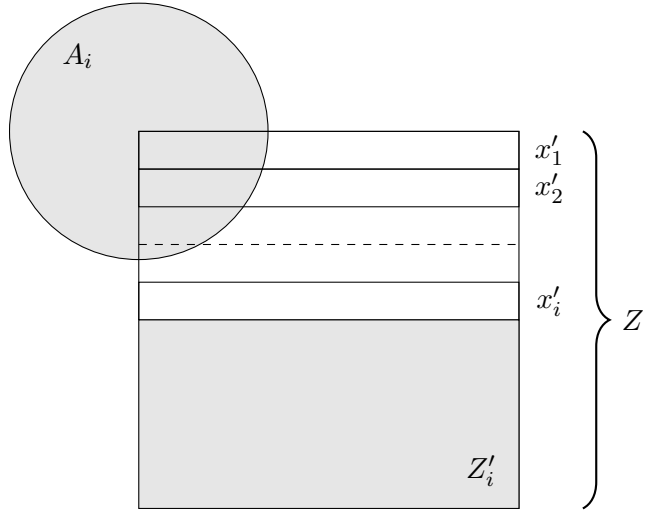
\begin{figure}[h]
\begin{center}
\begin{tikzpicture}
\filldraw[fill=black!10] (0,5) circle (1.7cm);
\draw (0,0) rectangle (5,5);
\draw (0,4.5) rectangle (5,5);
\draw (0,4) rectangle (5,4.5);
\draw[style=dashed] (0,3.5) -- (5,3.5);
\draw (0,2.5) rectangle (5,3);
\node (z1) at (5.4,4.75) {$x_1'$};
\node (z2) at (5.4,4.25) {$x_2'$};
\node (zi) at (5.4,2.75) {$x_i'$};
\node (Ai) at (-0.8,6) {$A_i$};
\filldraw[fill=black!10] (0,0) rectangle (5,2.5) {};
\node (Z) at (4.5,0.5) {$Z_i'$};
\draw [thick,decorate,decoration={brace,amplitude=10pt,mirror,raise=4pt},yshift=0pt]
(5.7,0) -- (5.7,5) node (x) [midway,xshift=8mm] {$Z$};
\end{tikzpicture}
\end{center}
\caption{The state $Z_i=Z_i'\cup A_i$ of the mismatch after $i$ steps (shaded): The supports of
$x_1',\ldots x_i'$ have been removed from the original mismatch $Z$, and an
additional set $A_i$ has been added.}
\label{fig:Z}
\end{figure}

Whenever $x_i$ is added to the current mismatch $Z_{i-1}$, $x_i'$ is removed
from the support of $Z_i'$, and the set of 'additional' bits/coordinates $A_{i-1}$ 
is modified to become $A_i$. We remark that for every $i$, the number of bits that are added to 
$A_{i-1}$, \textit{i.e.}\ $|A_i\setminus A_{i-1}|$, is at most
$\frac{\eps}{2}|x_i|$, since $x_i$ is an $\eps$-sequential decoder increment. 
Furthermore, for $i\in B$, the condition \eqref{eq:zi} implies that 
\[
|A_{i-1}\setminus A_i|\geq \eps|x_i|
\]
so that $|A_{i-1}|-|A_i| \geq \frac{\eps}{2}|x_i|$. Summarising, we have
\begin{align*}
|A_i|-|A_{i-1}| &\leq \phantom{-}\frac{\eps}{2}|x_i|\quad\text{for $i\in G$,}\\
|A_i|-|A_{i-1}| &\leq -\frac{\eps}{2}|x_i|\quad\text{for $i\in B$,}
\end{align*}
from which we obtain 
\[0 = |A_m| - |A_0| = \sum_{i=1}^m |A_i | - |A_{i-1}| \leq \frac{\eps}{2}
\Big( \sum_{i \in G} |x_i| - \sum_{i \in B} |x_i| \Big),\]
that is,
\[
\sum_{i\in B}|x_i|\leq\sum_{i\in G}|x_i|.
\]
From the definitions of the sets $B$ and $G$, we have therefore
\[
\sum_{i\in B}|x_i'|\leq \sum_{i\in B}(1-\frac 32\eps)|x_i|\leq \sum_{i\in G}(1-\frac
32\eps)|x_i|\leq \sum_{i\in G}|x_i'|
\]
which proves the claim since the $x_i'$ partition $Z$.

To finish the proof of the lemma, consider the vector $u$, equal to the {\em union} of all the bits that have been
flipped by the decoder after all four substeps of the parallel decoding step. Consider a local dual tensor codeword
$x_i$ from the above sequence $x_1,\ldots ,x_m$, for $i\in G$. We wish to prove that a constant fraction of its support is included in the support of $u$. The local vector
$x_i$ sits in the local view of some vertex $v$. 
Let $y=|u\cap
x_i'|$. We must have $|x_i'|-y<\frac 34|x_i|$. Indeed, suppose otherwise: let
$z_v$ be the local dual tensor codeword that the parallel local decoder at $v$ has
flipped when it was its turn to decode (with possibly $z_v=0$). Note that we
must have $z_v\subset u$ by definition of $u$. We have
\begin{equation}
\label{eq:yv}
|\hat{Z}|-|\hat{Z}+z_v|\geq |z_v|/2
\end{equation}
where $\hat{Z}$ is the local view of the
mismatch at $v$, at the moment when it is $v$'s turn to decode. Now since
$x_i'\setminus u$ has been left untouched by the decoder throughout the round of
parallel decoding, we must have $(x_i'\setminus u)\subset \hat{Z}$ and
$(x_i'\setminus u)\subset \hat{Z}+z_v$. Therefore,
it must be the case that if, just after $v$ has flipped $z_v$, we flip $x_i$,
we remove at least $|x_i'|-y\geq\frac 34|x_i|$ coordinates from the current
mismatch $\hat{Z}+z_v$
and therefore add at most $|x_i|/4$ coordinates to it. In other words,
\[
|\hat{Z}+z_v| -|\hat{Z}+z_v+x_i|\geq |x_i|/2.
\]
Adding to this inequality the inequality \eqref{eq:yv}, we get
\[
|\hat{Z}| -|\hat{Z}+z_v+x_i|\geq |x_i|/2+|z_v|/2\geq |x_i+z_v|/2.
\]
Since we have at least $|x_i'\setminus u|\geq\frac 34|x_i|$ coordinates of $x_i$
that are disjoint from $u$ and hence disjoint from $z_v$, we must have
$|x_i+z_v| > |z_v|$, which means that the local decoder at $v$ would have
preferred $x_i+z_v$ over $z_v$ when it was its turn to decode, since it is
instructed to maximise the Hamming weight of the local codeword it flips.
This establishes the contradiction and proves therefore
\[
|x_i'|-y<\frac 34|x_i|.
\]
This implies 
\[y>|x_i'|-\frac 34|x_i|\geq(1-\frac 32\eps)|x_i|-\frac 34|x_i|= (\frac 14-\frac
32\eps)|x_i|
\]
since $i\in G$,
and therefore, since the
$x_i'$ are disjoint,
\[
|u| \geq \left(\frac 14-\frac 32\eps\right)\sum_{i\in G}|x_i|.
\]
By writing 
$
\sum_{i\in G}|x_i|\geq \sum_{i\in G}|x_i'|
$
and applying Claim 1, we get that
\[
|u|\geq \left(\frac 14-\frac 32\eps\right)\frac 12|Z|.
\]
We conclude the proof by noticing that if the total number of bits that are
flipped by the decoder is $N$ (where a bit may be flipped multiple times,
everytime contributing to $N$), then the mismatch weight is decreased by at
least $N/2$,
by definition of the decoding criterion; so the claim follows since the total
number of bits flipped by the decoder is at least $|u|$. 
\end{proof}

\begin{lemma}
\label{lem:simu}
Let $\eps\in (0,1/2)$.
If the Hamming weight $|\error|$ is less than
\[
\frac{1}{2^{16}}\eps^3\kappa^2\delta^2\frac{n}{\Delta},
\]
Then {\em sequential} decoding with parameter $1/2$ is guaranteed to terminate.
Furthermore, at any stage during the mismatch decoding procedure, if the
sequential decoding parameter is switched from $1/2$ to $\eps$, then mismatch
decomposition is also guaranteed to terminate.
\end{lemma}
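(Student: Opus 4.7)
The plan is to reuse the norm-tracking argument from the proof of Theorem~\ref{thm:sequential}, extended so that the bound on $\|\hat{Z}\|$ holds uniformly regardless of whether the sequential decoder is currently using parameter $1/2$ or $\eps$. The crucial observation is that both parameters lie in $(0,1/2]$, so in every decoding step the inequality $(1-\eps_i)\geq 1/2$ holds. This lets a single telescoping bound on $\sum_i|x_i|$ carry through both phases of any mixed procedure.

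I will first invoke Lemma~\ref{lem:preproc} to bound the initial mismatch: $|Z|\leq 4|\error|$ and $\|Z\|\leq 4|\error|/(\kappa\Delta)$. Then, for any sequence of updates $x_1,\ldots,x_m$ performed with parameters $\eps_i\in\{1/2,\eps\}$, telescoping the inequality $|Z_{i-1}|-|Z_i|\geq (1-\eps_i)|x_i|\geq |x_i|/2$ yields $\sum_i|x_i|\leq 2|Z|\leq 8|\error|$. Robustness (applicable thanks to the minimality of $\|c_v\|+\|r_v\|$ enforced by the decoder) gives $\|x_i\|\leq |x_i|/(\kappa\Delta)$, so
\[
\|\hat{Z}\|\;\leq\;\|Z\|+\sum_{i=1}^m\|x_i\|\;\leq\;\frac{4|\error|}{\kappa\Delta}+\frac{8|\error|}{\kappa\Delta}\;=\;\frac{12|\error|}{\kappa\Delta}.
\]

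Using $|S_{ij}|\leq \|\hat{Z}\|$ for a minimum decomposition of $\hat{Z}$, together with $|V_{00}|=n/\Delta^2$, the hypothesis $|\error|<\frac{1}{2^{16}}\eps^3\kappa^2\delta^2\frac{n}{\Delta}$ translates into $|S_{ij}|\leq\frac{1}{2^{12}}\delta^2\eps^3\kappa|V_{00}|$, which is exactly the hypothesis of Theorem~\ref{thm:main} for parameter $\eps$. Hence after an arbitrary number of $1/2$-steps, switching to parameter $\eps$ still admits valid local updates, and the $\eps$-decoder drives $\hat{Z}$ to zero, proving the second claim. The first claim is obtained by plugging $\eps=1/2$ into Theorem~\ref{thm:main}, whose hypothesis becomes the weaker bound $|S_{ij}|\leq\frac{1}{2^{15}}\delta^2\kappa|V_{00}|$, which our stronger assumption certainly satisfies (since $\eps^3<1/8$).

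The only real care required is with the numerical constants: one must verify that $12\cdot 2^{12}\leq 2^{16}$ provides the slack for both claims and for the transition between them, which is immediate. No new conceptual ingredient beyond those developed for Theorems~\ref{thm:sequential} and~\ref{thm:main} is needed; the argument reuses Lemma~\ref{lem:preproc}, robustness, and the minimality of the local decompositions picked by the decoder.
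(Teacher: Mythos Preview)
Your proposal is correct and follows essentially the same approach as the paper: both arguments bound $\|\hat Z\|$ by $\|Z\|+\sum_i\|x_i\|\leq \frac{12}{\kappa\Delta}|\error|$ via Lemma~\ref{lem:preproc}, the telescoping inequality $\sum_i|x_i|\leq 2|Z|$, and robustness, then invoke Theorem~\ref{thm:main} with parameter $\eps$ (noting that an $\eps$-update is automatically a valid $1/2$-update since $\eps\leq 1/2$). Your slightly more general framing, allowing the parameters $\eps_i\in\{1/2,\eps\}$ to be mixed arbitrarily rather than a single switchover, is harmless and in fact makes the iteration cleaner.
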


What the conclusion of Lemma~\ref{lem:simu} means is that, if $Z=\hat{Z}_0$ is the original mismatch vector
after preprocessing, denoting by $\hat{Z}_\ell$ the mismatch vector after $\ell$
sequential decoding steps, \textit{i.e.}
\[
\hat{Z}_\ell=Z+x_1+\ldots + x_\ell,
\]
where $x_1,\ldots ,x_\ell$ is the sequence of local dual tensor codeword
increments, satisfying
\[
|\hat{Z_j}|-|\hat{Z_{j}}+x_{j+1}|\geq |x_{j+1}|/2
\]
for $j<\ell$, then it is possible to extend the sequences $x_1,\ldots ,x_\ell$ and
$\hat{Z}_0, \ldots, \hat{Z}_\ell$ to $x_1,\ldots ,x_i,\ldots x_m$ and
$\hat{Z}_0,\ldots, \hat{Z}_i, \ldots, \hat{Z}_m=0$, with 
\[
|\hat{Z_j}|-|\hat{Z_{j}}+x_{j+1}|\geq (1-\eps)|x_{j+1}|
\]
for $j=\ell\ldots m-1$.

\begin{proof}[Proof of Lemma~\ref{lem:simu}]
Assume that after a certain number of sequential decoding steps, the decoder
has flipped successively the local dual tensor codewords $x_{1}, x_{2}, \ldots,
x_{\ell}$.
Since for every codeword $x_j$ that is flipped the mismatch decreases by at
least
$|x_j|/2$,
we must have
\[
\frac 12(|x_{1}|+|x_{2}|+\cdots |x_{\ell}|)\leq |Z|.
\]
where $|Z|$ is the original mismatch vector computed after the preprocessing
phase.
Robustness implies that
$\|x_j\|\leq \frac{1}{\kappa\Delta}|x_j|$ and we may therefore write
\begin{equation}
\label{eq:sumofnorms2}
\|x_{1}\|+\|x_{2}\|+\cdots \|x_{\ell}\| \leq \frac{2}{\kappa\Delta}|Z|.
\end{equation}
Applying Lemma~\ref{lem:preproc}
we therefore obtain that
\[
\|\hat{Z}_\ell\|\leq \|Z\| + \|x_{1}\|+\|x_{2}\|+\cdots \|x_{\ell}\| \leq
\frac{4}{\kappa\Delta}\left(1+2\right)|\error|\leq
\frac{12}{\kappa\Delta}|\error|
\]
where $\hat{Z}_\ell$ is the current mismatch after $x_1,\ldots ,x_\ell$ have been
flipped, and we used that $\kappa < 1$.
Therefore, if we impose the
condition
\[
|\error|\leq \frac{1}{2^{16}}\eps^3\kappa^2\delta^2\frac{n}{\Delta}<
\frac{1}{2^{12}}\frac{\eps^3}{12}\kappa^2\delta^2\frac{n}{\Delta}=\frac{\kappa\Delta}{12}\frac{1}{2^{12}}\delta^2\eps^3\kappa|V_{00}|, 
\]
we obtain that
$\|\hat{Z}_\ell\|$ must be bounded from above by
$\frac{1}{2^{12}}\delta^2\eps^3\kappa|V_{00}|$. Since the set $S$ of active
vertices for $\|\hat{Z}_\ell\|$ in any one of the four types of vertices $V_{00},V_{01},V_{10},V_{11}$ 
must satisfy $|S|\leq |\hat{Z}_\ell |$, we have that Theorem~\ref{thm:main} applies
and there exists $x_{\ell+1}$ such that
$|\hat{Z}_\ell|-|\hat{Z}_\ell+x_{\ell+1}|\geq(1-\eps)|x_{\ell+1}|$.
Since $\eps \leq 1/2$,
we have that $x_{\ell+1}$ is also a valid choice for the
sequential decoder with parameter $1/2$, and we may therefore reapply the very
same argument iteratively to obtain the required sequence $x_{\ell+1},\ldots
,x_m$.
\end{proof}
\begin{theo}
\label{corol:parallel}
Fix $\eps \in(0,1/6)$.
If the Hamming weight $|\error|$ is less than
\[
\frac{1}{2^{12}}\min\Big(\frac{\eps^3}{16}, \kappa \Big)\delta^2\kappa^2\frac{n}{\Delta},
\]
then the parallel decoder returns a valid correction in logarithmic time. 
\end{theo}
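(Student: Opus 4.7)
The plan is to combine the two preceding lemmas. First, I would check that the error-weight hypothesis of the theorem implies the hypothesis of Lemma~\ref{lem:simu}: since $\frac{1}{2^{12}}\cdot\frac{\eps^3}{16}=\frac{\eps^3}{2^{16}}$, the $\min$ is at most $\frac{\eps^3}{2^{16}}\kappa^2\delta^2 \frac{n}{\Delta}$. By Lemma~\ref{lem:simu}, the sequential decoder with parameter $\eps$ is guaranteed to terminate starting from the initial mismatch $Z$; the same holds, as I argue next, from any intermediate mismatch produced during the parallel decoding procedure. This supplies the precondition required by Lemma~\ref{lem:oneround}.

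Next, I would iterate Lemma~\ref{lem:oneround}. The parallel decoder only flips codewords $x_v$ satisfying $|\hat{Z}| - |\hat{Z}+x_v| \geq |x_v|/2$, which are valid moves of a sequential decoder of parameter $1/2$. Consequently, the norm-tracking argument underlying Lemma~\ref{lem:simu} still controls $\|\hat{Z}\|$ after any number of parallel rounds, so the sequential decoder of parameter $\eps$ remains able to finish the decomposition from the current state. Lemma~\ref{lem:oneround} then implies that each full parallel round reduces $|\hat{Z}|$ by at least a factor $c := \frac{1}{4}(\frac{1}{4} - \frac{3}{2}\eps)$, strictly positive for $\eps \in (0,1/6)$. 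Starting from $|Z| \leq 4|\error|$ (Lemma~\ref{lem:preproc}), iterating gives $|\hat{Z}| < 1$---and hence $|\hat{Z}| = 0$---after $O(\log n)$ parallel rounds, establishing the logarithmic running time.

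It remains to check that the output $\hat{\error}$ is a valid correction, i.e., $|\error + \hat{\error}| < d_{\min}(\eQ)$. Following the end of the proof of Theorem~\ref{thm:sequential}, the total norm of updates applied by the decoder is bounded by $\frac{2}{\kappa\Delta}|Z|$ via \eqref{eq:sumofnorms2}, so $|\hat{C}_0 + \hat{R}_1| \leq \frac{2}{\kappa}|Z| \leq \frac{8}{\kappa}|\error|$, whence $|\error + \hat{\error}| \leq (2 + 8/\kappa)|\error| \leq \frac{10}{\kappa}|\error|$. The second term in the $\min$ of the hypothesis, namely $|\error| \leq \frac{\kappa}{2^{12}}\delta^2\kappa^2 \frac{n}{\Delta}$, combined with the distance lower bound $d_{\min} \geq \frac{\delta^2\kappa^2}{256\Delta} n$ from Theorem~\ref{thm:dist-main}, makes this smaller than $d_{\min}$: the ratio reduces to $10/2^{12} < 16/2^{12}$, which holds.

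The main obstacle is the reusability of Lemma~\ref{lem:simu}'s norm bound along the parallel decoder's trajectory rather than along a purely sequential one. Since each individual parallel update is a valid sequential move of parameter $1/2$, the robustness-based accounting of $\|\hat{Z}\|$ carries over without modification; this identification is the conceptual hinge, and once it is made, the rest of the argument is routine assembly of the preceding lemmas.
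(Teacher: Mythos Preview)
Your proposal is correct and follows essentially the same approach as the paper: you identify that every parallel update is a valid sequential move of parameter $1/2$, so Lemma~\ref{lem:simu} continues to apply after any number of parallel rounds, feeding the hypothesis of Lemma~\ref{lem:oneround}; and you reproduce the validity check via the bound $|\error+\hat{\error}|\leq \tfrac{10}{\kappa}|\error|$ compared against Theorem~\ref{thm:dist-main}. The paper's proof is organised identically, with the same ``conceptual hinge'' you flag.
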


\begin{proof}
We make the remark that a substep of parallel decoding has the same effect on
the mimatch as that of a sequence of successive increments by the sequential
decoder of parameter $1/2$. Therefore, after any number of rounds of parallel
decoding, we are in the same situation as if a sequential decoder of parameter
$1/2$ had been applied, and Lemma~\ref{lem:simu} applies, which in turn implies
that Lemma~\ref{lem:oneround} applies, so that the parallel decoder terminates
in a logarithmic number of rounds.

It remains to prove that the output $\hat{\error}$ of the decoder is such that
$|\error+\hat{\error}|$ is smaller than the minimum distance of the quantum
code. This argument is very close to that of the end of the proof of
Theorem~\ref{thm:sequential}.

Recall that 
\begin{equation}
|\error + \hat{\error}| = |\sum_{v\in V_{10}}(e_v+\eps_v) + \hat{C}_0 + \hat{R}_1|
\end{equation}
where $e_v$ is the initial error vector restricted to the local view at $v$, and
$\hat{C}_0, \hat{R}_1$ are the output of the decoder. We have $|\eps_v|\leq
|e_v|$ so that
\[
|\error + \hat{\error}| \leq 2|\error|+|\hat{C}_0 + \hat{R}_1|.
\]
To evaluate $|\hat{C}_0 +\hat{R}_1|$, we make the remark that every local
dual tensor codeword $x_i$ that is used by an iteration of the parallel 
decoder
contributes at most $\|x_i\|$ non-zero row and column vectors to
$\hat{C}_0 +\hat{R}_1$. Therefore, 
\[
|\hat{C}_0+\hat{R}_1|\leq \Delta\sum_i\|x_i\|
\]
where the sum runs over all increments applied by the decoder. Applying
robustness $\|x_i\|\leq\frac{1}{\kappa\Delta}|x_i|$ and
\[
\frac 12\sum_i|x_i|\leq |Z|
\]
where $Z$ is the initial mismatch vector,
we obtain
\[
|\hat{C}_0+\hat{R}_1|\leq \frac{2}{\kappa}|Z|.
\]
Writing $|Z|\leq 4|\error|$ from Lemma~\ref{lem:preproc}, we get 
\[
|\hat{C}_0+\hat{R}_1|\leq \frac {8}{\kappa}|\error|.
\]
Since $\kappa <1$, we may write $2|\error|\leq \frac{2}{\kappa}|\error|$, and
now we have
\[
|\error + \hat{\error}|\leq \frac{2}{\kappa}|\error|+|\hat{C}_0 +
\hat{R}_1|\leq\frac{10}{\kappa}|\error|
\]
which is smaller than the minimum distance of the quantum code whenever
$|\error|\leq \frac{\kappa^3\delta^2}{2^{12}} \frac{n}{\Delta} < \frac{\kappa^3\delta^2}{10.2^{8}} \frac{n}{\Delta}$.
\end{proof}

\color{black}

\section*{Acknowledgements}
We thank Ting-Chun Lin for bringing up an issue with the parallelisation of the decoder in a previous version of this manuscript.
We acknowledge support from the Plan France 2030 through the project ANR-22-PETQ-0006. GZ also acknowledges support from the ANR through the project QUDATA, ANR-18-CE47-0010.

%


\end{document}